\theoremstyle{plain}
\newtheorem{theorem}{Theorem}
\newtheorem{lemma}[theorem]{Lemma}
\theoremstyle{definition}
\theoremstyle{remark}
\newtheorem{remark}{Remark}
\newtheoremstyle{cited}%
  {3pt}
  {3pt}
  {\itshape}
  {}
  {\bfseries}
  {.}
  {.5em}
  {\thmname{#1} \thmnumber{#2} \thmnote{\normalfont#3}}
\theoremstyle{cited}
\newcommand{\eps}{\epsilon}
\newcommand{\norm}[1]{\lVert#1\rVert}
\newcommand{\vertiii}[1]{\left\lvert\kern-0.25ex\left\lvert\kern-0.25ex\left\lvert #1 \right\rvert\kern-0.25ex\right\rvert\kern-0.25ex\right\rvert}
\newcommand{\bbm}{\begin{bmatrix}}
\newcommand{\ebm}{\end{bmatrix}}
\newcommand{\Rm}{\mathbb R}
\newcommand{\cS}{\mathcal{S}}
\newcommand{\flatL}{\mathcal{L}_0}
\newcommand{\flatP}{\mathcal{P}}
\newcommand{\cF}{\mathcal{F}}
\newcommand{\cL}{\mathcal{L}}
\newcommand{\cM}{\mathcal{M}}
\newcommand{\cP}{\mathcal{P}}
\newcommand{\cR}{R}
\newcommand{\cT}{\mathcal{T}}
\newcommand{\bxi}{B}
\newcommand{\psikg}{u}
\newcommand{\usn}{u_{s,0}}
\newcommand{\usd}{u_{s,D}}
\newcommand{\Rvar}{\cR_{\text{in}}\;}
\newcommand{\rhovar}{\tau}
\newcommand{\cobmat}{{\rm V}}
\newcommand{\gt}{\chi}
\newcommand{\ldiff}{\Delta}
\newcommand{\mmat}{{\rm M}}
\newcommand{\tmu}{\tilde{\mu}}
\newcommand{\cSflat}{\mathcal{S}_\omega^0}
\newcommand{\cK}{\mathcal{K}}
\newcommand{\dc}{\eta}
\newcommand{\opn}{\mathcal{N}}
\journal{Applied and Computational Harmonic Analysis}
\begin{document}

\begin{frontmatter}



\title{Integral formulation of Dirac singular waveguides}

\author[inst1]{Guillaume Bal}

\affiliation[inst1]{organization={Departments of Statistics and Mathematics and CCAM},
            addressline={University of Chicago}, 
            city={Chicago},
            state={IL},
            postcode={60637}, 
            country={USA}}

\author[inst2]{Jeremy Hoskins}

\affiliation[inst2]{organization={Departments of Statistics and CCAM},
            addressline={University of Chicago}, 
            city={Chicago},
            state={IL},
            postcode={60637}, 
            country={USA}}

\author[inst3]{Solomon Quinn}

\affiliation[inst3]{organization={School of Mathematics},
            addressline={University of Minnesota}, 
            city={Minneapolis},
            state={MN},
            postcode={55455}, 
            country={USA}}

\author[inst4]{Manas Rachh}
\affiliation[inst4]{organization={Center for Computational Mathematics},
            addressline={Flatiron Institute}, 
            city={New York},
            state={NY},
            postcode={10010}, 
            country={USA}}

\begin{abstract}
This paper concerns a boundary integral formulation for the two-dimensional massive Dirac equation. The mass term is assumed to jump across a one-dimensional interface, which models a transition between two insulating materials. This jump induces surface waves that propagate outward along the interface but decay exponentially in the transverse direction. After providing a derivation of our integral equation, we prove that it has a unique solution for almost all choices of parameters using holomorphic perturbation theory. We then extend these results to a Dirac equation with two interfaces. Finally, we implement a fast numerical method for solving our boundary integral equations and present several numerical examples of solutions and scattering effects.
\end{abstract}



\begin{keyword}
Dirac equation \sep Topological insulators \sep boundary integral formulation \sep fast algorithms
\end{keyword}

\end{frontmatter}


\section{Introduction}
The two-dimensional time-harmonic Dirac equation 
$$-i  \begin{pmatrix}0&1 \\ 1& 0 \end{pmatrix}\partial_{x_1}\begin{pmatrix}\psi_1 \\ \psi_2 \end{pmatrix} - i  \begin{pmatrix}0&-i \\ i& 0 \end{pmatrix}\partial_{x_2}\begin{pmatrix}\psi_1 \\ \psi_2 \end{pmatrix}  = E \begin{pmatrix}\psi_1 \\ \psi_2 \end{pmatrix} $$
arises in electronic band theory, where it is commonly used to model materials in which two energy bands intersect conically at a point. It forms an important simplified model in many applications, particularly in graphene, phononics, photonic crystals, and plasmonics \cite{BH,fefferman2012honeycomb,helsing2021dirac}. When a {\it mass} term of the form
$$ m \begin{pmatrix}
    1 & 0 \\ 0 &-1
\end{pmatrix}$$
is added, the energy bands no longer intersect, and for energies $E$ between $m$ and $-m$ the solution decays exponentially away from a source; for this range of energies the material is insulating. The term ``mass'' used here is historical and refers to the ``effective mass'' used in solid states physics to simplify band structures by modeling it as a free particle with that mass. 

If, on the other hand, the mass jumps across an interface from a negative to a positive value, then edge modes can form which propagate along the interface and decay exponentially away from the interface. Remarkably, these modes travel solely in one direction, and this unidirectional transport is robust to perturbations. Physically, this corresponds to two insulators with different ``masses'' being brought together at an interface. The asymmetric transport observed at such interfaces then affords a topological interpretation \cite{BH,Witten}. A physical principle called a bulk-edge correspondence relates a current observable modeling this asymmetric transport to the difference of mass terms in the subdomains $\Omega_1$ and $\Omega_2$. For analyses of the bulk-edge correspondence in various settings, we refer the reader to, e.g., \cite{SB-2000,Elbau,GP,PS,Drouot,bal2022topological, bal2023topological,QB} in the mathematical literature and, e.g., \cite{hatsugai1993edge,Volovik,EG,fukui2012bulk} in the physical literature. For explicit calculations of invariants associated to Dirac equations, see also \cite{2}. A thorough analysis of the Dirac system of equations is presented in \cite{thaller2013dirac}.


This setup can be described by the following set of partial differential equations (PDEs)
\begin{align}\label{eq:Dirac}
\begin{cases}
-i \sigma_1 \partial_x \psi(x) -i \sigma_2 \partial_y \psi (x)+m \sigma_3 \psi(x) - E \psi = f_2(x), \quad   & x \in \Omega_2,\\
-i \sigma_1 \partial_x \psi(x) -i \sigma_2 \partial_y \psi (x)-m \sigma_3 \psi(x) - E \psi = f_1(x), \quad   & x \in \Omega_1,\\
\lim_{y\to x} \psi(y) = \lim_{z \to x} \psi(z) \quad & y \in \Omega_2, z \in \Omega_1, x \in \Gamma,
\end{cases}
\end{align}
where 
\begin{align}\label{eq:Pauli}
\sigma_1 = \begin{pmatrix}
    0 &1 \\ 1& 0
\end{pmatrix}, \quad 
\sigma_2 = \begin{pmatrix}
    0 &-i \\ i & 0 
\end{pmatrix}, \quad {\rm and} \quad \sigma_3 = \begin{pmatrix}
    1 & 0 \\ 0 &-1
\end{pmatrix},
\end{align}
are the Pauli spin matrices. In addition, the domains $\Omega_{1}$ and $\Omega_2$ denote the support of the first and second insulators, respectively, with corresponding masses $-m$ and $m$. Finally, the real scalar quantity $E$ will be referred to as the energy, and the complex vector valued functions $f_1$ and $f_2$ are source terms in $\Omega_1$ and $\Omega_2$, respectively. In the remainder of the paper we will always assume that $|E|^2 < m^2$, that $\overline{\Omega_1 \cup \Omega_2}=\Rm^2$ is the entire plane, and that $\Omega_1$ and $\Omega_2$ meet along an interface $\Gamma=\partial\Omega_1=\partial\Omega_2$. Additionally, we assume that $\Gamma$ is a smooth simple curve which is {\it asymptotically flat} (we refer the reader to Section \ref{subsec:detailed} for a more precise definition). 

In this paper we present a novel boundary integral equation (BIE) formulation of the above PDE system, and prove bounded invertibility under certain natural conditions on the mass $m$ and the interface $\Gamma.$ The approach generalizes the one introduced in \cite{bal2022integral} for a scalar analog of \eqref{eq:Dirac}, and is based on introducing an auxiliary variable involving the application of a certain one-dimensional integral operator. In addition to its favorable analytical properties, this BIE can be easily solved via the combination of high order discretization methods and fast algorithms. 

Surface waves and surface-localized motion are also present in a number of other contexts, particularly in the study of electromagnetic properties of systems in which the ratio of permittivities approach the negative real axis, see \cite{raether1988surface,tzarouchis2018light} and the references therein, as well as \cite{helsing2018spectra,helsing2020extended,helsing2021dirac} for numerical methods. Additionally, the technique employed here is similar to other surface-wave preconditioners, referred to as on-surface radiation conditions, which have been applied to solving high-frequency scattering problems in  electromagnetics, acoustics, and elasticity, see~\cite{darbas2013combining,antoine2008advances,antoine2005alternative,antoine2006improved,chaillat2015approximate,chaillat2014new,kriegsmann1987new} and the references therein. 
These on-surface radiation conditions are generally used to improve the performance of iterative solvers in the high-frequency regime and complicated geometries, as opposed to this work wherein the preconditioner generates surface waves intrinsic to the governing equations.

For Dirac equations with smooth coefficients, volume integral equation-based approaches have also been developed \cite{bal2023asymmetric}, particularly in the evaluation of spectral properties of Dirac operators with volumetric perturbations. In the time domain, after taking a suitable inverse Fourier transform, the propagation of wavepackets along interfaces in Dirac models has been extensively studied, see, e.g., \cite{bal2023edge,bal2023magnetic,drouot2022topological,bal2023semiclassical}.

\medskip

The remainder of the paper is structured as follows. After introducing mathematical preliminaries on Dirac equations and layer potential operators, we construct a boundary integral formulation for \eqref{eq:Dirac} and analyze it in the case of a flat interface $\Gamma$ separating $\Omega_1$ from $\Omega_2$ in section \ref{sec:prelim}. Our main results on the boundary integral formulation for general interfaces $\Gamma$ and generic energies $E$ are stated in section \ref{sec:analytical}. These results are also generalized to the case of two interfaces separating three subdomains with constant mass terms. Several illustrations of the theory and accuracy of our approach are demonstrated in section \ref{sec:numerical}. Potential extensions of our results are briefly described in section \ref{sec:extensions} while the proofs of our main results are given in the Appendix.

\section{Mathematical preliminaries}\label{sec:prelim}
\subsection{Detailed formulation of the problem}\label{subsec:detailed}
In this section we give a precise formulation of the problem and introduce assumptions on the data and interface required by the analysis. Suppose we are given a smooth simple curve $\Gamma$ separating the plane into a lower region $\Omega_1$ and an upper region $\Omega_2.$ Let $\gamma:\mathbb{R}\to \mathbb{R}^2$ be an arclength parameterization. Moreover, with $\hat{n}(t)$ the normal vector to $\Gamma$ at $\gamma(t)\in\mathbb{R}^2$ pointing in the direction of $\Omega_2$, 
we assume that $(\gamma'(t),\hat n(t))$ has positive orientation. For concreteness, we additionally assume that $\gamma \in C^\infty(\mathbb{R};\mathbb{R}^2),$
\begin{align}\label{eq:beta}
|\gamma'(t)| =1, \qquad |\gamma^{(j)}(t)| \le C_j e^{-\dc |t|},\, j=2,3,4,\dots 
\end{align}
for some positive real constants $C_j$ and $\dc,$ and
\begin{align}
\lim_{t \to \infty} |\gamma(\pm t)| = \infty, \qquad \lim_{t\to \infty } |\gamma(t) - \gamma(-t)| =\infty. \label{eq:gammainfty}
\end{align}
\begin{remark}
    The assumption that $\gamma$ is smooth is purely for ease of exposition. The results in this paper can be extended easily to interfaces that are only $C^3$ with decay only in the second and third derivative of $\gamma.$ Only the regularity result, Lemma \ref{lem_boot}, would need to be adjusted accordingly.
\end{remark}
The fact that $\gamma$ is one-to-one (along with assumption \eqref{eq:gammainfty} above) 
implies the existence of some $c>0$ such that
\begin{align}\label{eq:c}
    \frac{|\gamma (t) -\gamma (s)|}{|t-s|} \ge c, \qquad s,t \in \mathbb{R}.
\end{align}
Let $m,E \in \mathbb{R}$ such that $0 < |E| < |m|$. Suppose $f_j \in \mathcal{C}^\infty_c (\Omega_j;\mathbb{C}^2)$ for $j=1,2$.
In this paper, we are concerned with solving
\begin{align}\label{eq:PDE}
\begin{split}
    (-i \partial_{x_1} \sigma_1- i \partial_{x_2} \sigma_2 + m \sigma_3 - E) u (x) = f_2 (x), \qquad &x \in \Omega_2\\
    (-i \partial_{x_1} \sigma_1- i \partial_{x_2} \sigma_2 - m \sigma_3 - E) u (x) = f_1 (x), \qquad &x \in \Omega_1\\
    \lim_{y\to x} u(y) = \lim_{z \to x} u(z), \qquad & y \in \Omega_2, z \in \Omega_1, x \in \Gamma
\end{split}
\end{align}
for $u$. 
We supplement \eqref{eq:PDE} with appropriate radiation conditions at infinity, namely that
\begin{align}\label{eq:out0}
\begin{split}
    \lim_{t \rightarrow \pm \infty} (\pm\partial_t - iE) u (\gamma (t) + r \hat{n} (t))&=(0,0), \qquad r \ne 0,\\
    \lim_{d(x,\Gamma) \rightarrow \infty} u (x) &= (0,0).
\end{split}
\end{align}
As we will see below, these radiation conditions are necessary to make our problem well posed. In words, these constraints state that $u$ must propagate \emph{outward} along $\Gamma$ (with frequency $E$) and decay away from $\Gamma$.

\subsection{Connection with the time-harmonic Klein-Gordon equation}\label{subsec:KG}
If we define $$H := -i \partial_{x_1} \sigma_3+ i \partial_{x_2} \sigma_2 + \tilde{m} \sigma_1,$$ where $\tilde{m} = \tilde{m} (x) = (-1)^j m$ whenever $x \in \Omega_j$, then, by a unitary transformation, \eqref{eq:PDE} reduces to $(H-E) u = f_1 + f_2$, where (with some abuse of notation) we have redefined $f_j \leftarrow U^* f_j$ and $u \leftarrow U^* u$, with $U := (\sigma_1 + \sigma_3)/\sqrt{2}$.
Applying $H+E$ to both sides, we obtain $(H^2 - E^2) u = \tilde{f}$, for some $\tilde{f} \in \mathcal{C}^\infty_c (\Omega_1 \cup \Omega_2; \mathbb{C}^2)$.
By the anti-commutation relations of the Pauli matrices, $\{\sigma_i, \sigma_j\} = 2 \delta_{ij},$ it follows that $(-\Delta+\omega^2) u = \tilde{f}$ in $\Omega_1 \cup \Omega_2$, where $\omega := \sqrt{m^2 - E^2}$. Moreover, derivatives of $u$ must satisfy
\begin{align*}
    [[(\hat{n} \cdot \nabla) u]](\gamma (t)) = 2m (\hat{n} \cdot \tilde{\sigma}) u (\gamma (t)),
\end{align*}
where $[[g]]$ denotes the jump of $g$ across $\Gamma$, $\hat n=\hat n(t)$, and $\tilde{\sigma}:= (\sigma_2, \sigma_3)$. 
When $\Gamma = \{x_2 = 0\}$ is a flat interface, we thus obtain the following decoupled system, 
\begin{align*}
    (-\Delta+\omega^2) u^{(j)} &= \tilde{f}^{(j)}, \qquad \qquad \qquad \qquad \Omega_1 \cup \Omega_2,\\
    [[u^{(j)}_{x_2}]](t,0) &= (-1)^{j-1} 2m u^{(j)} (t,0), \qquad t\in \mathbb{R},
\end{align*}
where $j=1,2$.

When $j=2$, we recover exactly the time-harmonic Klein-Gordon equation analyzed in \cite{bal2022integral}. The $j=1$ component corresponds to a solution that decays rapidly in all directions.

\subsection{A naive integral equation formulation}\label{subsec:naive}
A standard way to represent the solution of \eqref{eq:PDE} is by the decomposition $u = u_i + u_s$, where
\begin{align}\label{eq:ui}
    u_i (x) := \begin{cases}
        \int_{\Omega_2} G_2 (x,y) f_2 (y) {\rm d} y, & x \in \Omega_2\\
        \int_{\Omega_1} G_1 (x,y) f_1 (y) {\rm d} y, & x \in \Omega_1
    \end{cases}
\end{align}
is known as the ``incident field'' and 
\begin{align}\label{eq:us}
    u_s (x) := \begin{cases}
        \int_\mathbb{R} G_2 (x, \gamma (t)) \mu (t) {\rm d} t, & x \in \Omega_2\\
        \int_\mathbb{R} G_1 (x, \gamma (t)) \mu (t) {\rm d} t, & x \in \Omega_1
    \end{cases}
\end{align}
the ``scattered field,''
for some density $\mu : \mathbb{R} \rightarrow \mathbb{C}^2$.
Here, 
\begin{align}\label{eq:gf}
    G_j (x,y) := \frac{1}{2\pi} (-i \partial_{x_1} \sigma_1 -i \partial_{x_2} \sigma_2 + (-1)^j m \sigma_3 + E) K_0 (\omega |x-y|)
\end{align}
is the 
Green's function for the constant mass Dirac equation in $\Omega_j$, with $\omega := \sqrt{m^2 - E^2}$ and $K_0$ the modified Bessel function of the second kind. Note that since $\omega>0$, the Green's function decays exponentially as $|x-y|$ increases. 

This ansatz guarantees that any function $u$ decomposed in such a way solves the Dirac equation in $\Omega_1$ and $\Omega_2$. What remains is to choose the {\it density} $\mu$ such that $u$ is continuous across $\Gamma$, i.e., the jump in $u_s$ across $\Gamma$ must exactly cancel the jump in $u_i$. This jump in $u_s$ can be calculated analytically using standard properties of the above Green's functions. 
It is convenient to write $u_s = \usn + \usd$, where
\begin{align*}
    \usn (x) := \begin{cases}
        \frac{1}{2\pi} \int_{\mathbb{R}} (m \sigma_3 + E) K_0 (\omega |x-\gamma (t')|) \mu (t') {\rm d}t', & x \in \Omega_2\\
        \frac{1}{2\pi} \int_{\mathbb{R}} (-m \sigma_3 + E) K_0 (\omega |x-\gamma (t')|) \mu (t') {\rm d}t', & x \in \Omega_1
    \end{cases} 
\end{align*}
and
\begin{align*}
    \usd (x) := \frac{1}{2\pi} \int_{\mathbb{R}} (-i \partial_{x_1} \sigma_1 -i \partial_{x_2}\sigma_2) K_0 (\omega |x-\gamma (t')|) \mu (t') {\rm d} t', \qquad x \in \Omega_1 \cup \Omega_2.
\end{align*}

We define the \emph{single layer potential} $S_\omega$ by 
\begin{align}\label{eq:SL}
S_\omega [\mu] (x) := \frac{1}{2\pi}\int_{\mathbb{R}} K_0 (\omega |x-\gamma (t')|) \mu (t') {\rm d} t'.
\end{align}
It is well known that the function $x \mapsto S_\omega [\mu] (x)$ is continuous on all of $\mathbb{R}^2$, hence
the jump of 
$\usn$ across $\gamma (t)$ is
\begin{align*}
    \lim_{\substack{x \rightarrow \gamma (t) \\ x \in \Omega_2}} \usn (x) - \lim_{\substack{x \rightarrow \gamma (t) \\ x \in \Omega_1}} \usn (x)
    = 2m \sigma_3 \cS_\omega [\mu] (t),
\end{align*}
where the operator $\cS_\omega$ given by
\begin{align}\label{eq:cS}
    \cS_\omega [\mu] (t) := 
    \frac{1}{2\pi}\int_{\mathbb{R}} K_0 (\omega |\gamma (t)-\gamma (t')|) \mu (t') {\rm d} t' \,,
\end{align}
is the restriction of $S_{\omega}$ to the boundary. 
Indeed, the transition of the $\sigma_3$ term from $-m$ in $\Omega_1$ to $m$ in $\Omega_2$ is solely responsible for the discontinuity in $\usn$. 

For $\usd$, we use that the tangential and normal derivatives of the single layer potential \eqref{eq:SL} satisfy
\begin{align}\label{eq:tangradS}
    \lim_{s\to 0^+} \gamma'(t) \cdot (\nabla S_\omega[\rho](t+ s\,\hat{n}(t)) - \nabla S_\omega[\rho](t- s\,\hat{n}(t))) = 0
\end{align}
and
\begin{align}\label{eq:jumpgradS}
    \lim_{s\to 0^+} \hat{n}(t) \cdot \nabla S_\omega[\rho](t\pm s\,\hat{n}(t)) = \mp \frac{1}{2}\rho(t) +\frac{1}{2\pi}\int_\mathbb{R} \hat{n}(t) \cdot \nabla K_0 (\omega |\gamma(t)-\gamma(t')|) \,\rho(t')\, {\rm d}t',
\end{align}
where $\hat{n} (t)$ is the unit vector normal to $\Gamma$ at the point $\gamma (t)$;
see, e.g. \cite[Lemmas 3.3 and 3.5]{holzmann2019boundary}.
Therefore,
\begin{align*}
    \lim_{\substack{x \rightarrow \gamma (t) \\ x \in \Omega_2}} \usd (x) - \lim_{\substack{x \rightarrow \gamma (t) \\ x \in \Omega_1}} \usd (x)
    =i (\hat{n}_1 (t) \sigma_1 + \hat{n}_2 (t) \sigma_2) \mu (t).
\end{align*}
That is, the jump in $\usd$ results only from the component of $\frac{1}{2\pi}\nabla K_0$ normal to $\Gamma$, with the normal derivative producing a jump of $-\mu$.

Combining the jumps of $\usn$ and $\usd$, it follows that
\begin{align*}
    [[u_s]](\gamma (t))=
    2m \sigma_3 \cS_\omega [\mu](t) + i (\hat{n}(t) \cdot \sigma) \mu (t),
\end{align*}
where we use the shorthand $\sigma := (\sigma_1, \sigma_2)$ so that $\hat{n}(t) \cdot \sigma = \hat{n}_1 (t) \sigma_1 + \hat{n}_2 (t) \sigma_2$.
Therefore,
the density $\mu$ must satisfy
\begin{align*}
    2m \sigma_3 \cS_\omega [\mu](t) &+ i (\hat{n}(t) \cdot \sigma) \mu (t)=-[[u_i]] (\gamma (t)). 
\end{align*}
Multiplying both sides by $-i \hat{n} (t) \cdot \sigma$ and using the anti-commutation property of the Pauli matrices, $\sigma_i \sigma_j + \sigma_j \sigma_i = 2 \delta_{ij},$ we obtain
\begin{align}\label{eq:int0}
    (1 - 2m i (\hat{n}(t) \cdot \sigma) \sigma_3 \cS_\omega) [\mu](t)=i (\hat{n}(t) \cdot \sigma) [[u_i]] (\gamma (t)).
\end{align}
It will be convenient to solve an integral equation whose left-hand side is invariant to rotations of $\Gamma$. To this end, we define
\begin{align}\label{eq:cobmat}
\cobmat := \frac{1}{\sqrt{2}}\begin{pmatrix}
    1 & in_1 + n_2\\
    -in_1 + n_2 & -1
\end{pmatrix}.
\end{align}
We see that $\cobmat$ is a unitary matrix that in general depends on $t,$ the position along the interface. 
Moreover, it can be verified from standard properties of the Pauli matrices that 
\begin{align}\label{eq:vnv}
\cobmat^* (\hat{n} \cdot \sigma) \sigma_3 \cobmat = i\sigma_3
\end{align}
is independent of $\Gamma$.
We thus set
$\rhovar := \cobmat^* \mu$ and multiply both sides of \eqref{eq:int0} by $\cobmat^*$ to obtain
\begin{align}\label{eq:int}
    \cL [\rhovar] (t) := (1 - 2m \cobmat^* (i\hat{n} \cdot \sigma) \sigma_3 \cS_\omega \cobmat) [\rhovar] (t) = i \cobmat^* \hat{n} \cdot \sigma [[u_i]] (\gamma (t)).
\end{align}
The above defines an integral equation for $\rhovar$ with the desired rotation-invariance property and whose solution would yield a function $u = u_i + u_s$ that is continuous in $\mathbb{R}^2$.
However, as we will show 
in \eqref{eq:a1} below, the operator 
$\cL$ has absolutely continuous spectrum that passes through zero in the case of a flat interface. This means 
$\cL$ is in general not invertible on $L^2 (\mathbb{R}; \mathbb{C}^2)$, 
so that a naive discretization of the integral equation as stated here
would not yield a viable means to solve the Dirac equation \eqref{eq:PDE}.

The reason for the presence of such continuous spectrum is that modes are allowed to propagate along the interface $\Gamma$. As for any such spectral problems, outgoing boundary conditions need to be imposed on \eqref{eq:PDE} in order to uniquely characterize the propagating solution.

\subsection{Analysis of the flat case}\label{subsec:flat}
In this section, we analyze the integral equation \eqref{eq:int} in the case of a flat interface. This allows us to {\em define} an appropriate notion of outgoing conditions.
By standard Fourier techniques, we show that the integral operator on the left-hand side has an absolutely continuous spectrum containing zero. We then propose a remedy involving an integral operator 
that implements the appropriate outgoing radiation conditions \eqref{eq:out0}
and yields a well-posed boundary integral formulation of the Dirac equation \eqref{eq:PDE}.

Suppose that the interface parameterized by $\gamma (t) = (t,0)$.
We first observe that $\cobmat$ is independent of $t$ 
and therefore commutes with $\cS_\omega$. It then follows from \eqref{eq:vnv} that $$\cL = 1 - 2m \cobmat^* (i\hat{n} \cdot \sigma) \sigma_3 \cobmat \cS_\omega = 1+2m \sigma_3 \cS_\omega.$$
We then have the simplification
\begin{align*}
    \cS_\omega [\mu] (t) =
    \frac{1}{2\pi}\int_{\mathbb{R}} K_0 (\omega |t-t'|) \mu (t') {\rm d} t',
\end{align*}
which leads to
\begin{align}\label{eq:FcS}
    \mathcal{F}_{t \rightarrow \xi} \{\cS_\omega [\mu] (t)\} =
    \frac{1}{2 \sqrt{\xi^2+\omega^2}} \tilde{\mu} (\xi),
\end{align}
where $\tilde{\mu}$ denotes the Fourier transform of $\mu$,  see~\cite{abramowitz}, for example.
It follows that
\begin{align}\label{eq:a1}
    \mathcal{F}_{t \rightarrow \xi} \{\cL[\rhovar] (t)\} = \Big(1 + \frac{m}{\sqrt{\xi^2+m^2-E^2}} \sigma_3\Big) \tilde{\rhovar} (\xi) =: a_1(\xi) \tilde{\rhovar} (\xi).
\end{align}
Since 
$a_1$ has continuous strictly monotonic (for $\xi>0$ and $\xi<0$) components and one of them vanishes when $\xi = \pm E$,
the operator $\cL$ has (purely absolutely) continuous spectrum containing zero and therefore cannot be inverted in $L^2 (\mathbb{R}; \mathbb{C}^2)$.

To resolve this issue, we define the operator
\begin{align}\label{eq:R}
    \cR \rho (t) := \frac{1}{2iE} \int_{\Rm} e^{iE|t-t'|} \rho (t'){\rm d}t', \qquad \rho \in L^2 (\mathbb{R}),
\end{align}
as the {\em outgoing} inverse of the one-dimensional Helmholtz operator
\begin{align}\label{eq:1dgf}
    (\partial^2_t + E^2) \cR [\rho] (t) = \rho (t), \qquad \rho \in L^2 (\mathbb{R}).
\end{align}
We then set
$\mmat:=\mmat_1$ if $m<0$ and $\mmat:=\mmat_2$ if $m>0$, where
\begin{align}\label{eqn:mdef}
\mmat_1 = \begin{pmatrix}
        1 & 0\\
        0 & 0
    \end{pmatrix}
\quad {\rm and}\quad 
\mmat_2 = \begin{pmatrix}
        0 & 0\\
        0 & 1
\end{pmatrix}.
\end{align}
Define 
$\cP := 1 + 2im^2 \mmat \cR$, so that
\begin{align}\label{eq:cP}
    \cP \rho (t) &= \rho (t) + \mmat \frac{m^2}E \int_{\Rm} e^{iE|t-t'|} \rho (t'){\rm d}t'
\end{align}
for $\rho \in L^2 (\mathbb{R}; \mathbb{C}^2)$.

As was the case for $\cL$, the operator 
$\cR$ acts as a convolution and so is 
easily analyzed in the Fourier domain.
A direct calculation yields
\begin{align*}
    \cF_{t \rightarrow \xi} \Big\{\frac{1}{2iE}e^{iE|t|}\Big\} = \frac{1}{2iE}\lim_{\eps \downarrow 0}\cF_{t \rightarrow \xi} \Big\{e^{i(E+i\eps)|t|}\Big\}=-\frac{1}{\xi^2-E^2}
\end{align*}
(which could of course have also been obtained directly from \eqref{eq:1dgf}), and
hence
\begin{align*}
    \mathcal{F}_{t \rightarrow \xi} \{\cR[\rho] (t)\} = -\frac{1}{\xi^2-E^2}\tilde{\rho} (\xi).
\end{align*}
We conclude that
$
    \cF_{t \rightarrow \xi} \{\cL \cP [\rho] (t)\} = a_1 (\xi) a_2 (\xi) \tilde{\rho} (\xi),
$
where $a_1$ is defined by \eqref{eq:a1} and
\begin{align}\label{eq:a2}
    a_2 (\xi) := 1 - \mmat \frac{2i m^2}{\xi^2-E^2}.
\end{align}
It can easily be verified that $a(\xi) := a_1 (\xi) a_2 (\xi)$ defines an analytic (matrix-valued) function whose eigenvalues are bounded and bounded away from zero, with
\begin{align*}
    a^{-1} (\xi) =
        (1-\mmat) \frac{\sqrt{\xi^2+\omega^2}}{\sqrt{\xi^2+\omega^2} + |m|} +\mmat \Big(1+\frac{|m|}{\sqrt{\xi^2+\omega^2}} \Big) \Big( 1 + \frac{(2i+1)m^2}{\xi^2-E^2-2im^2} \Big).
\end{align*}
Therefore, $\cL \cP$ is invertible on $L^2 (\mathbb{R}; \mathbb{C}^2)$, so that the integral equation
\begin{align}\label{eq:solverhoflat}
    \cL\cP [\rho] (t) =i \cobmat^* \hat{n} \cdot \sigma [[u_i]] (\gamma (t))
\end{align}
has a unique solution $\rho$.
We can then set $\rhovar := \cP \rho$ to obtain a solution of \eqref{eq:int}.
A solution of \eqref{eq:PDE} is then obtained by setting $\mu := \cobmat \rhovar$ and $u := u_i + u_s$, where $u_i$ and $u_s$ are defined by \eqref{eq:ui} and \eqref{eq:us}, with the latter dependent on $\mu$.

With Theorem \ref{thm:usol} below, we will establish that the above solution $u$ also satisfies the radiation conditions \eqref{eq:out0}. 
If $\cR$ were to be replaced by 
\begin{align*}
    \Rvar \rho (t) :=\frac{-1}{2iE} \int_{\Rm} e^{-iE|t-t'|} \rho (t'){\rm d}t', \qquad \rho \in L^2 (\mathbb{R}),
\end{align*}
then the same procedure would yield a different (incoming) solution of \eqref{eq:PDE}; namely, one that satisfies the adjoint boundary condition 
\begin{align*}
    \lim_{t \rightarrow \pm \infty} (\pm\partial_t + iE) u (\gamma (t) + r \hat{n} (t))&=0, \qquad r \ne 0.
\end{align*}

\section{Analytical results}\label{sec:analytical}

\subsection{The boundary integral formulation}\label{subsec:bie}
In Section \ref{subsec:flat}, we derived a well-posed boundary integral formulation \eqref{eq:solverhoflat} of the Dirac equation \eqref{eq:PDE} in the case of a flat interface. 
Appealing to this derivation, we now define the boundary integral formulation of \eqref{eq:PDE} for the general (non-flat interface) case by \eqref{eq:solverhoflat} as well.
That is, with $\cL$ and $\cP$ respectively given by \eqref{eq:int} and \eqref{eq:cP}, our boundary integral formulation of \eqref{eq:PDE} is to find a density $\rho \in L^2 (\mathbb{R}; \mathbb{C}^2)$ such that 
\begin{align}\label{eq:solverho}
    \cL\cP [\rho] (t) =i \cobmat^* \hat{n} \cdot \sigma [[u_i]] (\gamma (t)).
\end{align}
Recall the definitions \eqref{eq:ui} and \eqref{eq:cobmat} of $u_i$ and $\cobmat$, and that $\hat{n} \cdot \sigma := n_1 \sigma_1 + n_2 \sigma_2$, where the $\sigma_j$ are the Pauli matrices \eqref{eq:Pauli} and $(n_1, n_2) = (n_1 (t), n_2(t)) = \hat{n} (t)$ is the unit vector normal to $\Gamma$ at $\gamma (t)$ pointing towards $\Omega_2$.

Once $\rho$ is obtained from \eqref{eq:solverho}, a solution $u$ of \eqref{eq:PDE} is then constructed from $\rho$, as 
will be made rigorous with Theorem \ref{thm:usol} below.
We first establish the general solvability of the integral equation \eqref{eq:solverho} with Theorems \ref{thm:invE} and \ref{thm:invL}.
These theorems 
also guarantee an exponential decay of the solution $\rho$, which will be quantified by the following weighted $L^2$ space.

For $\alpha\in \mathbb{R}$, define $w_\alpha (t) := e^{\alpha|t|}$ and $L^2_\alpha := \{\rho \in L^2 (\mathbb{R}) : w_\alpha \rho \in L^2(\mathbb{R})\}$.
\begin{theorem}\label{thm:invE}
Fix $m\ne 0$.
For any $\eps>0$ and
$\alpha>0$ sufficiently small (depending on $\eps$),
the integral equation \eqref{eq:solverho} admits a unique solution $\rho \in L^2_\alpha (\mathbb{R}; \mathbb{C}^2)$
for all but a finite number of $E \in [-|m|+\eps, -\eps] \cup [\eps, |m|-\eps]$.
\end{theorem}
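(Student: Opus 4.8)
The plan is to prove Theorem \ref{thm:invE} via holomorphic (analytic Fredholm) perturbation theory, treating the energy $E$ as a complex variable. The starting point is the flat-case analysis of Section \ref{subsec:flat}, which shows that $\cL\cP$ is boundedly invertible on $L^2(\mathbb{R};\mathbb{C}^2)$ when $\Gamma$ is flat. First I would decompose the operator $\cL\cP$ (for a general asymptotically flat interface) as $\cL\cP = \cA_0 + \cK$, where $\cA_0$ is the flat-interface operator (acting as a Fourier multiplier with symbol $a(\xi) = a_1(\xi)a_2(\xi)$, which is invertible by the explicit formula for $a^{-1}(\xi)$ given in the excerpt) and $\cK$ collects the corrections coming from the curvature of $\Gamma$. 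The decay assumptions \eqref{eq:beta} on $\gamma^{(j)}$ force the kernel of $\cK$ to decay exponentially off the diagonal in the $t$-direction, so $\cK$ is compact on $L^2$ — indeed it should be compact from $L^2$ to the weighted space $L^2_{\alpha}$ for $\alpha$ small compared to the decay rate $\dc$. Composing with $\cA_0^{-1}$, solvability of \eqref{eq:solverho} in $L^2$ becomes the invertibility of $1 + \cA_0^{-1}\cK$, an identity-plus-compact operator.

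Next I would promote this to an analytic family. The operators $\cS_\omega$, $\cR$, $\mmat$, $\cobmat$, and hence $\cL$, $\cP$, $\cA_0$, $\cK$ all depend on $E$; for $E$ in a complex neighborhood of the real segments $[-|m|+\eps,-\eps]\cup[\eps,|m|-\eps]$, with $\omega=\sqrt{m^2-E^2}$ chosen with positive real part, these dependencies are holomorphic (the Bessel kernel $K_0(\omega|\cdot|)$ is entire in $\omega^2$ away from the origin, and $e^{iE|t-t'|}$ is entire in $E$). So $E \mapsto 1 + \cA_0^{-1}(E)\cK(E)$ is a holomorphic family of Fredholm operators of index zero on a connected open set. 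By the analytic Fredholm theorem, either this family is nowhere invertible on that set, or it is invertible off a discrete (hence, on the compact real segment, finite) set of $E$. To rule out the first alternative I would exhibit one value of $E$ at which $1+\cA_0^{-1}\cK$ is invertible — most cleanly by taking $|E|\downarrow 0$ or $|E|\uparrow|m|$ (equivalently a limit in the complex parameter) where the relevant norms of $\cK$ or of $\cA_0^{-1}$ become small enough that a Neumann series converges; this uses the quantitative bounds on $a^{-1}(\xi)$ and on the curvature kernel. Finally, the exponential decay of $\rho$ and the strengthening to $L^2_\alpha$ follows by running the same Fredholm argument in the weighted space: conjugating by $w_\alpha$ shifts the Fourier contour, $\cA_0$ remains invertible for $\alpha$ small (the symbol $a(\xi)$ has no zeros in a strip, since its zeros lie at $\xi=\pm E$ but are cancelled by $a_2$), and $\cK$ remains compact and holomorphic there provided $\alpha < \dc$ (and $\alpha$ small relative to $\eps$, to keep $\cA_0^{-1}$ bounded uniformly as $E$ ranges over the segment bounded away from $0$ and $|m|$).

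The main obstacle I expect is the two ends of the argument that are easy to state but delicate to execute: (i) establishing that the correction operator $\cK$ is genuinely compact, with kernel bounds strong enough to survive multiplication by the weight $w_\alpha$ — this requires carefully extracting the difference between $|\gamma(t)-\gamma(t')|$ and $|t-t'|$ and its derivatives, using \eqref{eq:c} for the lower bound and \eqref{eq:beta} for the decay, and controlling the near-diagonal singularity of $\nabla K_0$; and (ii) producing the single invertible point to defeat the "nowhere invertible" alternative of the analytic Fredholm theorem. For (ii) the cleanest route is a smallness/Neumann-series estimate in a limiting regime, but one must check that the limit stays inside the domain of holomorphy and that the bound is uniform enough — this is where the $\eps$-dependence of the admissible $\alpha$ in the statement comes from. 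A secondary bookkeeping point is checking that $\cA_0^{-1}$, as a multiplier, maps into the weighted space correctly and that all the compositions remain holomorphic in $E$ across the strip; these are routine but must be done with the explicit symbols in hand.
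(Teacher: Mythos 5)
Your overall framework — writing $\cL\cP$ as the flat operator plus a compact correction, invoking analytic Fredholm theory in a complex neighborhood of the real segment, and exhibiting one invertible $E$ — is the right skeleton and matches the structure of the paper's proof. The decomposition you propose is precisely what the paper does with $\cM := (\flatL\cP)^{-1}(\cL - \flatL)\cP$, and the compactness claim for the curvature correction is indeed proved (as a Hilbert--Schmidt estimate on $L^2_\alpha$) in Lemma \ref{lemma:MHS}.

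The genuine gap is in step (ii), the production of a single invertible $E$. You propose a Neumann-series/smallness argument in the limit $|E|\downarrow 0$ or $|E|\uparrow|m|$, but neither endpoint is available. As $E\to 0$, the preconditioner $\cP = 1 + 2im^2\mmat\cR$ diverges because $\cR$ carries a factor $1/(2iE)$ and the kernel $e^{iE|t-t'|}$ loses all decay, so $\cK=(\cL-\flatL)\cP$ does not shrink. As $|E|\to|m|$, one has $\omega\to 0$, the Bessel kernel $K_0(\omega|\cdot|)$ stops decaying, the curvature correction spreads out rather than localizes, and the Hilbert--Schmidt bound (which scales like powers of $\omega$, cf.\ the kernel estimate \eqref{eq:qsharp}) degrades rather than improves. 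So there is no smallness regime to exploit while keeping $m$ fixed; that mechanism is what drives Theorem \ref{thm:invL} (where $m$ and $E$ are scaled jointly to infinity, so $\omega\to\infty$ and the kernels localize), but it is not available here.

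The missing idea is to move $E$ into the upper half-plane rather than to the endpoints. For $\Re E$ in the admissible segment and $\Im E>0$, the paper shows injectivity of $1+\cM$ by a uniqueness argument for the underlying boundary value problem: with $\Im E>0$, the preconditioner $\cP$ remains bounded, the density $\mu = \cobmat\cP\rho$ lies in $L^2_\alpha$, and the scattered field $u_s$ lies in $H^1(\mathbb{R}^2;\mathbb{C}^2)$. Squaring the Dirac operator and integrating by parts (Lemma \ref{lemma:kgh10}) shows $u_s\equiv 0$ because $\omega^2$ acquires a nonzero imaginary part, and Lemma \ref{lemma:mu0} then forces $\mu\equiv 0$ and hence $\rho\equiv 0$. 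Injectivity plus Fredholmness gives invertibility on an open set with $\Im E>0$, and Kato's analytic perturbation theory then yields failure of invertibility at only finitely many real $E$ in the compact segment. This energy-estimate route is the essential ingredient your proposal does not supply, and it is not interchangeable with a Neumann-series bound.
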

\begin{theorem}\label{thm:invL}
Fix $m_0, E_0 \in \mathbb{R}$ such that $0 < |E_0| < |m_0|$, and set 
$m=\lambda m_0$ and $E = \lambda E_0$ for $\lambda \in \mathbb{R}$.
Then for any $\alpha > 0$ sufficiently small,
the integral equation \eqref{eq:solverho} admits a unique solution $\rho \in L^2_\alpha(\mathbb{R}; \mathbb{C}^2)$
for all but a finite number of $\lambda \in [1,\infty)$.
\end{theorem}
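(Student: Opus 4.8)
The plan is to prove Theorem~\ref{thm:invL} by holomorphic perturbation theory in the parameter $\lambda$, using the flat-case analysis of Section~\ref{subsec:flat} as the anchor. First I would set up the right functional-analytic framework: work in the weighted space $L^2_\alpha(\mathbb{R};\mathbb{C}^2)$ with $\alpha>0$ small, and show that the operator $\cL\cP$ extends to a bounded operator on $L^2_\alpha$ that depends analytically on $\lambda$ in a neighborhood of $[1,\infty)$ (viewed inside $\mathbb{C}$). The analyticity in $\lambda$ should be transparent: $m=\lambda m_0$ and $E=\lambda E_0$ enter the kernels $K_0(\omega|\gamma(t)-\gamma(t')|)$ (with $\omega=\lambda\sqrt{m_0^2-E_0^2}$) and $e^{iE|t-t'|}$ through entire functions of $\lambda$, so $\cL$ and $\cP$ are operator-valued holomorphic maps; the weight $e^{\alpha|t|}$ on both sides is absorbed because the kernels decay exponentially in $|t-t'|$ at a rate $\omega\cdot c$ (using \eqref{eq:c}), uniformly for $\lambda$ in compact sets, so for $\alpha<\omega c$ the conjugated kernels $e^{\alpha|t|}\cdot(\text{kernel})\cdot e^{-\alpha|t'|}$ remain in the Hilbert--Schmidt / bounded regime. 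The asymptotic flatness \eqref{eq:beta} is what lets us compare the non-flat $\cS_\omega$ to the flat convolution operator: the difference $\cS_\omega-\cS_\omega^{\mathrm{flat}}$ has a kernel supported essentially near $t,t'$ bounded and decaying exponentially, hence is compact on $L^2_\alpha$, and likewise the change-of-basis matrix $\cobmat(t)$ differs from its flat value $\cobmat_\infty$ by an exponentially decaying (in $|t|$) multiplier.

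Next I would establish invertibility of $\cL\cP$ for \emph{one} value of $\lambda$ — or more precisely, that $\cL\cP$ is a holomorphic family of Fredholm operators of index zero on $L^2_\alpha$. The Fredholm property follows by writing $\cL\cP = \cL_\infty\cP_\infty + (\text{compact})$, where $\cL_\infty\cP_\infty$ is the flat-interface operator analyzed in Section~\ref{subsec:flat}: on $L^2$ it is invertible (its Fourier symbol $a_1 a_2$ has eigenvalues bounded and bounded away from zero), and I would need the companion fact that on the weighted space $L^2_\alpha$ it remains invertible for $\alpha$ small. The latter is a contour-shift argument: conjugating a convolution operator by $e^{\alpha|t|}$ corresponds (up to compact error from the nonsmoothness of $|t|$ at the origin) to evaluating the symbol on the shifted line $\mathrm{Im}\,\xi = \pm\alpha$, and since $a_1a_2$ extends holomorphically to a strip and stays invertible there (the poles of $a_2$ at $\xi=\pm E$ and the branch points of $a_1$ at $\xi=\pm i\omega$ stay away from a thin enough strip, \emph{and} — crucially — the outgoing choice $\cR$ places the relevant pole on the correct side so that the weight picks up decay rather than growth), invertibility on $L^2_\alpha$ persists. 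Granting the flat operator is invertible on $L^2_\alpha$ and the correction is compact, $\cL\cP$ is Fredholm of index zero, and it is a holomorphic family in $\lambda$.

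With a holomorphic family of index-zero Fredholm operators in hand, I would invoke the analytic Fredholm theorem (e.g.\ Steinberg's theorem): either $\cL\cP$ is \emph{nowhere} invertible on the connected parameter set, or it is invertible off a discrete set with no accumulation point in the interior. To exclude the first alternative and to upgrade "discrete" to "finite on $[1,\infty)$", I would first exhibit a single $\lambda_\star$ where $\cL\cP$ is invertible — the natural candidate is to let $\lambda\to\infty$ and show, via a rescaling/semiclassical argument, that the large-$\lambda$ operator is dominated by its flat, invertible part with vanishingly small correction, so $\cL\cP$ is invertible for all sufficiently large $\lambda$; this both rules out the "nowhere invertible" case and confines the non-invertible set to a bounded subinterval of $[1,\infty)$, where discreteness then forces finiteness. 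Finally, uniqueness of the solution $\rho\in L^2_\alpha$ when $\lambda$ avoids the bad set is immediate from invertibility, and the exponential decay of $\rho$ is exactly the statement $\rho\in L^2_\alpha$.

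The main obstacle I anticipate is the weighted-space invertibility of the flat operator together with the compactness of the non-flat correction on $L^2_\alpha$ — i.e.\ making precise that conjugation by $e^{\alpha|t|}$ only perturbs the flat convolution symbol by a compact operator and shifts its spectral curve into the strip where $a_1a_2$ stays invertible, with the \emph{sign} of the allowed shift dictated by the outgoing radiation condition built into $\cR$. Getting the direction of the contour shift right (so that $\alpha>0$ yields a \emph{bounded} conjugated operator, not an unbounded one) is the delicate point and is where the choice of $\cR$ over $\cR_{\mathrm{in}}$ is essential; once that is settled, the remainder is a fairly standard application of Steinberg's analytic Fredholm theorem plus the large-$\lambda$ limit.
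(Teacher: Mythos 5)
Your proposal follows the same strategy as the paper's proof: decompose $\cL\cP$ as the flat-interface operator $\flatL\flatP$ plus a correction, establish that $\flatL\flatP$ is invertible on $L^2_\alpha$ (via the Fourier symbol $a_1a_2$ being analytic and non-degenerate in a strip), show the correction is compact on $L^2_\alpha$ with norm vanishing as $\lambda\to\infty$, and conclude via analytic Fredholm theory (the paper cites Kato's Theorem VII.1.9, which plays the same role as Steinberg's theorem here). The only cosmetic difference is that the paper reorganizes the decomposition as $1+\cM$ with $\cM := (\flatL\flatP)^{-1}(\cL-\flatL)\cP$ and proves a quantitative Hilbert--Schmidt bound $\norm{\cM}_2 \lesssim \lambda^{-1/2+\delta}$ (Lemmas \ref{lemma:invLP} and \ref{lemma:MHS}), in place of your more informal rescaling/semiclassical argument for large $\lambda$; the logic is otherwise identical.
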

The constructed solution of \eqref{eq:solverho} in $L^2_\alpha$ is in fact smoother when the interface $\gamma(t)$ is smooth. Under assumption \eqref{eq:beta}, we have the following regularity result:
\begin{theorem}\label{thm:regul}
    Let $f_j\in \mathcal{C}^\infty_c (\Omega_j)$ for $j=1,2$. Then the solution $\rho$ constructed in Theorem \ref{thm:invL} satisfies $\rho \in (L^2_\alpha\cap C^\infty)(\mathbb{R};\mathbb{C}^2)$.
\end{theorem}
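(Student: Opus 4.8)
The plan is to establish smoothness by a bootstrapping argument on the integral equation \eqref{eq:solverho}, exploiting that the right-hand side is smooth (since $f_j \in \mathcal{C}^\infty_c(\Omega_j)$ and $\gamma \in C^\infty$, the incident field $u_i$ and hence $[[u_i]]\circ\gamma$ is smooth with exponentially decaying derivatives) and that the kernels appearing in $\cL$ and $\cP$ are smooth away from the diagonal with only a mild logarithmic singularity on it. Concretely, I would first rewrite \eqref{eq:solverho} as $\rho = \cP^{-1}(\cL\cP)^{-1}[g]$ where $g := i\cobmat^*\hat n\cdot\sigma\, [[u_i]]\circ\gamma$, but it is cleaner to keep the equation in the form $\cL\cP[\rho] = g$ and argue directly. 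Since $\rho \in L^2_\alpha$ is already known from Theorem~\ref{thm:invL}, the task reduces to propagating this integrability into derivative bounds.

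The core of the argument is a single regularity lemma — presumably Lemma~\ref{lem_boot}, which the remark after \eqref{eq:gammainfty} explicitly flags as the only place smoothness of $\gamma$ is used — asserting that the operators $\cS_\omega$ (equivalently the layer-potential operator restricted to $\Gamma$) and $\cR$ map $H^k_\alpha \to H^{k+1}_\alpha$, or at least $H^k_\alpha \to H^k_\alpha$ with a gain of regularity after composing with the differential prefactors. For $\cS_\omega$: the kernel $K_0(\omega|\gamma(t)-\gamma(t')|)$ behaves like $-\log|t-t'|$ near the diagonal (using \eqref{eq:c} to compare $|\gamma(t)-\gamma(t')|$ with $|t-t'|$) plus a smooth remainder, so $\cS_\omega$ is essentially a classical weakly singular operator of order $-1$; its commutators with multiplication by $w_\alpha$ and with $\partial_t$ are controlled because the off-diagonal part of the kernel and all derivatives of $\gamma^{(j)}$, $j\ge 2$, decay like $e^{-\dc|t|}$ by \eqref{eq:beta}. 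For $\cR$: the kernel $e^{iE|t-t'|}/(2iE)$ is bounded and smooth off the diagonal, and $(\partial_t^2+E^2)\cR = \mathrm{Id}$ by \eqref{eq:1dgf}, so $\cR$ gains two derivatives; this immediately gives $\cP\rho \in H^{k+2}_{\mathrm{loc}}$ whenever $\rho\in H^k_\alpha$. Then I would iterate: starting from $\rho\in L^2_\alpha = H^0_\alpha$, conclude $\cP\rho\in H^2_\alpha$ (the weight survives because $\mmat\cR$ commutes with $w_\alpha$ up to a smoothing remainder — here one uses that $E$ is real so $|e^{iE|t-t'|}|=1$ and $w_\alpha(t)/w_\alpha(t') = e^{\alpha(|t|-|t'|)} \le e^{\alpha|t-t'|}$, which is integrable against the decaying off-diagonal kernel for $\alpha$ small), apply $\cS_\omega$ to gain one more derivative, use the differential prefactor $\cobmat^*(i\hat n\cdot\sigma)\sigma_3\cdots\cobmat$ (whose entries are smooth with exponentially decaying derivatives), and read off from $\cL\cP[\rho]=g$ that $\rho = g + 2m\,\cobmat^*(i\hat n\cdot\sigma)\sigma_3\cS_\omega\cobmat\,\cP[\rho]$, whence $\rho\in H^{1}_\alpha$; repeat to get $\rho\in H^k_\alpha$ for every $k$, and Sobolev embedding on $\mathbb{R}$ gives $\rho\in C^\infty$.

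The main obstacle — and the step I expect to require the most care — is propagating the exponential weight $w_\alpha$ through the bootstrap uniformly in the number of iterations, i.e. showing that each application of $\cS_\omega$ or $\cR$ maps $H^k_\alpha \to H^{k+1}_\alpha$ (not merely $H^{k+1}_{\mathrm{loc}}$) without shrinking $\alpha$ at each step. This is a commutator estimate: one must bound $[w_\alpha, \cS_\omega]$ and $[w_\alpha, \partial_t^\ell \cS_\omega]$ in operator norm on $L^2$, which works precisely because the kernel and its $t,t'$-derivatives decay like $e^{-\dc|t|}e^{-\dc|t'|}$ off-diagonal (from \eqref{eq:beta}) and like $\log$ on-diagonal, so the weight ratio $e^{\alpha(|t|-|t'|)}$ is dominated provided $\alpha$ is fixed small relative to $\dc$ — crucially, a \emph{single} choice of $\alpha$ works for all $k$. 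A secondary technical point is handling the on-diagonal logarithmic singularity of $\cS_\omega$ when taking many derivatives: one splits $K_0(\omega|\gamma(t)-\gamma(t')|) = \phi(t,t')\log|t-t'| + \psi(t,t')$ with $\phi,\psi$ smooth, and tracks how $\partial_t$ acting on $\log|t-t'|$ produces principal-value operators of Calderón–Zygmund type, which are bounded on $L^2$ and on $L^2_\alpha$ by the same Schur-test / commutator reasoning; since $\gamma$ is globally $C^\infty$ this introduces no new difficulty beyond the standard one.
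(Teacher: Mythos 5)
The broad strategy — bootstrap on the integral equation using gain-of-regularity of $\cS_\omega$ and $\cR$, plus commutator estimates to carry the exponential weight — is the right instinct, and your analysis of $\cS_\omega$ is essentially fine. But there is a genuine gap in how you propagate the weight through $\cR$, and it cannot be patched without introducing the decomposition the paper actually uses.

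The problem is your parenthetical claim that ``$\mmat\cR$ commutes with $w_\alpha$ up to a smoothing remainder,'' justified by saying $e^{\alpha(|t|-|t'|)}\le e^{\alpha|t-t'|}$ ``is integrable against the decaying off-diagonal kernel for $\alpha$ small.'' The kernel of $\cR$ is $\tfrac{1}{2iE}e^{iE|t-t'|}$, and since $E$ is real it has constant modulus $1/(2|E|)$ — it does not decay off the diagonal at all. The conjugated kernel $w_\alpha(t)\,\tfrac{1}{2iE}e^{iE|t-t'|}\,w_\alpha(t')^{-1}$ is unbounded in $t$ at fixed $t'$, so $w_\alpha\cR w_\alpha^{-1}$ is not even a bounded operator on $L^2$, let alone a bounded perturbation of $\cR$. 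Structurally the issue is unavoidable: $\cR$ is the outgoing resolvent of $\partial_t^2+E^2$, so $\cR\rho$ (and hence $\cP\rho$) carries oscillating, non-decaying tails at $\pm\infty$ and therefore leaves the weighted space $L^2_\alpha$, no matter how fast $\rho$ itself decays. Your proposed iteration $\rho\mapsto \cS_\omega\cobmat\cP[\rho]$ therefore exits $L^2_\alpha$ at the very first step. (There is also a smaller slip: from $\cL\cP[\rho]=g$ one gets a self-referential identity for $\cP[\rho]$, not the identity $\rho = g + 2m\cobmat^*(i\hat n\cdot\sigma)\sigma_3\cS_\omega\cobmat\cP[\rho]$ that you write; but the weight issue is the substantive one.)

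The paper's proof (Lemma~\ref{lem_boot}) sidesteps exactly this obstruction by writing $\rho = (I-\cL_0\cP)[\rho] + (\cL_0-\cL)\cP[\rho] + f$ with $\cL_0$ the flat-interface operator, and then exploiting two cancellations that your direct iteration misses. First, in the Fourier domain $I-\cL_0\cP$ is multiplication by a function that is holomorphic in the strip $|\Im\xi|<\omega$ and decays like $1/|\xi|$: the poles of $\cP$'s multiplier at $\xi=\pm E$ are exactly canceled by the zeros of $\cL_0$'s multiplier $a_1$, so $\cL_0\cP$ (unlike $\cP$ alone) both preserves the weight and gains a derivative, giving $I-\cL_0\cP\colon H^s_\alpha\to H^{s+1}_\alpha$. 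Second, the difference $\cL_0-\cL$ is an integral operator whose kernel decays exponentially in $|s|+|t|$ (not merely $|s-t|$), because it measures the deviation of $\gamma$ from a straight line, which is controlled by \eqref{eq:beta}; this means $\cL_0-\cL$ maps bounded functions back into the weighted space, so one can afford to let $\cP$ produce a merely bounded output ($\cP\colon H^s_\alpha\to C^{s+1}$) and still recover $H^{s+2}_\alpha$. Without these two structural observations — cancellation of $\cP$'s poles by $\cL_0$, and the two-sided decay of $\cL_0-\cL$ — the bootstrap cannot be made to close in $L^2_\alpha$.
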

A slightly more precise version of this theorem is proved in Lemma \ref{lem_boot}.


We now show how the solution $\rho$ of our integral equation \eqref{eq:solverho} can be used to construct a solution of the Dirac equation satisfying natural radiation conditions at infinity along the interface $\Gamma$. Although it would be natural to conclude that the Dirac equations with such radiation conditions admits a unique solution, this problem is not considered here.
\begin{theorem}\label{thm:usol}
    Let $f_j\in \mathcal{C}^\infty_c (\Omega_j)$ for $j=1,2$. Let $u_{i}$ be the incident field as defined in~\eqref{eq:ui}, and $u_{s}$ the scattered field as defined in~\eqref{eq:us}. 
    Suppose $\rho \in L^2_\alpha \cap C^\infty (\mathbb{R}; \mathbb{C}^2)$ satisfies \eqref{eq:solverho}, and let $\mu := \cobmat \cP \rho$.
 Then $u=u_{i}+u_{s}$ satisfies~\eqref{eq:PDE} and \eqref{eq:out0}.
\end{theorem}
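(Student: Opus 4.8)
\textbf{Proof proposal for Theorem \ref{thm:usol}.}

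The plan is to verify the two defining properties of $u = u_i + u_s$ separately: first that it solves the Dirac PDE \eqref{eq:PDE} with the correct transmission condition, and second that it obeys the radiation conditions \eqref{eq:out0}. For the PDE itself, the representation \eqref{eq:ui}--\eqref{eq:us} is built from the Green's functions $G_j$ of \eqref{eq:gf}, so $(-i\partial_{x_1}\sigma_1 - i\partial_{x_2}\sigma_2 + (-1)^j m\sigma_3 - E)u = f_j$ holds in $\Omega_j$ essentially by construction (differentiating under the integral, using that $G_j$ solves the constant-mass Dirac equation away from the diagonal and produces $f_j$ from the volume term). The content is the continuity of $u$ across $\Gamma$. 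Here I would retrace the jump computation from Section \ref{subsec:naive}: since $S_\omega$ is continuous on $\mathbb{R}^2$, the jump of $u_{s,0}$ is $2m\sigma_3 \cS_\omega[\mu](t)$ and the jump of $u_{s,D}$ is $i(\hat n(t)\cdot\sigma)\mu(t)$, by \eqref{eq:tangradS}--\eqref{eq:jumpgradS}; meanwhile $u_i$ is smooth up to $\Gamma$ from each side with jump $[[u_i]](\gamma(t))$. Then $[[u]] = [[u_i]] + 2m\sigma_3\cS_\omega[\mu] + i(\hat n\cdot\sigma)\mu$, and multiplying by $-i\hat n\cdot\sigma$ and changing variables $\mu = \cobmat\rhovar$, $\rhovar = \cP\rho$ shows $[[u]] = 0$ is equivalent to $\cL\cP[\rho] = i\cobmat^*\hat n\cdot\sigma[[u_i]](\gamma(t))$, which is exactly \eqref{eq:solverho}. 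The smoothness $\rho \in C^\infty$ (Theorem \ref{thm:regul}) is what justifies taking these pointwise limits and makes $\mu = \cobmat\cP\rho$ smooth with controlled decay.

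For the transverse decay $\lim_{d(x,\Gamma)\to\infty} u(x) = (0,0)$, the argument is short: $u_i$ has compact support contributions smoothed by $G_j$, which decays like $e^{-\omega|x-y|}$ since $\omega = \sqrt{m^2-E^2} > 0$; and in $u_s$ the kernel $G_j(x,\gamma(t))$ decays exponentially in $|x - \gamma(t)| \ge d(x,\Gamma)$, while $\mu \in L^2_\alpha$ (indeed $\mu$ is bounded with exponential decay from Theorem \ref{thm:regul} and the mapping properties of $\cP$), so the integral is bounded by a constant times $e^{-\omega d(x,\Gamma)/2}$ times a finite factor. This uses \eqref{eq:c} to control the geometry, i.e. that $|\gamma(t)-\gamma(t')|$ is comparable to $|t-t'|$ so the $t'$-integral converges.

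The main work — and the step I expect to be the obstacle — is the outgoing condition $\lim_{t\to\pm\infty}(\pm\partial_t - iE)u(\gamma(t) + r\hat n(t)) = (0,0)$ for $r \ne 0$. The idea is to extract the asymptotics of $u$ along a ray parallel to $\Gamma$ far out. Because $\Gamma$ is asymptotically flat \eqref{eq:beta}, for $t \to +\infty$ the interface looks like a straight line, and one expects $u(\gamma(t)+r\hat n(t))$ to behave like a superposition of the flat-case edge modes, which carry the factor $e^{iE|t|}$ built into $\cR$ and hence into $\cP$; applying $(\partial_t - iE)$ should annihilate the leading $e^{iEt}$ term, leaving a remainder that vanishes. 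Concretely I would (i) write $u_s(\gamma(t)+r\hat n(t)) = S\text{-type integrals against }\mu$, (ii) use the large-argument asymptotics of $K_0$ and of $\nabla K_0$ together with a stationary-phase / Laplace-type analysis of $\int K_0(\omega|\gamma(t)+r\hat n(t) - \gamma(t')|)\mu(t')\,dt'$, isolating the contribution near $t' \approx t$, and (iii) match this against the structure of $\cP$, using that $\mmat$ projects onto precisely the component of $\rhovar$ whose flat-case symbol $a_2(\xi)$ has poles at $\xi = \pm E$ — these poles are the source of the propagating $e^{\pm iEt}$ tails, and $\mmat \cR$ reinserts exactly the outgoing one. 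The delicate points are: controlling the error from the non-flatness of $\Gamma$ (the $C_j e^{-\eta|t|}$ decay in \eqref{eq:beta} is what makes these errors integrable and subleading), and handling $u_i$, whose far-field along $\Gamma$ should simply be $O(e^{-\omega|t|/2})$ since $f_j$ has compact support — negligible against the edge mode but needing a clean estimate. I would also invoke the flat-case analysis of Section \ref{subsec:flat} as the model problem: there \eqref{eq:a1}--\eqref{eq:a2} make the outgoing behavior transparent via the explicit symbol, and the general case is a perturbation governed by the decay rate $\eta$.
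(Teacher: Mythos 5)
Your treatment of the PDE in $\Omega_1\cup\Omega_2$, of continuity across $\Gamma$, and of the transverse decay condition lines up with the paper's argument, so those parts are fine. The genuine gap is in the outgoing radiation condition along $\Gamma$, precisely the step you flag as ``the obstacle.'' You correctly identify the mechanism --- the factor $e^{iE|t|}$ built into $\cR$, hence into $\mu = \cobmat\cP\rho$, is what $(\partial_t - iE)$ is designed to annihilate --- but your proposed execution via large-argument asymptotics of $K_0$ and $\nabla K_0$ together with a stationary-phase / pole-matching analysis is only a plan, with the real work left open: controlling the oscillatory error terms in the kernel expansion, extracting the $\xi=\pm E$ pole contributions, and bounding the non-flatness corrections. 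None of these are carried out, and it is not clear that this route closes cleanly.

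The paper's proof avoids all asymptotic analysis via a short, exact computation. Because $G_j(\gamma(t)+r\hat n(t),\gamma(s))$ depends only on the difference $\gamma(t)+r\hat n(t)-\gamma(s)$ and $\Gamma$ is asymptotically straight by \eqref{eq:beta}, one has $(\partial_t+\partial_s)G_j(\gamma(t)+r\hat n(t),\gamma(s))\to 0$ as $|t|\to\infty$; hence $\partial_t u_s(\gamma(t)+r\hat n(t))$ can be replaced, in the limit, by $-\int_\Gamma \partial_s G_j\,\mu(s)\,{\rm d}s$. Now decompose $\mu = \cobmat\rho + \tmu$ with $\tmu := \cobmat\mmat\,(2im^2\cR\rho)$, using $\cP = 1 + 2im^2\mmat\cR$. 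The $\cobmat\rho$ piece decays in $s$ (since $\rho\in L^2_\alpha$) and contributes nothing; for the $\tmu$ piece, integrate by parts in $s$ so that $(\partial_t-iE)u_s$ reduces in the limit to $\int_\Gamma G_j\,(\tmu'-iE\tmu)(s)\,{\rm d}s$, and then the explicit identity
\[
(\partial_t - iE)\bigl(2im^2\cR\rho\bigr)(t) = -2im^2\int_t^\infty e^{-iE(t-t')}\rho(t')\,{\rm d}t',
\]
whose right-hand side vanishes as $t\to+\infty$ because $\rho\in L^1$, yields $(\tmu'-iE\tmu)(t)\to 0$. This algebraic swap $\partial_t\leftrightarrow-\partial_s$ followed by integration by parts and the closed-form differentiation of $\cR$ is what actually closes the step; I would replace your plan (ii)--(iii) with it.
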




\subsection{Two interfaces}\label{sec:two}
Suppose that instead we are given two non-intersecting smooth simple curves, $\Gamma_1$ and $\Gamma_2$, separating the plane into a lower region $\Omega_0$, a middle region $\Omega_1$, and an upper region $\Omega_2$. We refer to Figure \ref{fig:scattering} (top right panel) for an illustration. 
Let $\gamma_1, \gamma_2:\mathbb{R}\to \mathbb{R}^2$ be the arclength parameterizations of the interfaces $\Gamma_1$ and $\Gamma_2$. Moreover, with $\hat{n}_j(t)$ the unit normal vector to $\Gamma_j$ at $\gamma_j (t)$ 
pointing in the direction of $\Omega_{j}$, 
we assume that $(\gamma_j'(t),\hat n_j(t))$ has positive orientation. We assume that for $j=1,2$, $\gamma = \gamma_j$ satisfies \eqref{eq:beta} and \eqref{eq:gammainfty}.
In addition, we assume that
\begin{align}\label{eq:gammainfty2}
    \lim_{t \rightarrow \infty} |\gamma_2 (\eps_2 t) - \gamma_1 (\eps_1 t)| = \infty
\end{align}
for all $\eps_1, \eps_2 \in \{-1,1\}$.
That is, the two interfaces must be going off to infinity in different directions.

With $m,E \in \mathbb{R}$ satisfying $0 < |E| < |m|$ and $f_j \in \mathcal{C}_c^\infty (\Omega_j)$ for $j=0,1,2$, we wish to solve
\begin{align}\label{eq:PDE2}
\begin{split}
    (-i \partial_{x_1} \sigma_1- i \partial_{x_2} \sigma_2 + m \sigma_3 - E) u (x) = f_2 (x), \qquad &x \in \Omega_2\\
    (-i \partial_{x_1} \sigma_1- i \partial_{x_2} \sigma_2 - m \sigma_3 - E) u (x) = f_1 (x), \qquad &x \in \Omega_1\\
    (-i \partial_{x_1} \sigma_1- i \partial_{x_2} \sigma_2 + m \sigma_3 - E) u (x) = f_0 (x), \qquad &x \in \Omega_0\\
    \lim_{y\to x} u(y) = \lim_{z \to x} u(z), \qquad & y \in \Omega_2, z \in \Omega_1, x \in \Gamma_2\\
    \lim_{y\to x} u(y) = \lim_{z \to x} u(z), \qquad & y \in \Omega_1, z \in \Omega_0, x \in \Gamma_1
\end{split}
\end{align}
for $u$, 
with the following radiation conditions at infinity:
\begin{align}\label{eq:out2}
\begin{split}
    \lim_{t \rightarrow \pm \infty} (\pm\partial_t - iE) u (\gamma_j (t) + r \hat{n}_j (t))&=0, \qquad j=1,2, \quad r \ne 0,\\
    \lim_{d(x,\Gamma) \rightarrow \infty} u (x) &= 0,
\end{split}
\end{align}
where we have defined $\Gamma := \Gamma_1 \cup \Gamma_2$.
Define $G_1$ and $G_2$ by \eqref{eq:gf}, and set
\begin{align}
\label{eq:ui2}
    u_i (x) = \begin{cases}
        \int_{\Omega_2} G_2 (x,y) f_2 (y) {\rm d} y, & x \in \Omega_2\\
        \int_{\Omega_1} G_1 (x,y) f_1 (y) {\rm d} y, & x \in \Omega_1\\
        \int_{\Omega_0} G_2 (x,y) f_0 (y) {\rm d} y, & x \in \Omega_0
    \end{cases}
\end{align}
and
\begin{align}
\label{eq:us2}
    u_s (x) = \begin{cases}
        \int_\mathbb{R} G_2 (x, \gamma_2 (t)) \mu_2 (t) {\rm d} t, & x \in \Omega_2\\
        \int_\mathbb{R} G_1 (x, \gamma_2 (t)) \mu_2 (t) {\rm d} t
        + \int_\mathbb{R} G_1 (x, \gamma_1 (t)) \mu_1 (t) {\rm d} t, & x \in \Omega_1\\
        \int_\mathbb{R} G_2 (x, \gamma_1 (t)) \mu_1 (t) {\rm d} t, & x \in \Omega_0
    \end{cases}
\end{align}
for some $\mu_1, \mu_2 \in L^2 (\mathbb{R};\mathbb{C}^2)$.
Then $u:=u_i+u_s$ solves \eqref{eq:PDE2}; all that is left is to ensure that $u$ is continuous across $\Gamma_1$ and $\Gamma_2$.
Let $[[u]]_j$ denote the jump in $u$ across $\Gamma_j$. Using the arguments from Section \ref{subsec:naive}, we see that
\begin{align*}
    [[u_s]]_2(\gamma_2 (t))&= (2m \sigma_3 \cS_\omega + i \hat{n}_2 \cdot \sigma) \mu_2 (t) - \int_{\mathbb{R}} G_1 (\gamma_2 (t),\gamma_1 (t')) \mu_1 (t') {\rm d} t',\\
    [[u_s]]_1(\gamma_1 (t))&= (-2m \sigma_3 \cS_\omega + i \hat{n}_1 \cdot \sigma) \mu_1 (t) + \int_{\mathbb{R}} G_1 (\gamma_1 (t),\gamma_2 (t')) \mu_2 (t') {\rm d} t'.
\end{align*}
Writing $[[u_s]]_j = -[[u_i]]_j$ and multiplying both sides by $-i \hat{n}_j \cdot \sigma$, we get
\begin{align}\label{eq:system}
    \begin{pmatrix}
        1+2m (i \hat{n}_1 \cdot \sigma)\sigma_3 \cS_\omega & (-i\hat{n}_1 \cdot \sigma)\bar{\cK}_2\\
        (i \hat{n}_2 \cdot \sigma) \bar{\cK}_1 & 1-2m (i \hat{n}_2 \cdot \sigma)\sigma_3 \cS_\omega
    \end{pmatrix}
    \begin{pmatrix}
        \mu_1\\
        \mu_2
    \end{pmatrix}
    =
    \begin{pmatrix}
        (i \hat{n}_1 \cdot \sigma) [[u_i]]_1\\
        (i \hat{n}_2 \cdot \sigma) [[u_i]]_2
    \end{pmatrix},
\end{align}
where
\begin{align}\label{eq:bar_cK}
    \bar{\cK}_j [\mu] (t) := \int_{\mathbb{R}} G_1 (\gamma_i (t),\gamma_j (t')) \mu (t') {\rm d} t', \qquad i:=3-j, \qquad j \in \{1,2\}.
\end{align}
Observe that equation \eqref{eq:system} is the two-interface analogue of \eqref{eq:int0}. As before, we seek to write our integral equation such that the operator on the left-hand side is rotationally invariant.
For $j=1,2$, define $$\cobmat_j := \frac{1}{\sqrt{2}}\begin{pmatrix}
    1 & in^{(1)}_j + n^{(2)}_j\\
    -in^{(1)}_j + n^{(2)}_j & -1
\end{pmatrix},$$
with $n^{(i)}_j$ the $i$th component of the vector $\hat{n}_j$.
Set $\cobmat_j \rhovar_j := \mu_j$ and multiply both sides of \eqref{eq:system} by 
$\begin{pmatrix}
    \cobmat_1^* & 0\\
    0 & \cobmat_2^*
\end{pmatrix}$ 
to obtain
\begin{align}\label{eq:systemrotinv}
    \begin{pmatrix}
        \cL_1 & \cK_2\\
        \cK_1 & \cL_2
    \end{pmatrix}
    \begin{pmatrix}
        \tau_1\\
        \tau_2
    \end{pmatrix} =
    \begin{pmatrix}
        \cobmat_1^*(i \hat{n}_1 \cdot \sigma) [[u_i]]_1\\
        \cobmat_2^*(i \hat{n}_2 \cdot \sigma) [[u_i]]_2
    \end{pmatrix},
\end{align}
where
\begin{align*}
    \cL_j &:= 1- (-1)^j 2m \cobmat_j^* (i \hat{n}_j \cdot \sigma)\sigma_3 \cS_\omega \cobmat_j
\\
    \cK_j &:= (-1)^{j-1} \cobmat_\ell^* (i \hat{n}_\ell \cdot \sigma) \bar{\cK}_j \cobmat_j, \qquad \ell := 3 - j, \qquad j \in \{1,2\}.
\end{align*}
The operator
$\begin{pmatrix}
    \cL_1 & \cK_2\\
    \cK_1 & \cL_2
\end{pmatrix}$
on the left-hand side of
\eqref{eq:systemrotinv} is now invariant with respect to rotations of $\Gamma$. 
But as was the case 
for one interface (recall \eqref{eq:int} and the paragraph below it), this operator cannot be inverted in $L^2(\mathbb{R}; \mathbb{C}^4)$, and therefore the integral equation must be modified and outgoing conditions selected.
Motivated by our derivation in Section \ref{subsec:flat},
we will thus solve
\begin{align}\label{eq:solverho2}
    \begin{pmatrix}
        \cL_1 & \cK_2\\
        \cK_1 & \cL_2
    \end{pmatrix}
    \begin{pmatrix}
        \cP_1 & 0\\
        0 & \cP_2
    \end{pmatrix}
    \begin{pmatrix}
        \rho_1\\
        \rho_2
    \end{pmatrix} =
    \begin{pmatrix}
        \cobmat_1^*(i \hat{n}_1 \cdot \sigma) [[u_i]]_1\\
        \cobmat_2^*(i \hat{n}_2 \cdot \sigma) [[u_i]]_2
    \end{pmatrix}
\end{align}
for $(\rho_1, \rho_2)$, where
$\cP_1 := 1 + 2im^2 (1-\mmat) \cR$ and $\cP_2 = 1 + 2im^2 \mmat\cR$, 
where $\cR$ is defined in~\eqref{eq:R}, and the matrix $\mmat$ is given in~\eqref{eqn:mdef}

We are now ready to extend Theorems \ref{thm:invE} and \ref{thm:invL} to this two-interface setting, thus establishing that
the integral equation \eqref{eq:solverho2} in general has a unique solution.
\begin{theorem}\label{thm:invE2}
Fix $m\ne 0$.
For any $\eps>0$ and
$\alpha>0$ sufficiently small (depending on $\eps$),
the integral equation \eqref{eq:solverho2} admits a unique solution $(\rho_1, \rho_2) \in L^2_\alpha (\mathbb{R}; \mathbb{C}^4)$
for all but a finite number of $E \in [-|m|+\eps, -\eps] \cup [\eps, |m|-\eps]$.
\end{theorem}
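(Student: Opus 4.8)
The plan is to recast the operator in \eqref{eq:solverho2} as a block-diagonal operator built from two copies of the one-interface operator $\cL\cP$ together with an off-diagonal coupling that the geometric separation of the interfaces renders compact, and then to run a holomorphic Fredholm argument in the energy $E$, exactly as in the one-interface case. First I would write the operator on the left-hand side of \eqref{eq:solverho2} as $\mathcal{A}(E) = \mathcal{D}(E) + \mathcal{K}(E)$ on $L^2_\alpha(\bR;\mathbb{C}^4)$, where $\mathcal{D}(E) = \diag(\cL_1\cP_1, \cL_2\cP_2)$ is block diagonal and $\mathcal{K}(E)$ is the off-diagonal block with entries $\cK_2\cP_2$ and $\cK_1\cP_1$. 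Comparing definitions, $\cL_2\cP_2$ is exactly the one-interface operator $\cL\cP$ of \eqref{eq:solverho} associated with the curve $\Gamma_2$ and mass $m$, while $\cL_1\cP_1$ is that operator for the curve $\Gamma_1$ and mass $-m$ (the jump of the mass reverses sign across $\Gamma_1$). Since each $\gamma_j$ satisfies \eqref{eq:beta}--\eqref{eq:gammainfty}, the argument behind Theorem \ref{thm:invE} applies to each diagonal block: for $\alpha>0$ small (depending on $\eps$) each $\cL_j\cP_j$ is a compact perturbation of an invertible Fourier multiplier — the flat-interface operator, with symbol $a_1 a_2$ of \eqref{eq:a1}--\eqref{eq:a2} — hence Fredholm of index $0$ on $L^2_\alpha$, with $\mathcal{D}(E)$ invertible for all but finitely many $E$ in the interval of the statement.

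Next I would show that $\mathcal{K}(E)$ is Hilbert--Schmidt on $L^2_\alpha(\bR;\mathbb{C}^4)$, so that $\mathcal{A}(E)$ is Fredholm of index $0$ and therefore invertible precisely when injective. Each block $\cK_j\cP_j$ factors through the bounded operator $\cP_j$ and an operator whose Schwartz kernel, up to the bounded matrix and unitary factors appearing in $\cK_j$, equals $G_1(\gamma_i(t), \gamma_j(t'))$ with $i = 3-j$; by \eqref{eq:gf} this decays like $e^{-\mathrm{Re}(\omega)|\gamma_i(t)-\gamma_j(t')|}$ up to algebraic factors. The essential point is that \eqref{eq:gammainfty2}, combined with the asymptotic flatness \eqref{eq:beta} and the lower bound \eqref{eq:c} for each curve, forces a uniform estimate $|\gamma_i(t)-\gamma_j(t')| \ge c_0(|t|+|t'|) - C_0$ for all $t,t'\in\bR$, so that for $\alpha < c_0\,\mathrm{Re}(\omega)$ the weighted kernel $e^{\alpha|t|}G_1(\gamma_i(t),\gamma_j(t'))\,e^{-\alpha|t'|}$ lies in $L^2(\bR^2)$ and $\mathcal{K}(E)$ is Hilbert--Schmidt.

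Finally I would complexify $E$ to a connected open set $U\subset\mathbb{C}$ containing $[\eps,|m|-\eps]$ (handling $[-|m|+\eps,-\eps]$ separately and taking the union of exceptional sets), avoiding $0$ and $\pm m$ so that $\omega=\sqrt{m^2-E^2}$ is holomorphic on $U$ with $\mathrm{Re}(\omega)>0$, and containing points of arbitrarily large imaginary part. For $\alpha$ small, uniformly on compact subsets of $U$, the map $E\mapsto\mathcal{A}(E)$ is holomorphic with values in the index-$0$ Fredholm operators on $L^2_\alpha$. At a point $E_*\in U$ with $\mathrm{Im}(E_*)$ large one has $\|\cS_\omega\|,\|\cR\|=O(1/\mathrm{Re}(\omega))$ and $\|\cK_j\|\to 0$ — the relevant kernels decay at rate $\sim\mathrm{Re}(\omega)\to\infty$, and the $\cK_j$ carry an extra factor $e^{-\mathrm{Re}(\omega)\,d_0}$ from the positive gap $d_0$ between $\Gamma_1$ and $\Gamma_2$ — so $\cL_j\cP_j\to I$ and the off-diagonal blocks vanish, whence $\mathcal{A}(E_*)$ is invertible. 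The analytic Fredholm theorem then gives that $\mathcal{A}(E)^{-1}$ is holomorphic on $U$ off a discrete set, whose intersection with the compact interval is finite; for every remaining $E$, applying $\mathcal{A}(E)^{-1}$ to the right-hand side of \eqref{eq:solverho2} yields the unique solution $(\rho_1,\rho_2)\in L^2_\alpha(\bR;\mathbb{C}^4)$, which lies in $L^2_\alpha$ because $[[u_i]]_j(\gamma_j(\cdot))$ inherits exponential decay from the compactly supported $f_j$ and the exponentially decaying Green's functions in \eqref{eq:ui2}.

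I expect the main obstacle to be the uniform geometric estimate $|\gamma_i(t)-\gamma_j(t')|\gtrsim|t|+|t'|$ of the second step: this is precisely where the hypothesis \eqref{eq:gammainfty2} that the two interfaces escape to infinity in distinct directions enters, and promoting its purely qualitative content to a quantitative linear lower bound requires combining the exponential flattening of each $\gamma_j$ from \eqref{eq:beta} with the non-collinearity of the four limiting rays; this is essentially the only new analytic ingredient beyond the one-interface theory. A secondary point requiring care is that the Fredholm and holomorphy properties of the one-interface operator persist throughout the complex neighborhood $U$ (in particular as $\mathrm{Im}(E)\to\infty$), which reduces to checking that the flat-case symbol $a_1(\xi;E)\,a_2(\xi;E)$ remains bounded and bounded away from zero for real $\xi$ and $E\in U$, with the singularities of its inverse staying off the real $\xi$-axis.
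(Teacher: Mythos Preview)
Your proposal is correct and takes a somewhat different route from the paper's proof. Both arguments reduce to the analytic Fredholm theorem in the complexified energy $E$, and both rely on the geometric separation estimate you identify (the paper states it in the proof of Theorem~\ref{thm:invL2} as $|G_1(\gamma_1(t),\gamma_2(t'))|\le Ce^{-\dc\lambda(|t|+|t'|)}$, which is equivalent to your linear lower bound). The difference lies in how base-point invertibility is established. The paper goes through the PDE: a density in the kernel of the integral operator produces a scattered field $u_s\in H^1$ solving the homogeneous Dirac system, an integration-by-parts argument of Klein--Gordon type (as in Lemma~\ref{lemma:kgh10}) forces $u_s\equiv 0$, and then a separate injectivity lemma for the layer potential (Lemma~\ref{lemma:mu0_2}, itself an operator-norm estimate valid only for large $\lambda$) gives $(\mu_1,\mu_2)\equiv 0$. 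You bypass this chain entirely by observing that $m\cS_\omega$, $m^2\cR$, and the off-diagonal blocks $\cK_j$ all tend to $0$ in operator norm as $\Im E\to\infty$, so that $\mathcal{A}(E)\to I$ directly. Your route is shorter and more self-contained; the paper's has the structural advantage of tying the integral equation back to the original PDE, and in the one-interface case (Theorem~\ref{thm:invE}) yields injectivity for \emph{every} $\Im E>0$ rather than only large ones.

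One point to tighten: when you say $\cK_j\cP_j$ ``factors through the bounded operator $\cP_j$'', note that $\cP_j$ is not bounded on $L^2_\alpha$ for real $E$, since the kernel $e^{iE|t-t'|}$ does not decay. The correct factoring, parallel to the paper's Lemma~\ref{lemma:MHS}, is $\cK_j\cP_j = W_\alpha^{-1}\,(W_\alpha\cK_jW_\alpha)\,(W_\alpha^{-1}\cP_j)$, where $W_\alpha^{-1}\cP_j:L^2_\alpha\to L^2$ is bounded because $|Q\rho|\lesssim\|\rho\|_{L^1}\lesssim\|\rho\|_{L^2_\alpha}$, and $W_\alpha\cK_jW_\alpha$ is Hilbert--Schmidt on $L^2$ by your geometric estimate together with the exponential decay of $G_1$.
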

\begin{theorem}\label{thm:invL2}
Fix $m_0, E_0 \in \mathbb{R}$ such that $0 < |E_0| < |m_0|$, and set 
$m=\lambda m_0$ and $E = \lambda E_0$ for $\lambda \in \mathbb{R}$.
Then for any $\alpha > 0$ sufficiently small,
the integral equation \eqref{eq:solverho2} admits a unique solution $(\rho_1,\rho_2) \in L^2_\alpha (\mathbb{R}; \mathbb{C}^4)$
for all but a finite number of $\lambda \in [1,\infty)$.
\end{theorem}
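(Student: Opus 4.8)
The plan is to adapt the proof scheme of the one-interface Theorems~\ref{thm:invE} and~\ref{thm:invL}: compare the $4\times4$ block operator of~\eqref{eq:solverho2} with a block-diagonal ``flat'' model built from two decoupled flat-interface operators, and show that both the curvature corrections on the diagonal and the off-diagonal coupling are negligible compact perturbations. Write $m=\lambda m_0$, $E=\lambda E_0$, $\omega=\lambda\omega_0$ with $\omega_0:=\sqrt{m_0^2-E_0^2}>0$, let $\mathcal A_\lambda$ denote the operator on the left-hand side of~\eqref{eq:solverho2} on $L^2_\alpha(\mathbb R;\mathbb C^4)$, and split $\mathcal A_\lambda=\mathcal A^0_\lambda+\mathcal B_\lambda$, where
\begin{align*}
\mathcal A^0_\lambda=\begin{pmatrix}\cL_1^{\mathrm{flat}}\cP_1&0\\[2pt]0&\cL_2^{\mathrm{flat}}\cP_2\end{pmatrix},\qquad
\mathcal B_\lambda=\begin{pmatrix}(\cL_1-\cL_1^{\mathrm{flat}})\cP_1&\cK_2\cP_2\\[2pt]\cK_1\cP_1&(\cL_2-\cL_2^{\mathrm{flat}})\cP_2\end{pmatrix},
\end{align*}
and $\cL_j^{\mathrm{flat}}$ is the flat-interface version of $\cL_j$. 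The block $\cL_2^{\mathrm{flat}}\cP_2$ is exactly the flat operator of Section~\ref{subsec:flat} (mass jumping $-m\to m$ across $\Gamma_2$, as in~\eqref{eq:PDE}), and $\cL_1^{\mathrm{flat}}\cP_1$ is its mirror image, obtained by $m\mapsto-m$ (hence $\mmat\mapsto1-\mmat$) because the mass jump across $\Gamma_1$ has the opposite orientation ($+m\to-m$); this is precisely why $\cP_1=1+2im^2(1-\mmat)\cR$ appears, with the \emph{same} outgoing $\cR$, so that~\eqref{eq:out2} is the outgoing condition on both interfaces. By Section~\ref{subsec:flat}, for $\alpha>0$ small $\mathcal A^0_\lambda$ extends to a holomorphic family of bounded operators on $L^2_\alpha(\mathbb R;\mathbb C^4)$ over a complex neighbourhood of $[1,\infty)$, invertible with $\sup_{\lambda\ge1}\|(\mathcal A^0_\lambda)^{-1}\|<\infty$; and, by the estimates underlying the proof of Theorem~\ref{thm:invL} (which use the exponential decay~\eqref{eq:beta} of the curvature), the diagonal blocks of $\mathcal B_\lambda$ are compact on $L^2_\alpha$, holomorphic in $\lambda$, with operator norm $\to0$ as $\lambda\to\infty$.

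The new ingredient is the off-diagonal coupling $\cK_1\cP_1,\ \cK_2\cP_2$. First I would record that the two interfaces are uniformly separated: $\delta_0:=\inf_{s,t}|\gamma_1(s)-\gamma_2(t)|>0$. Indeed, a near-contact sequence with bounded parameters would yield $\gamma_1(s^*)=\gamma_2(t^*)$, contradicting disjointness, while one with unbounded parameters would, by the asymptotic straightness~\eqref{eq:beta} together with~\eqref{eq:gammainfty} and~\eqref{eq:gammainfty2} (which forbid the interfaces from going off to infinity along a common direction), force $|\gamma_1-\gamma_2|\to\infty$. The same argument refines to $|\gamma_i(t)-\gamma_j(t')|\ge c_1+c_2(|t|+|t'|)$ for all $t,t'$, with constants $c_1,c_2>0$ ($i=3-j$). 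Since the kernels of $\cK_1,\cK_2$ (see~\eqref{eq:bar_cK} and~\eqref{eq:gf}) are made of $K_0,K_1$ of $\omega|\gamma_i(t)-\gamma_j(t')|$ with prefactors of order $\lambda$, this lower bound and the exponential decay of the Bessel functions give a pointwise kernel estimate $O(\lambda e^{-c\lambda})\,e^{-c'(|t|+|t'|)}$, uniformly for $\lambda$ near $[1,\infty)$, with $c=\omega_0c_1/2$, $c'=\omega_0c_2/2$; such kernels define Hilbert--Schmidt --- hence compact --- operators on $L^2_\alpha$ for $\alpha<c'$, depending holomorphically on $\lambda$. Combining this with the bound $\|\cR\|_{L^2_\alpha\to L^2_{-\alpha}}=O(1/\lambda)$ (note $\cR$ is unbounded on $L^2_\alpha$ itself, but the exponential factor in the $\cK_j$-kernel still lets $\cK_j$ map $L^2_{-\alpha}\to L^2_\alpha$ with the same $O(\lambda e^{-c\lambda})$ bound), the composition satisfies $\|\cK_j\cP_j\|_{L^2_\alpha\to L^2_\alpha}=O(\lambda^{2}e^{-c\lambda})\to0$ as $\lambda\to\infty$. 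Hence $\mathcal B_\lambda$ is a holomorphic family of compact operators on $L^2_\alpha$ with $\|\mathcal B_\lambda\|\to0$.

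To conclude, choose $\Lambda\ge1$ with $\|(\mathcal A^0_\lambda)^{-1}\mathcal B_\lambda\|<1$ for $\lambda\ge\Lambda$; then $\mathcal A_\lambda=\mathcal A^0_\lambda(\mathrm{Id}+(\mathcal A^0_\lambda)^{-1}\mathcal B_\lambda)$ is invertible on $L^2_\alpha(\mathbb R;\mathbb C^4)$ for all $\lambda\ge\Lambda$, and~\eqref{eq:solverho2} has a unique solution there (its right-hand side lies in $L^2_\alpha$ since each $[[u_i]]_j$ decays exponentially along $\Gamma_j$, the $f_j$ being compactly supported and the $G_j$ exponentially decaying). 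On the compact interval $[1,\Lambda]$ I would apply analytic Fredholm theory to the family $\lambda\mapsto\mathrm{Id}+(\mathcal A^0_\lambda)^{-1}\mathcal B_\lambda$ on a complex neighbourhood $U\supset[1,\Lambda]$: it is holomorphic, of the form $\mathrm{Id}+\text{compact}$, and invertible at $\lambda=\Lambda\in U$, so it is invertible on $U$ away from a discrete set, which is finite within $[1,\Lambda]$; since $\mathcal A^0_\lambda$ is invertible throughout, so is $\mathcal A_\lambda$. Therefore~\eqref{eq:solverho2} is uniquely solvable in $L^2_\alpha(\mathbb R;\mathbb C^4)$ for all but finitely many $\lambda\in[1,\infty)$.

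The step I expect to be the main obstacle is the coupling estimate: establishing the uniform separation $\delta_0>0$ from the asymptotic-flatness hypotheses~\eqref{eq:beta}, \eqref{eq:gammainfty}, \eqref{eq:gammainfty2} alone, and then bounding $\cK_j\cP_j$ on the weighted space $L^2_\alpha$ despite $\cP_j$ being unbounded there --- the doubly small kernel of $\cK_j$, small in $\lambda$ (through $\omega=\lambda\omega_0$) and decaying in the arclength variables (through~\eqref{eq:gammainfty2}), is exactly what both absorbs the one unbounded direction of $\cP_j$ coming from $\cR$ and makes the coupling vanish in the limit $\lambda\to\infty$. The diagonal is, by contrast, a black-box application of the one-interface Theorems~\ref{thm:invE} and~\ref{thm:invL}.
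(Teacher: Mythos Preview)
Your proposal is correct and follows essentially the same route as the paper. The only organizational difference is that the paper first inverts the full curved diagonal blocks $\cL_j\cP_j$ for large $\lambda$ (via the one-interface Lemmas~\ref{lemma:invLP} and~\ref{lemma:MHS}) and then treats only the off-diagonal $\cK_j\cP_j$ as the perturbation, whereas you compare directly to the flat block-diagonal model and bundle both the diagonal curvature corrections and the off-diagonal coupling into $\mathcal B_\lambda$; the key estimates (uniform invertibility of the flat blocks, compactness and vanishing of curvature corrections, and the exponential smallness of the coupling coming from the separation estimate $|\gamma_i(t)-\gamma_j(t')|\gtrsim 1+|t|+|t'|$ implied by~\eqref{eq:gammainfty2}) are identical, and the paper likewise closes with analytic perturbation theory.
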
 
When the interface is smooth and all sources are compactly supported, then a generalization of Theorem \ref{thm:regul} shows that each $\rho_j\in (L^2_\alpha \cap C^\infty) (\mathbb{R};\mathbb{C}^4)$. The solution $(\rho_1, \rho_2)$ of \eqref{eq:solverho2} can then be used to generate a solution of the two-interface Dirac equation \eqref{eq:PDE2} satisfying the appropriate radiation conditions \eqref{eq:out2}.
\begin{theorem}\label{thm:usol2}
    Suppose $(\rho_1, \rho_2) \in (L^2_\alpha\cap C^\infty) (\mathbb{R}; \mathbb{C}^4)$ satisfies \eqref{eq:solverho2}, set $\mu_j := \cobmat_j \cP_j \rho_j$ for $j=1,2$, and define $u := u_i + u_s$, where $u_i$ and $u_s$ are respectively given by \eqref{eq:ui2} and \eqref{eq:us2}, with the latter depending on $(\mu_1, \mu_2)$. Then \eqref{eq:PDE2} and \eqref{eq:out2} hold.
\end{theorem}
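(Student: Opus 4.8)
The plan is to verify in turn the three conclusions of the theorem — the bulk Dirac equations in $\Omega_0,\Omega_1,\Omega_2$, continuity of $u$ across $\Gamma_1\cup\Gamma_2$, and the radiation conditions \eqref{eq:out2} — reducing the substantive analysis to the single-interface Theorem \ref{thm:usol} and to new estimates on the cross terms that couple the two interfaces. I would first record the regularity of the densities: since $\rho_j\in L^2_\alpha$ implies $\rho_j\in L^1$ (Cauchy--Schwarz against $e^{-\alpha|\cdot|}$), the operator $\cR$ sends $\rho_j$ to a bounded function, and from $(\partial_t^2+E^2)\cR\rho_j=\rho_j$ one bootstraps $\cR\rho_j\in C^\infty$; hence $\mu_j=\cobmat_j\cP_j\rho_j$ is bounded and smooth, and it carries the outgoing surface wave $\cR\rho_j(t)\sim c_j^\pm e^{\pm iEt}$ as $t\to\pm\infty$ already isolated in Section \ref{subsec:flat}. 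With $\mu_j$ bounded and smooth the integrals in \eqref{eq:us2} converge and define a smooth function off $\Gamma$, because the kernels \eqref{eq:gf} are locally integrable and decay like $e^{-\omega|x-y|}$ with $|x-\gamma_k(t')|\gtrsim|t'|$ for $|t'|$ large.

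For the bulk equations I would differentiate \eqref{eq:ui2}--\eqref{eq:us2} under the integral sign (legitimate by the decay and local integrability just noted together with $f_j\in\mathcal{C}^\infty_c$) and use that $G_1,G_2$ are the Green's functions of the constant-mass Dirac operators; this gives the three equations in \eqref{eq:PDE2} at once. For continuity I would apply to each term of \eqref{eq:us2} the jump relations for the single-layer potential and its gradient recalled in Section \ref{subsec:naive} (valid since $\mu_j$ is continuous), which is precisely the computation that produced \eqref{eq:system}; thus the requirement $[[u_s]]_j=-[[u_i]]_j$ is equivalent to \eqref{eq:system}. Left-multiplying \eqref{eq:system} by $\diag(\cobmat_1^*,\cobmat_2^*)$ and substituting $\mu_j=\cobmat_j\cP_j\rho_j$ turns it into exactly \eqref{eq:solverho2}; since each $\hat{n}_j\cdot\sigma$ is its own inverse and each $\cobmat_j$ is unitary these steps are reversible, so the hypothesis that $(\rho_1,\rho_2)$ solves \eqref{eq:solverho2} forces $[[u]]_1=[[u]]_2=0$.

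For the radiation conditions, the transverse decay $u(x)\to0$ as $d(x,\Gamma)\to\infty$ follows from the exponential decay of \eqref{eq:gf}: $u_i$ decays since the $f_j$ have compact support, and for $u_s$ the asymptotic flatness of each $\gamma_k$ together with \eqref{eq:c} gives $|x-\gamma_k(t')|\gtrsim\sqrt{d(x,\Gamma_k)^2+(t'-t_*)^2}$ for the nearest parameter $t_*$, so integrating $\|\mu_k\|_\infty e^{-\omega|x-\gamma_k(t')|}$ in $t'$ yields a bound $O((1+d(x,\Gamma))e^{-\omega d(x,\Gamma)})\to 0$. For the outgoing condition along $\Gamma_j$, fix $j\in\{1,2\}$, $r\ne0$, set $\ell:=3-j$ and $x(t):=\gamma_j(t)+r\hat{n}_j(t)$; for $|t|$ large, asymptotic flatness puts $x(t)$ in a single subdomain and \eqref{eq:gammainfty2} (again with asymptotic flatness of both interfaces) forces $d(x(t),\Gamma_\ell)\to\infty$. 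In the relevant line of \eqref{eq:us2} the term integrated over $\gamma_\ell$, when present, is then $O(\|\mu_\ell\|_\infty e^{-c\,d(x(t),\Gamma_\ell)})$, and so is $\partial_t$ of it, since the $t$-dependence enters only through $x(t)$ and $|x'(t)|=|\gamma_j'(t)+r\hat{n}_j'(t)|$ is bounded by \eqref{eq:beta}; hence $(\pm\partial_t-iE)$ annihilates it in the limit, and $(\pm\partial_t-iE)u_i(x(t))\to0$ likewise. The surviving term is $\int_{\mathbb{R}}G_k(x(t),\gamma_j(t'))\mu_j(t')\,{\rm d}t'$ — a single-interface scattered field on $\Gamma_j$ with density $\mu_j=\cobmat_j\cP_j\rho_j$ of exactly the form handled in Theorem \ref{thm:usol} (with the local orientation of the mass jump at $\Gamma_j$ and the matching $\cP_j$) — so the argument of that theorem gives $(\pm\partial_t-iE)$ of it $\to0$. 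Summing the pieces over $j\in\{1,2\}$ and the two signs of $r$ yields \eqref{eq:out2}.

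The main obstacle will be the geometric lemma $d(\gamma_j(t),\Gamma_\ell)\to\infty$ as $t\to\pm\infty$, which must be extracted from \eqref{eq:gammainfty2} together with the asymptotic-flatness hypothesis \eqref{eq:beta}: each interface approaches a pair of straight rays, and one has to check that the rays of $\Gamma_1$ and those of $\Gamma_2$ separate at a linear rate, uniformly along the curves, so that the cross-term bounds above are genuinely $o(1)$ (and, similarly, that $x(t)$ really does remain in a single subdomain for large $|t|$). The other point requiring care — more a matter of phrasing than an obstacle — is that $\mu_j$ does not decay; it carries the surface wave $\sim c_j^\pm e^{\pm iEt}$, so every estimate on the $\gamma_k$-integrals must be run with $\|\mu_j\|_{L^\infty}$, or with the explicit leading term, rather than an $L^2_\alpha$ norm. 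Everything else is the single-interface analysis of Theorem \ref{thm:usol} applied twice together with routine bookkeeping.
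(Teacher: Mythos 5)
Your proposal is correct and matches the paper's approach: both verify the bulk equations and transverse decay by dominated convergence, get continuity from the derivation of \eqref{eq:solverho2} and regularity of the densities, and handle the outgoing conditions by showing the cross-interface contribution is negligible (via \eqref{eq:gammainfty2} and the exponential decay of the Green's function) so that near each end of $\Gamma_j$ the scattered field reduces to the single-interface form treated in Theorem \ref{thm:usol}. The geometric point you flag as the main obstacle---that asymptotic flatness together with \eqref{eq:gammainfty2} forces $d(\gamma_j(t)+r\hat n_j(t),\Gamma_\ell)\to\infty$---is indeed the one detail the paper's very terse proof glosses over, and your reading of it is right.
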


\section{Numerical examples}\label{sec:numerical}
In this paper, the integral operators $\cL,$ $\cK,$ and $\cP$ were discretized using the boundary integral equation package \texttt{chunkie} \cite{chunkie}, which uses piecewise Legendre polynomial expansions to represent boundary curves and densities. The action of integral operators on densities is computed using a mixture of standard Gauss-Legendre quadrature~ (\cite{bremer2010universal,bremer2010nonlinear}) for well-separated points and smooth kernels and specialized {\it generalized Gaussian quadrature} rules (for nearby points and weakly-singular kernels). We note that as written, the domains of definition of the operators $\cL,$ $\cK,$ and $\cP$ appearing in our equations are functions defined on all of $\mathbb{R}.$ The decay of the data and solutions to the integral equation justifies truncating $\cP$ to functions supported in a region $I_1$ containing the support of the data and the non-flat sections of the geometry and a suitable `buffer region' which scales logarithmically in the required accuracy. Its image is functions supported on a larger interval $I_2$. Analogously, the domain of $\cL$ and $\cK$ can be reduced to maps from functions supported on $I_2$ to functions supported on $I_1.$ The application of these operators is accelerated using a combination of fast multipole methods~\cite{rokhlin1990rapid,greengard1987fast,fmm2d} and sweeping algorithms~\cite{jiang2022}.

\begin{figure}[ht!]
    \centering
    \includegraphics[scale=.12]{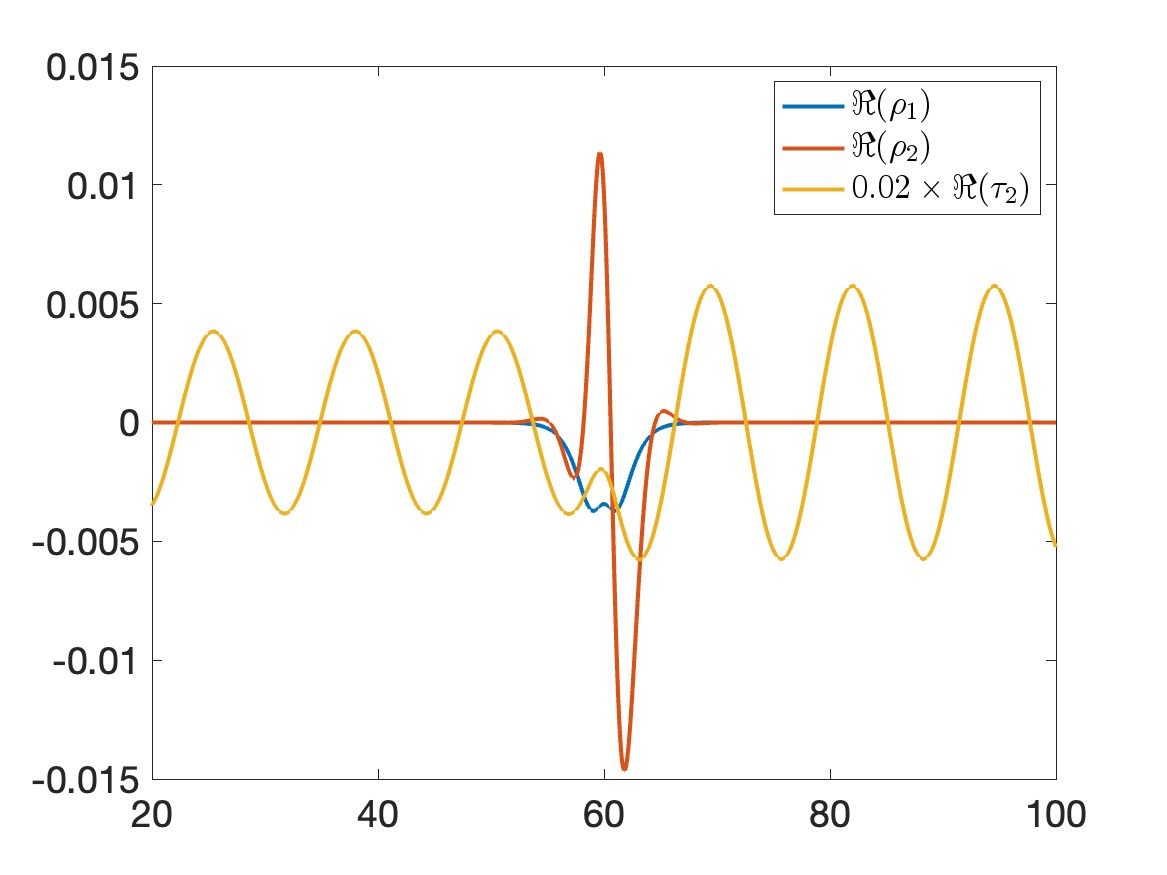}
    \includegraphics[scale=.12]{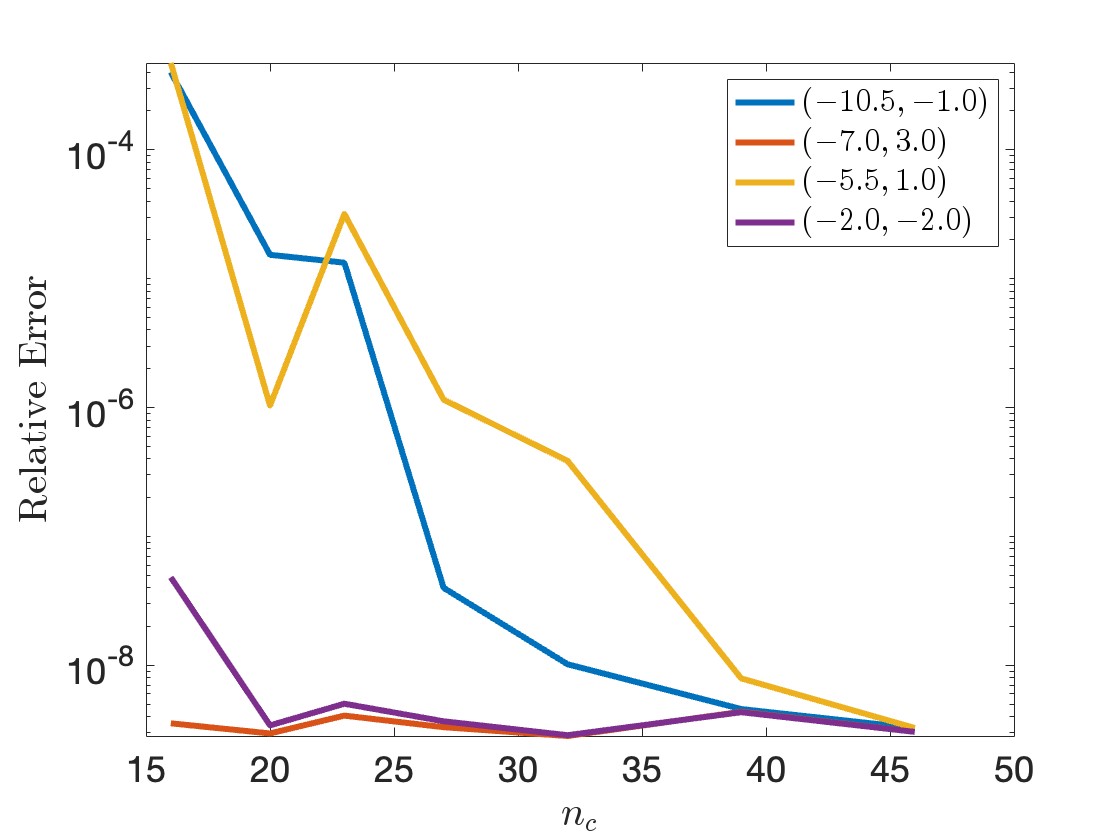}\\
    \includegraphics[scale=.12]{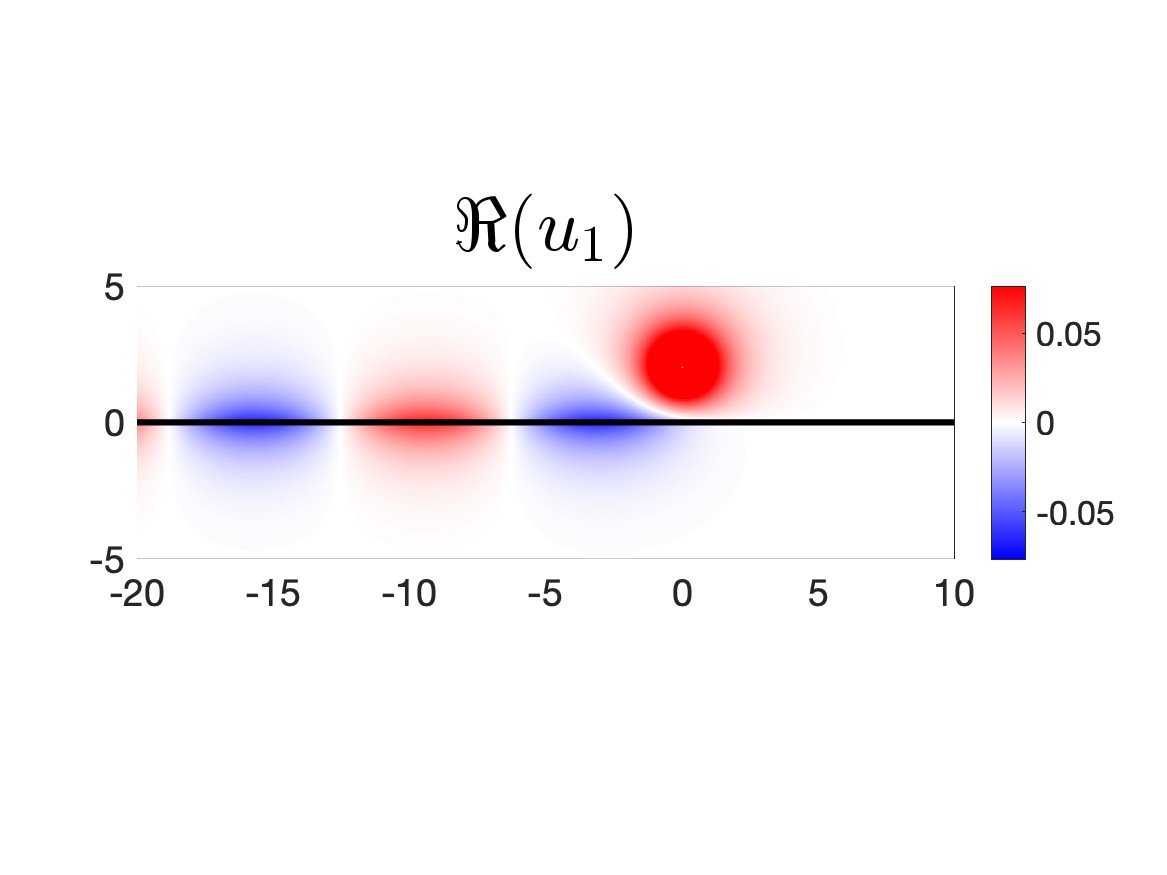}
    \includegraphics[scale=.12]{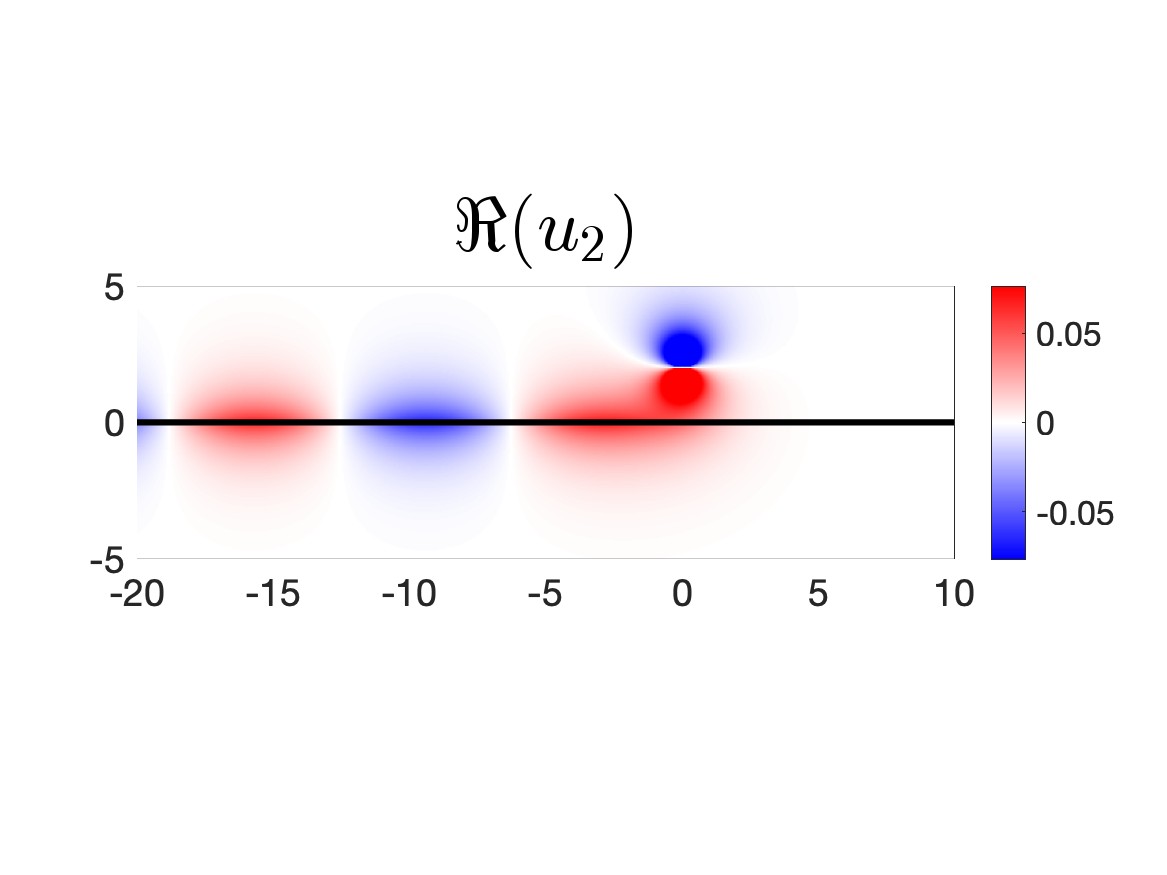}\\
    \includegraphics[scale=.12]{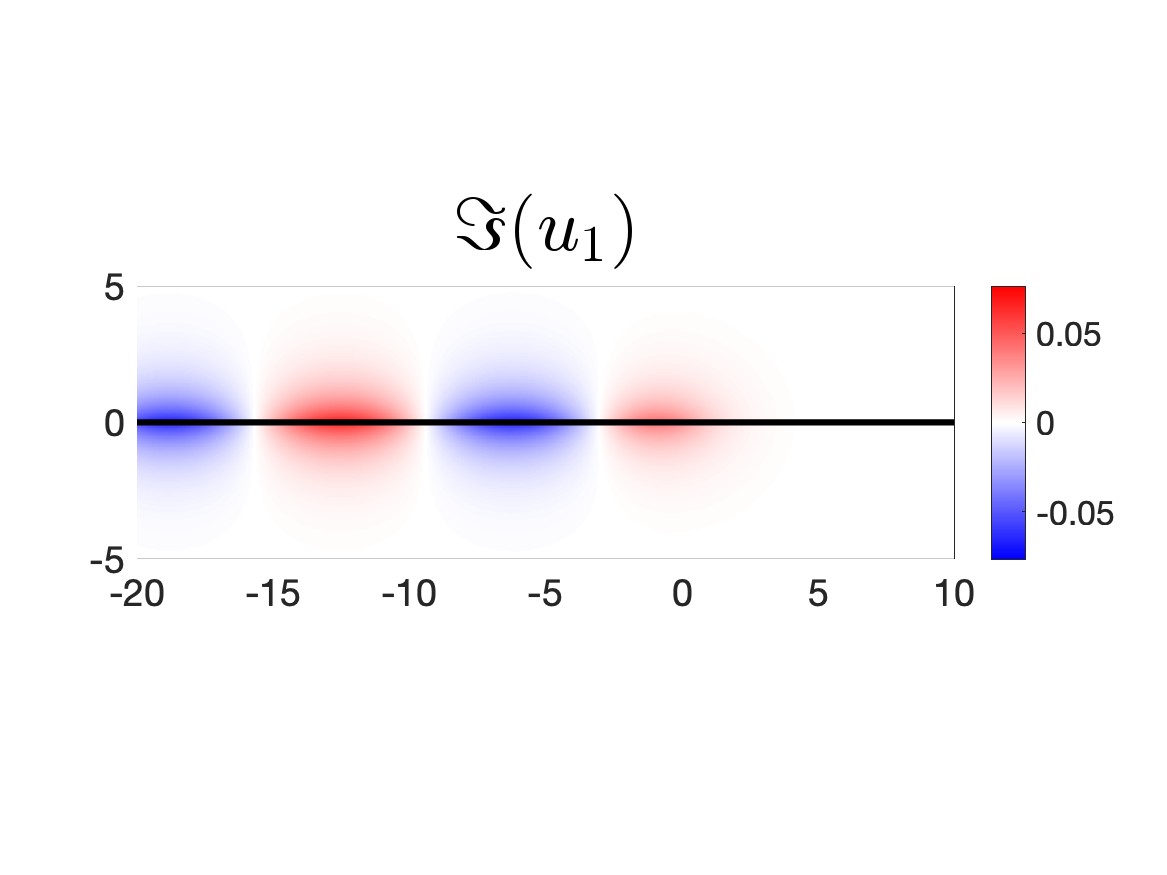}
    \includegraphics[scale=.12]{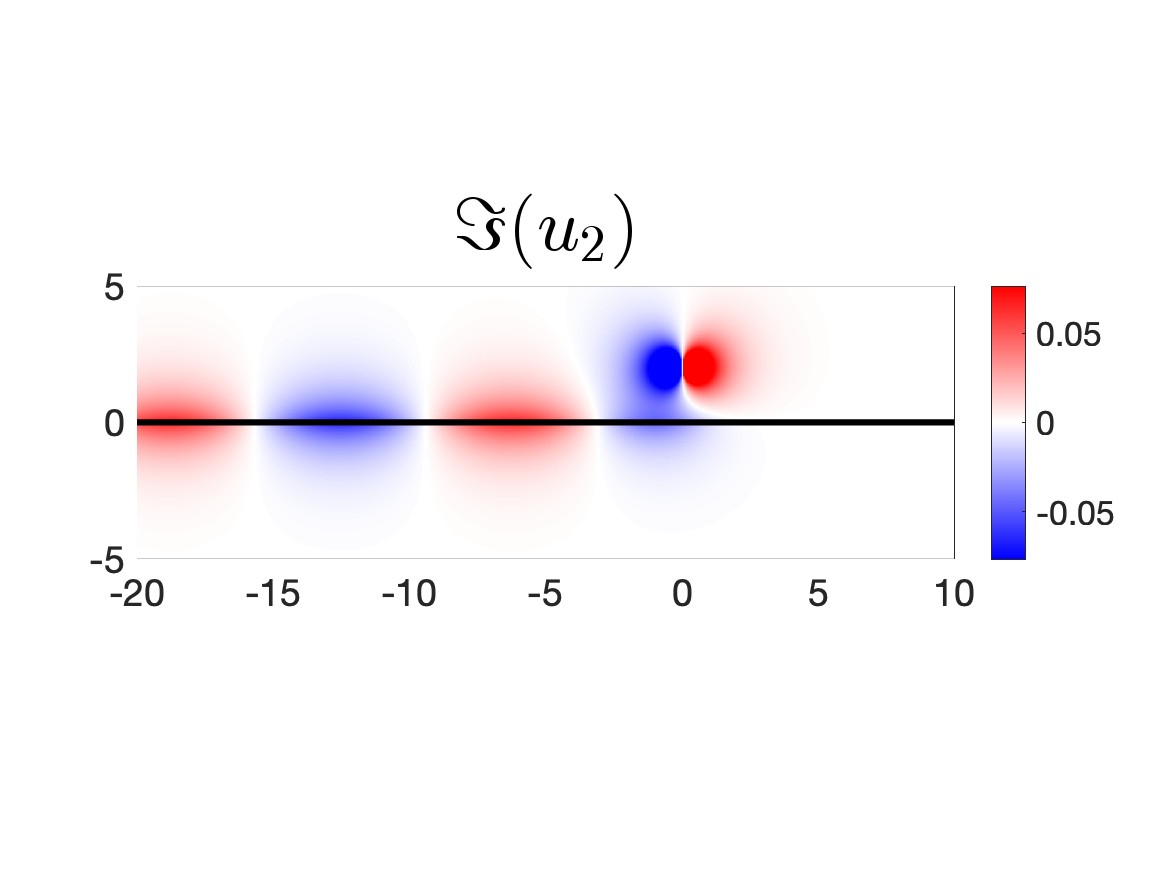}
    \caption{Solution for the flat interface, $\Gamma = \{x_2 = 0\}$, where $m=1, E=0.5$ and the right-hand side is $f_1 = (0,0), f_2 = (\delta_{x_0}, 0)$ with source location $x_0 = (0,2)$. The full solution is illustrated by the second and third rows. The top left panel plots the real parts of the computed densities $\rho$ and $\tau$. Recall that $\tau_1 = \rho_1$. The top right panel contains the relative error of the solution $u_1$ at the four specified points, as a function of the number of discretization points.}
    \label{fig:flat}
\end{figure}

We now illustrate the performance of our method through several numerical experiments. As predicted by the theory, solutions of the Dirac equation \eqref{eq:PDE} propagate only in one direction along $\Gamma$, with the waves traveling in the direction such that the region with positive mass is on the right. We refer to Figure \ref{fig:flat} for a flat-interface example, where we illustrate the computed densities $\rho$ and $\tau := \cP \rho$ (recall the definitions \eqref{eq:solverhoflat} and \eqref{eq:cP} of $\rho$ and $\cP$) as well as the full solution $u$. We also include a convergence check (top right panel), which demonstrates the high accuracy of our numerical method. Note that the relative error is evaluated at four distinct points and computed using the analytic solution for the flat interface. 

Observe that, although the solution $u$ decays rapidly to the right of the source (representing waves that propagate only from right to left in the time-dependent picture), the density $\tau$ propagates in both directions. Indeed, the interface for Figure \ref{fig:flat} is parametrized by $\gamma (t) = (t-60, 0)$, so that $t=60$ in the top-left panel is the point along $\Gamma$ that is closest to the source. In this case, the amplitude of $\tau$ is actually \emph{larger} to the right of the source, thus the construction of the ``scattered field'' $u_s$ prevents any rightward propagation. 
The cancellation of the right-propagating mode can be verified analytically when the interface is flat, though we do not carry out this derivation here. A natural question is whether one can construct an integral representation of the solution for which the propagating density propagates only to the left of the source. We postpone a further study of this issue to a future paper.

\begin{figure}
    \centering
    \includegraphics[scale=.10]{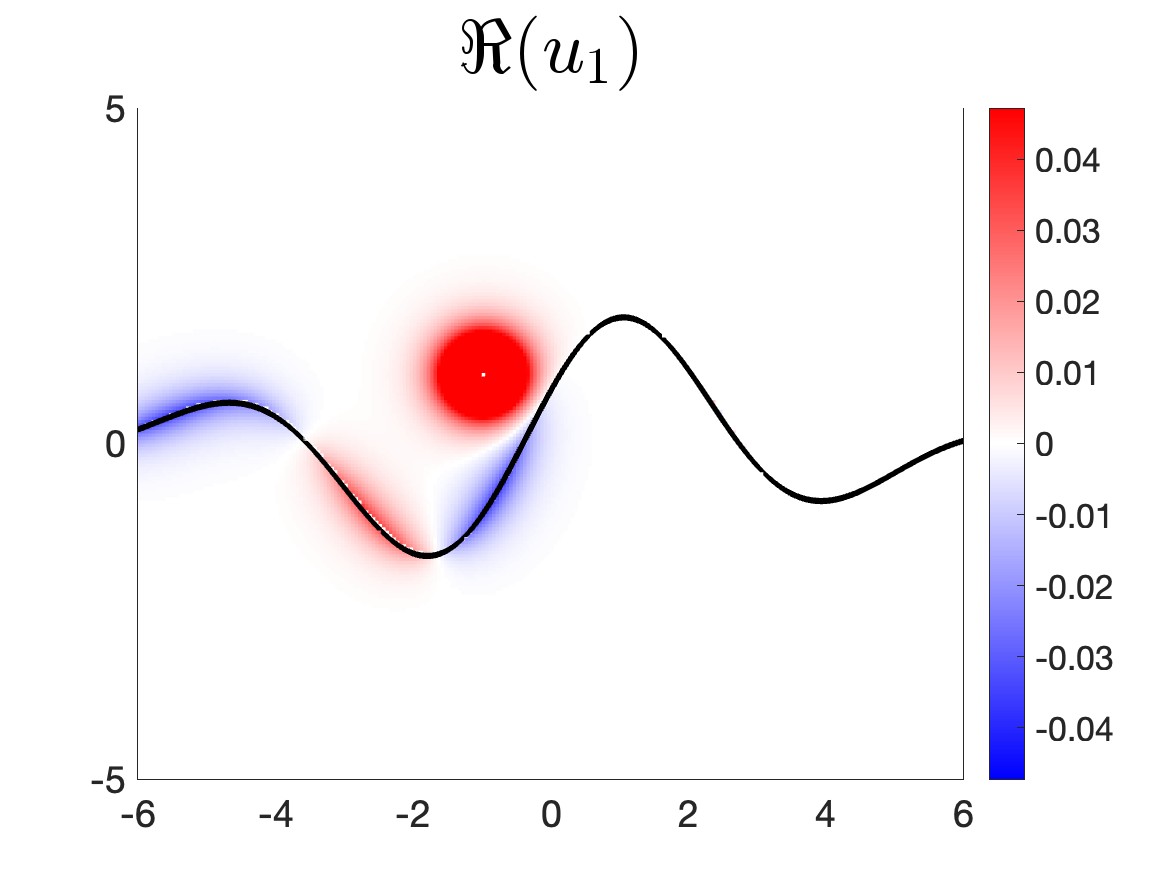}
    \includegraphics[scale=.10]{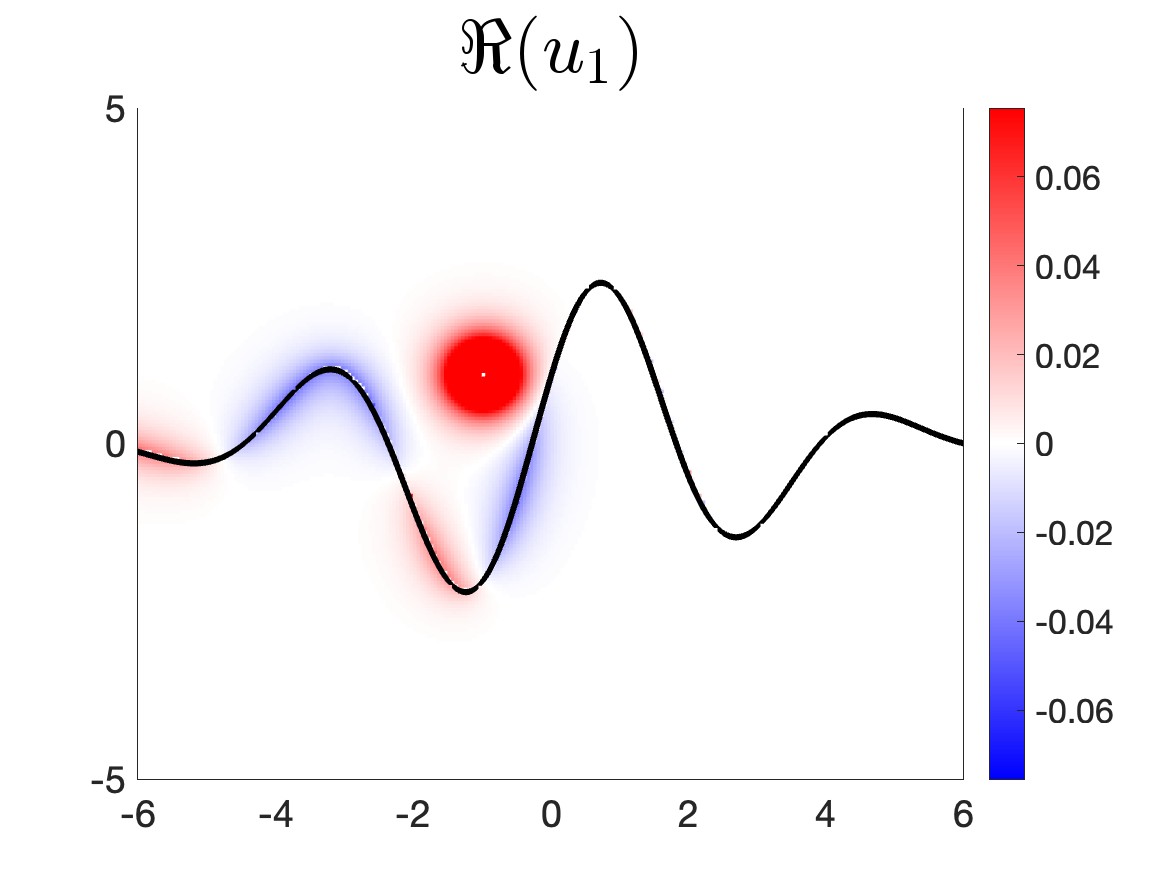}
    \includegraphics[scale=.10]{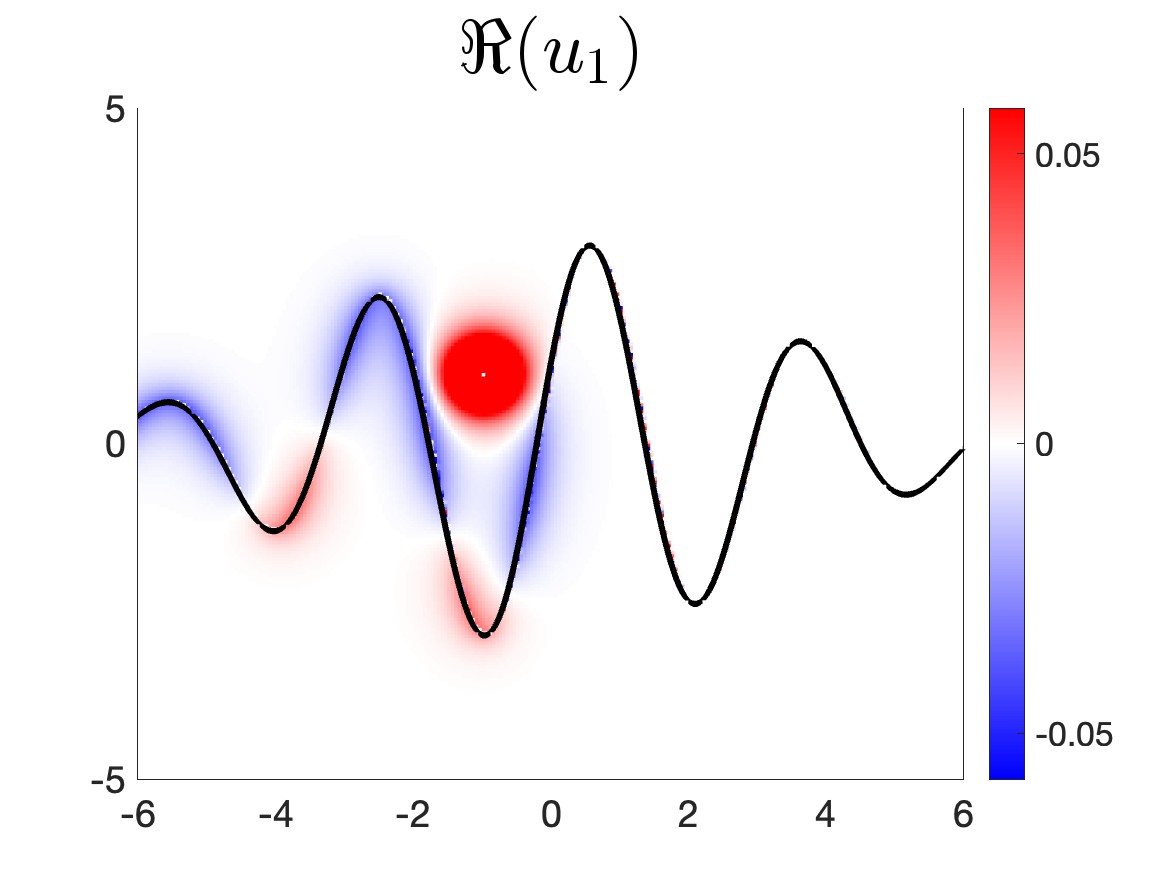}
    \caption{A component of the solution $u$ of the Dirac equation \eqref{eq:PDE} for various interfaces $\Gamma$. In each plot, the interface is illustrated by the solid curve. These examples all correspond to $m=4, E=1$ and a right-hand side given by $f_1 = (0,0), f_2 = (\delta_{x_0}, 0)$, with source location of $x_0 = (-1,1)$.}
    \label{fig:all}
\end{figure}

The asymmetric transport along $\Gamma$ persists even when the latter is highly oscillatory. We refer to Figure \ref{fig:all} for some examples. 
Note that this stability of asymmetric transport for the Dirac equation does not 
extend to the related Klein-Gordon equation,
\begin{align}\label{eq:KG_numerical}
\begin{split}
    (-\Delta+\omega^2) u (x) = f_1 (x), \qquad &x \in \Omega_1,\\
    (-\Delta+\omega^2) u (x) = f_2 (x), \qquad &x \in \Omega_2,\\
    [[\hat{n} \cdot \nabla u]] (\gamma (t)) = -2|m|u (\gamma(t)), \qquad &t \in \mathbb{R},
\end{split}
\end{align}
discussed in Section \ref{subsec:KG}; see \cite{bal2022integral} for more details.
We refer to Figure \ref{fig:resonance} for an example. 
While the Dirac solutions for $E=0.82$ and $E=0.83$ understandably look the same, the Klein-Gordon solution exhibits a visible change of behavior from a wave that mostly gets backscattered by the circular cavity ($E=0.82$) to one that 
partially transmits ($E=0.83)$.
More quantitatively, the Klein Gordon solutions for $E=0.82$ and $E=0.83$ have respective transmission coefficients of $9.8529 \times 10^{-4}$ and $0.3863$, while the transmission coefficient for the Dirac solution is of course $1$ for any energy.

\begin{figure}[t]
    \centering
    \includegraphics[scale=.16]{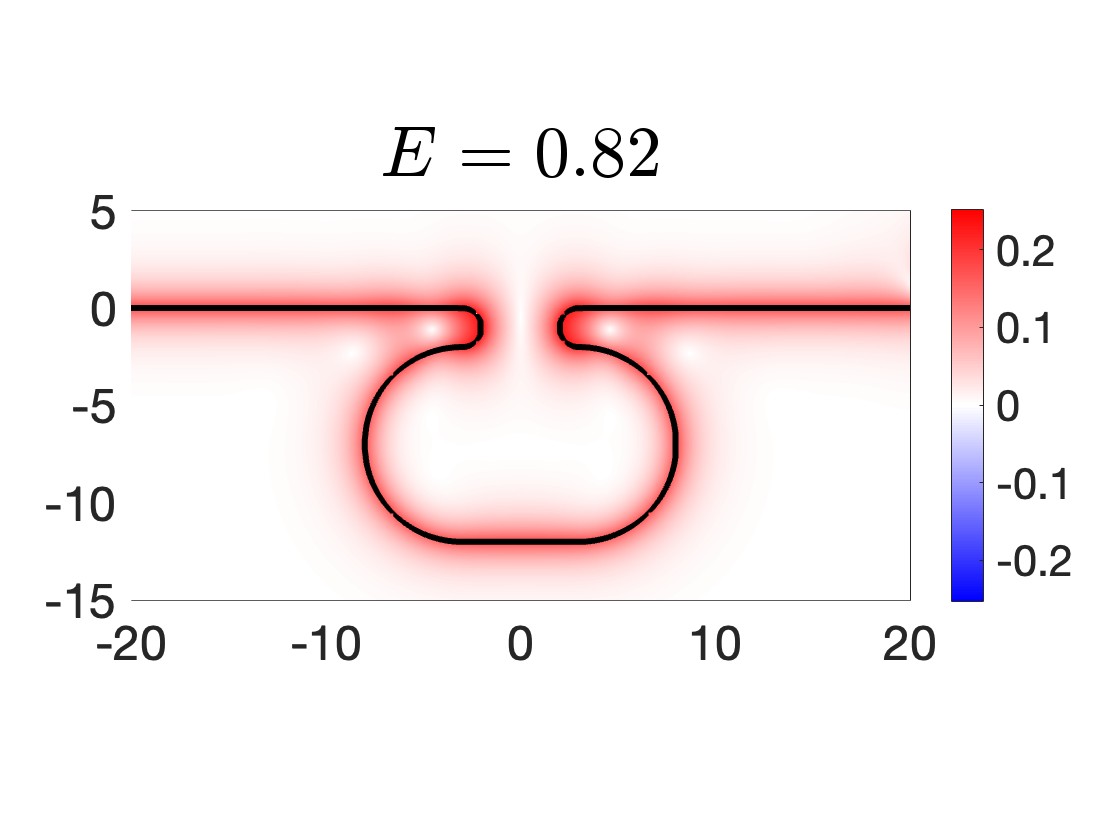}
    \includegraphics[scale=.16]{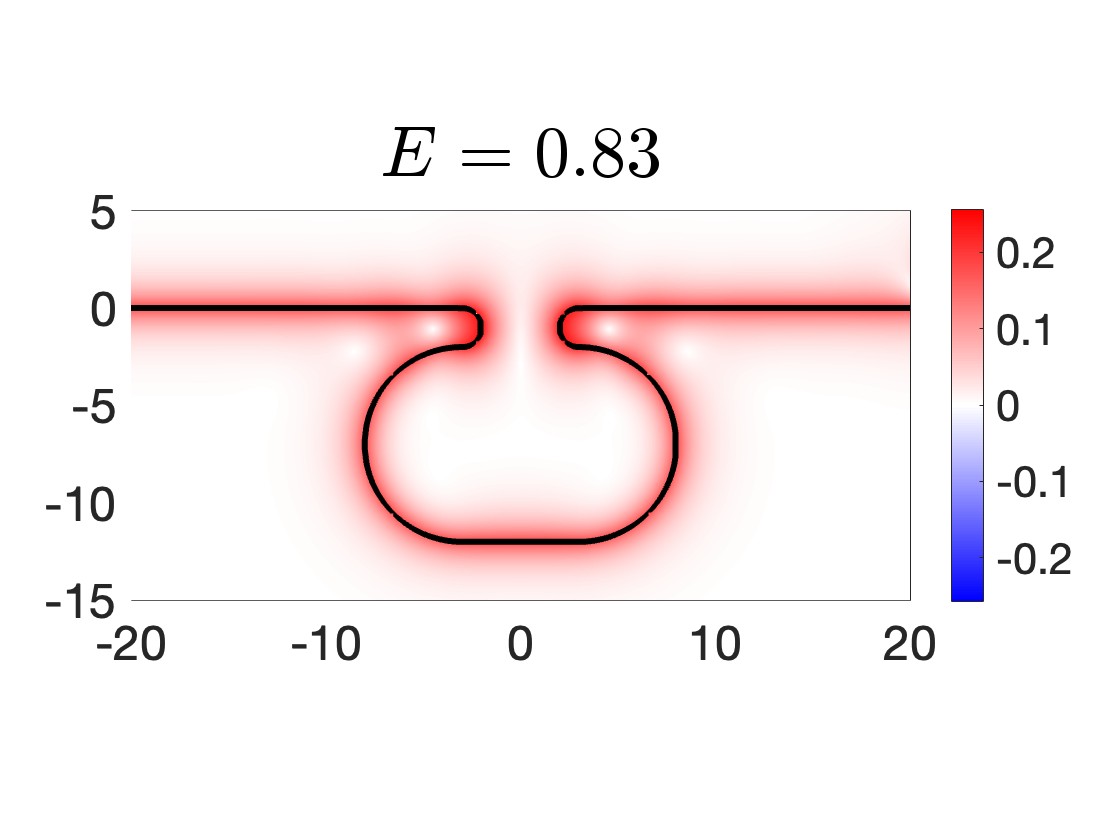}\\[-1.0cm]
    \includegraphics[scale=.16]{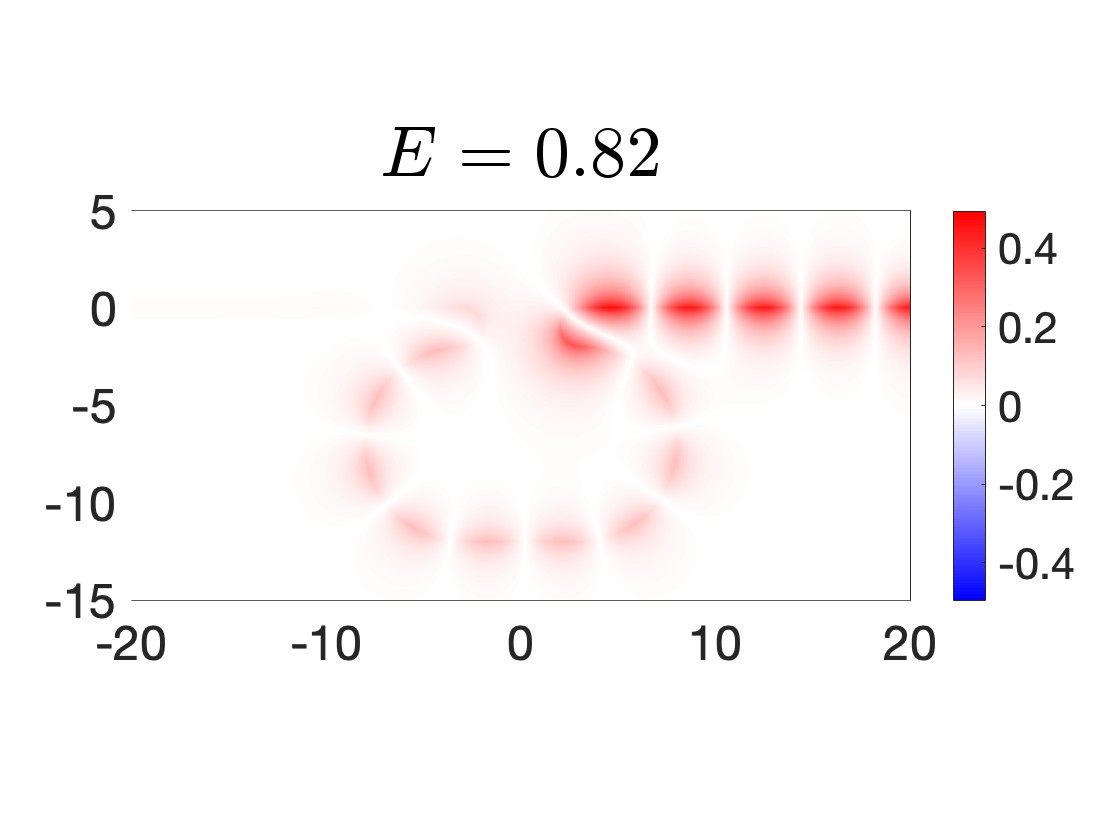}
    \includegraphics[scale=.16]{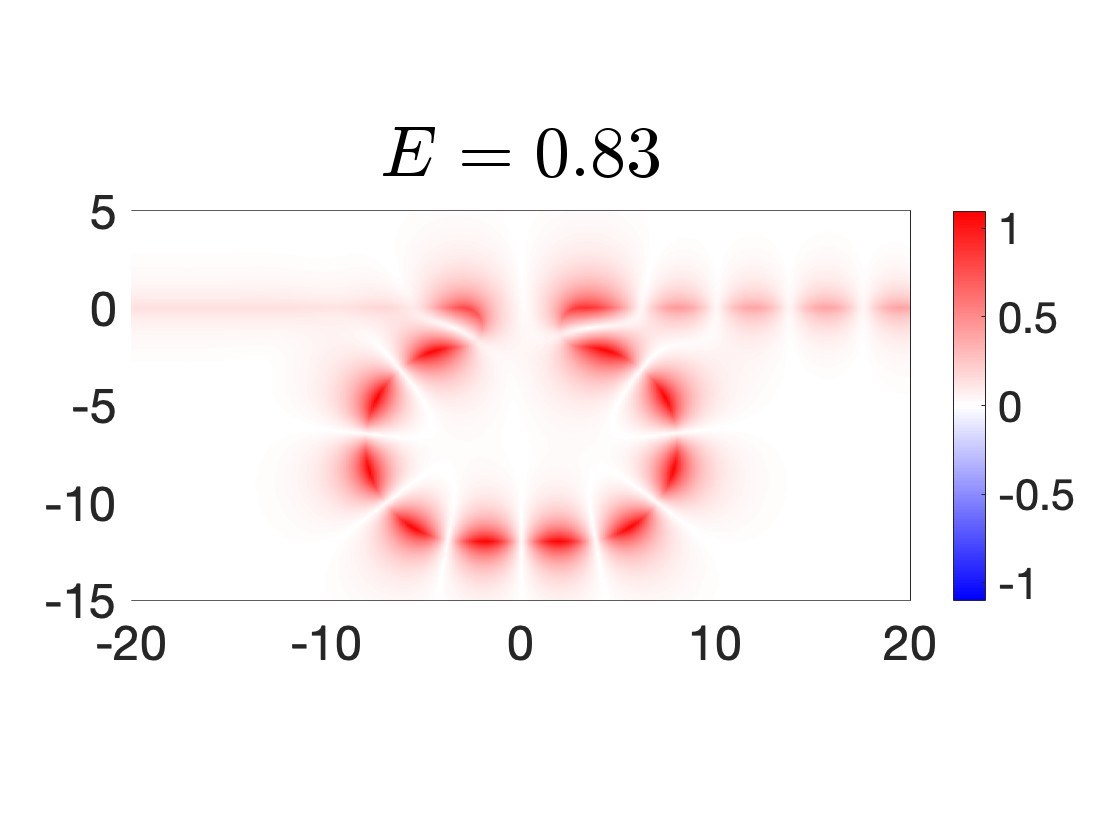}
    \caption{
    Top row: absolute value of (the first component of) the solution of the Dirac equation \eqref{eq:PDE}, with $m=1$ and two similar values of $E$. The interface is illustrated by the solid curve. The right-hand side is given by $f_1 = (0,0)$ and $f_2 = (\delta_{x_0},0)$, where the source location $x_0 = (22,1)$ is just above the interface. The solutions in the left and right panels are nearly identical. Bottom row: Absolute value of the solution of the corresponding Klein-Gordon equation \eqref{eq:KG_numerical}, with the same values of $m$ and $E$, and right-hand side $f_1 = 0, f_2 = \delta_{x_0}$. As opposed to the top row, the two solutions on the bottom row look qualitatively different; the signal at $E=0.82$ gets reflected while at $E=0.83$, it partially transmits.}
    \label{fig:resonance}
\end{figure}

It is natural to expect that this fundamental difference between the Dirac and Klein-Gordon equations could manifest itself in the spectral properties of the corresponding integral operators.
A solution that gets (partially) trapped in some bounded domain would likely correspond to an energy $E_*$ that is close to a resonance in the complex plane.


As a result, we would expect that for any fixed $x \in \mathbb{R}^2$, the function $E \mapsto u(x;E)$ would oscillate rapidly near such exceptional energy values $E_*$, where $u(x;E)$ denotes the solution of the Dirac equation \eqref{eq:PDE} with energy $E$.
But since solutions of the Dirac equation cannot get trapped (as discussed above), 
such exceptional values $E_*$ might not exist at all. On the other hand, solutions of the Klein-Gordon equation \eqref{eq:KG_numerical} 
can back-scatter, meaning that more of these exceptional energy values are likely to exist for the Klein-Gordon operator. 

\begin{figure}
    \centering
    \includegraphics[scale=.16]{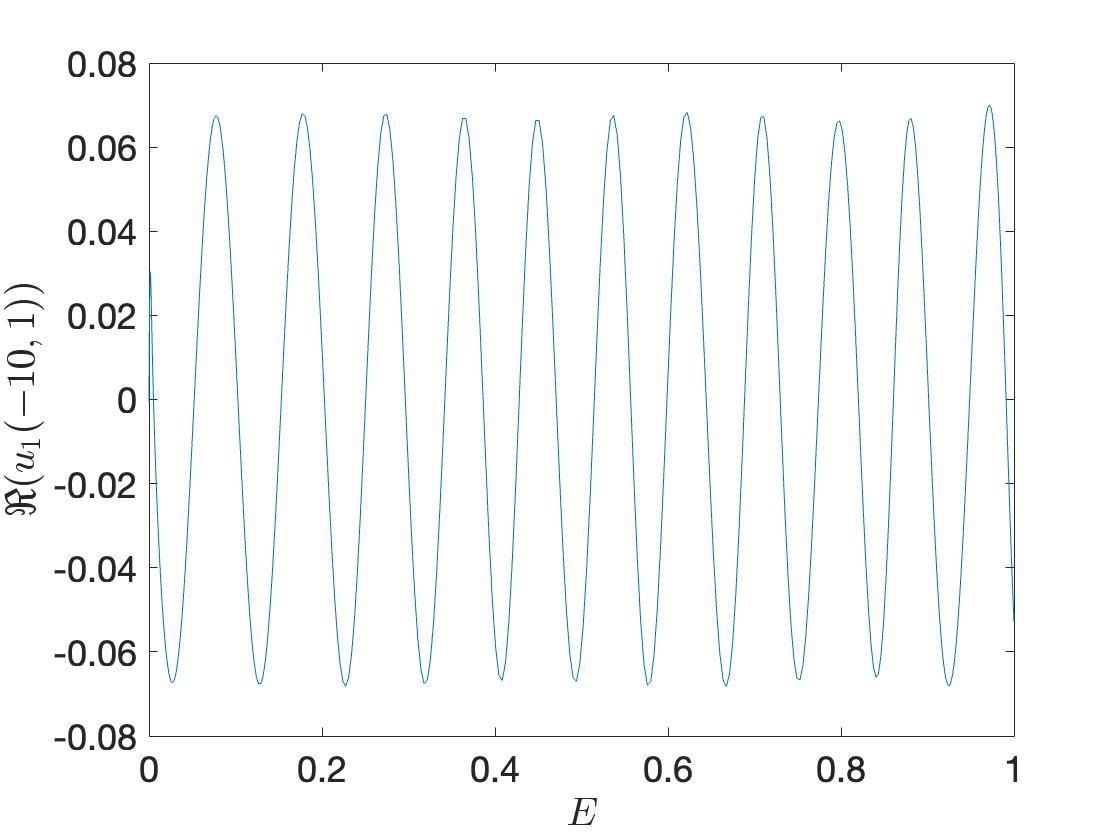}
    \includegraphics[scale=.16]{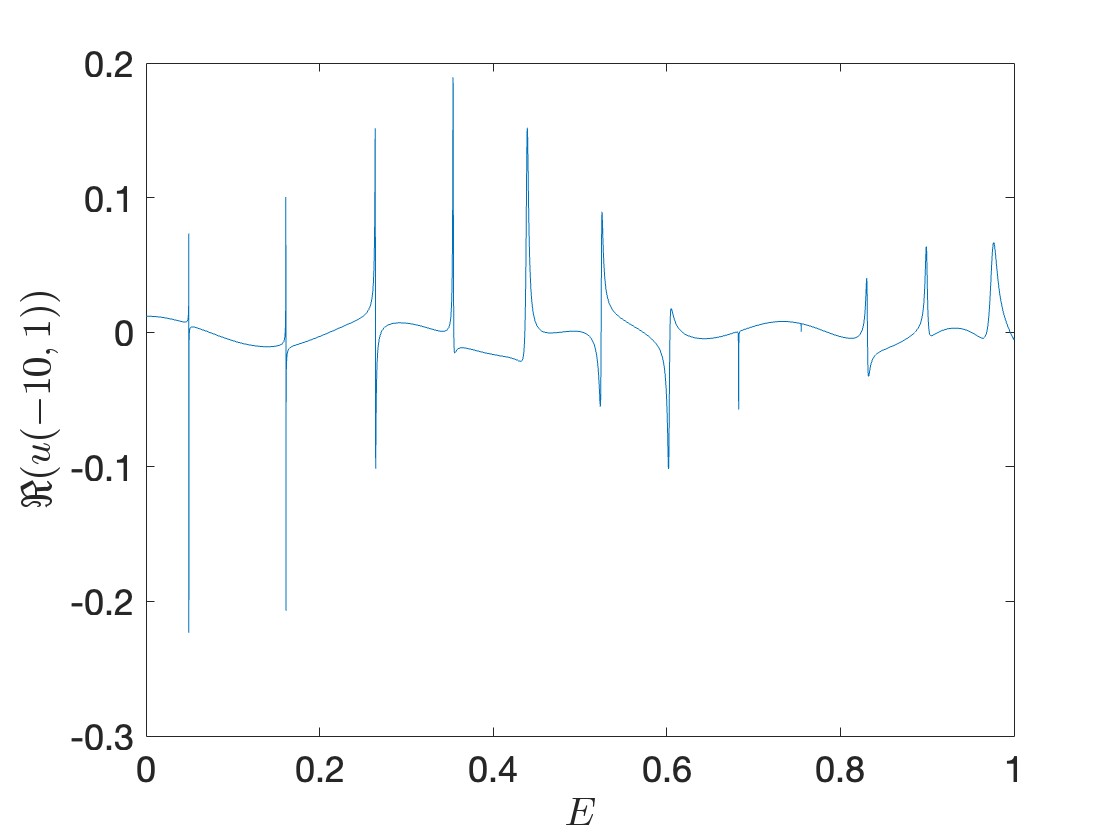}
    \caption{Left: a component of the solution of the Dirac equation as a function of $0<E<1$ for $m=1$, interface as in Figure \ref{fig:resonance}, and right-hand side $f_1 = (0,0), f_2 = (\delta_{x_0},0), x_0 = (22,1)$.  Right: the corresponding Klein-Gordon solution with right-hand side $f_1=0, f_2=\delta_{x_0}$.}
    \label{fig:sweep_E}.
\end{figure}

We support these predictions with a numerical example. The left panel of Figure \ref{fig:sweep_E} shows a component of the Dirac solution $u$ 
as a function of $E$, with $m=1$ fixed and interface given by Figure \ref{fig:resonance}. The solution $u$ is evaluated at the point $(-10,1)$, which lies just above the interface.
As expected, the solution is smooth in $E$ with no resonance detected.
The right panel shows the corresponding Klein-Gordon solution, which exhibits several sharp peaks.
This direct comparison between the Dirac and Klein-Gordon solutions illustrates that there are resonances near the real axis for the Klein-Gordon equation that are absent from the corresponding Dirac equation.
\medskip

We provide numerical examples of the two-interface problem \eqref{eq:PDE2}; see Figures \ref{fig:scattering} and \ref{fig:scattering_vdelta}.
The two interfaces are illustrated by the solid curves in the plots, and we let $d$ denote the distance between the two curves.
In each example, the source is placed a fixed distance away from the left branch $\Gamma_1$.
One can verify analytically that for any $\delta \ne 0$,
\begin{align*}
    \lim_{t \rightarrow -\infty} u(\gamma_1(t) + \delta \hat{n}_1 (t)) &=0\\
    \lim_{t \rightarrow \infty}
    \Big\{ u(\gamma_1(t) + \delta \hat{n}_1 (t)) - m e^{-m|\delta|} e^{iEt} \tilde{\rho}_1^{(1)} (E) \frac{1}{\sqrt{2}} \begin{pmatrix}
        1\\
        e^{i\phi_1}
    \end{pmatrix} \Big\} &= 0\\
    \lim_{t \rightarrow -\infty}
    \Big\{ u(\gamma_2(t) + \delta \hat{n}_2 (t)) - m e^{-m|\delta|} e^{-iEt} \tilde{\rho}_2^{(2)} (-E) \frac{1}{\sqrt{2}} \begin{pmatrix}
        e^{-i\phi_2}\\
        -1
    \end{pmatrix} \Big\} &= 0\\
    \lim_{t \rightarrow \infty} u(\gamma_2(t) + \delta \hat{n}_2 (t)) &=0
\end{align*}
where $\phi_1$ (resp. $\phi_2$) is the angle between the right branch of $\Gamma_1$ (resp. left branch of $\Gamma_2$) and the horizontal axis, $\tilde{\rho}_j^{(j)}$ denotes the Fourier transform of the $j$th entry of $\rho_j$,
and we have assumed that $m>0$ for concreteness.
It follows that the coefficient
\begin{align*}
    T_R := \frac{|\tilde{\rho}_1^{(1)} (E)|^2}{|\tilde{\rho}_1^{(1)} (E)|^2+|\tilde{\rho}_2^{(2)} (-E)|^2}
\end{align*}
measures the amount of signal that propagates from left to right along $\Gamma_1$. As expected, $T_R$ is monotonically increasing in $d \in (0, 1]$, as demonstrated by the bottom-right panels of Fig. \ref{fig:scattering} and \ref{fig:scattering_vdelta}.

In Fig. \ref{fig:scattering}, we analyze the effect of interface separation by holding the top interface ($\Gamma_2$) fixed while translating the bottom interface ($\Gamma_1$) in the vertical ($x_2$) direction. For small $d$, the interfaces are nearly tangent to one another at the origin and $T_R$ becomes very small (but not vanishing) in the limit $d\to0$ ($\approx 10^{-5}$). 
%
%
In Figure \ref{fig:scattering_vdelta}, the interfaces $\Gamma_j$ are parametrized by 
\begin{align}\label{eq:vdelta}
    \gamma_j (t) = \left( \frac{t}{\sqrt{1+a^2}},\, (-1)^j a \sqrt{\frac{t^2}{1+a^2}+d^2}\, \right),
\end{align}
with $a = 1/2$ fixed. In this case, the interfaces are still separated by the distance $d$, though they are no longer tangent in the $d \to 0$ limit (with $\Gamma_j$ resembling the graph of $x \mapsto (-1)^j a |x|$). As a result, $T_R$ converges to a larger value as $d\to0$.  

One could also ask how $\lim_{d \to 0} T_R$ depends on the slope value $a$. In Figure \ref{fig:t_vs_theta}, we plot this limit as a function of the angle $\theta$ of the top right branch of $\Gamma$ with respect to the horizontal (that is, $\tan \theta = a$). As expected, the limit is equal to $0.5$ when $\theta=\pi/4$ (indicated by the dashed lines in the figure). This limit models beam splitters, which have been analyzed in a number of contexts \cite{hammer2013dynamics,qiao2014current,bal2023mathematical}.


\begin{figure}[ht!]
    \centering
    \includegraphics[scale=0.16]{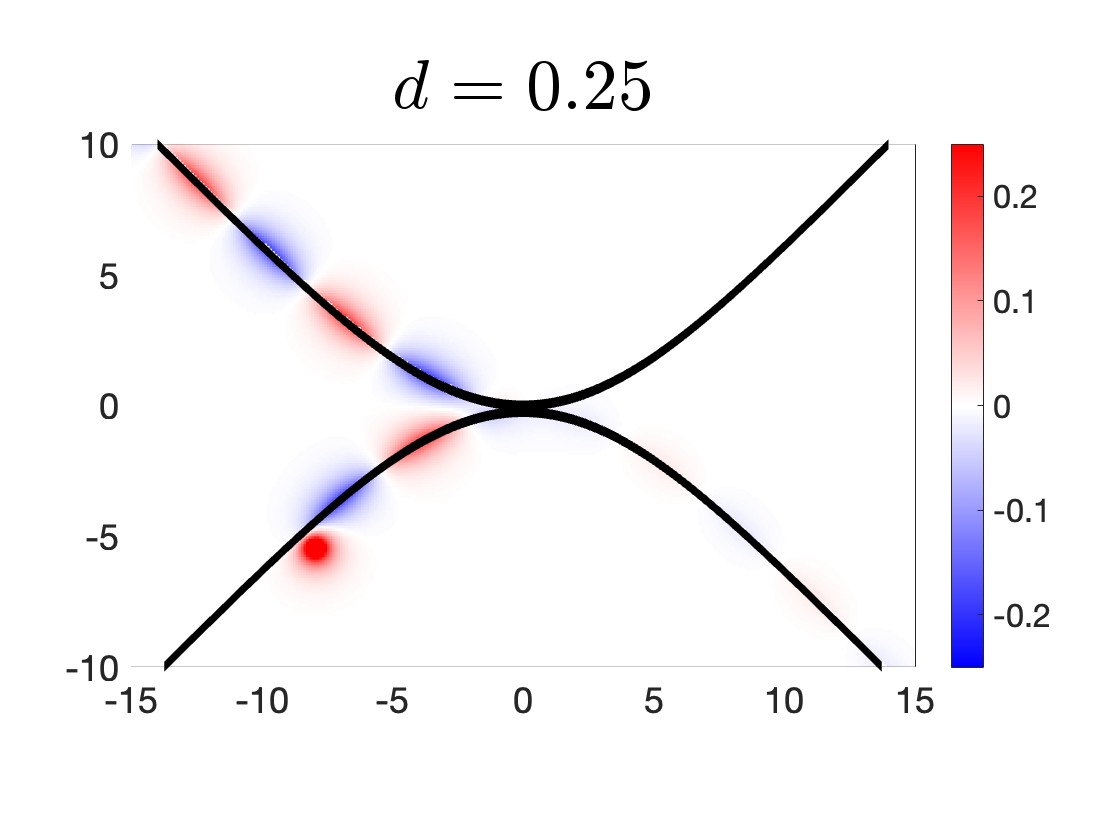}
    \includegraphics[scale=0.16]{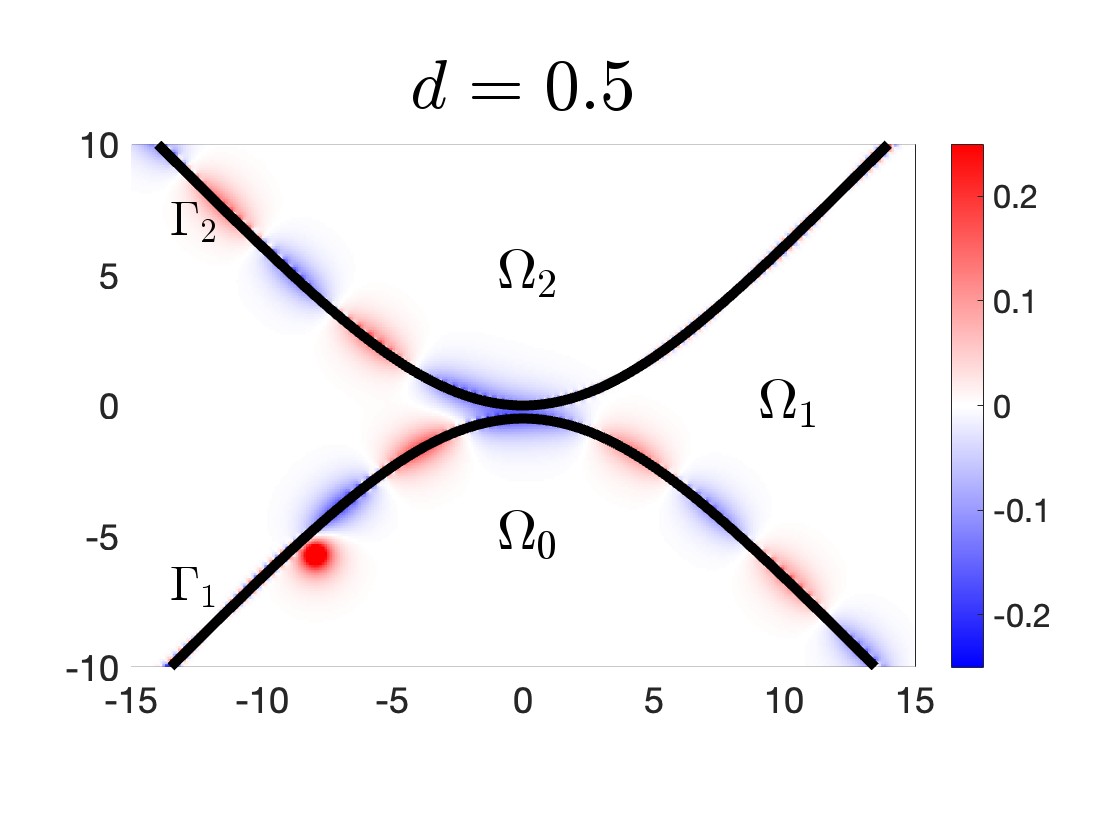}
    \includegraphics[scale=0.16]{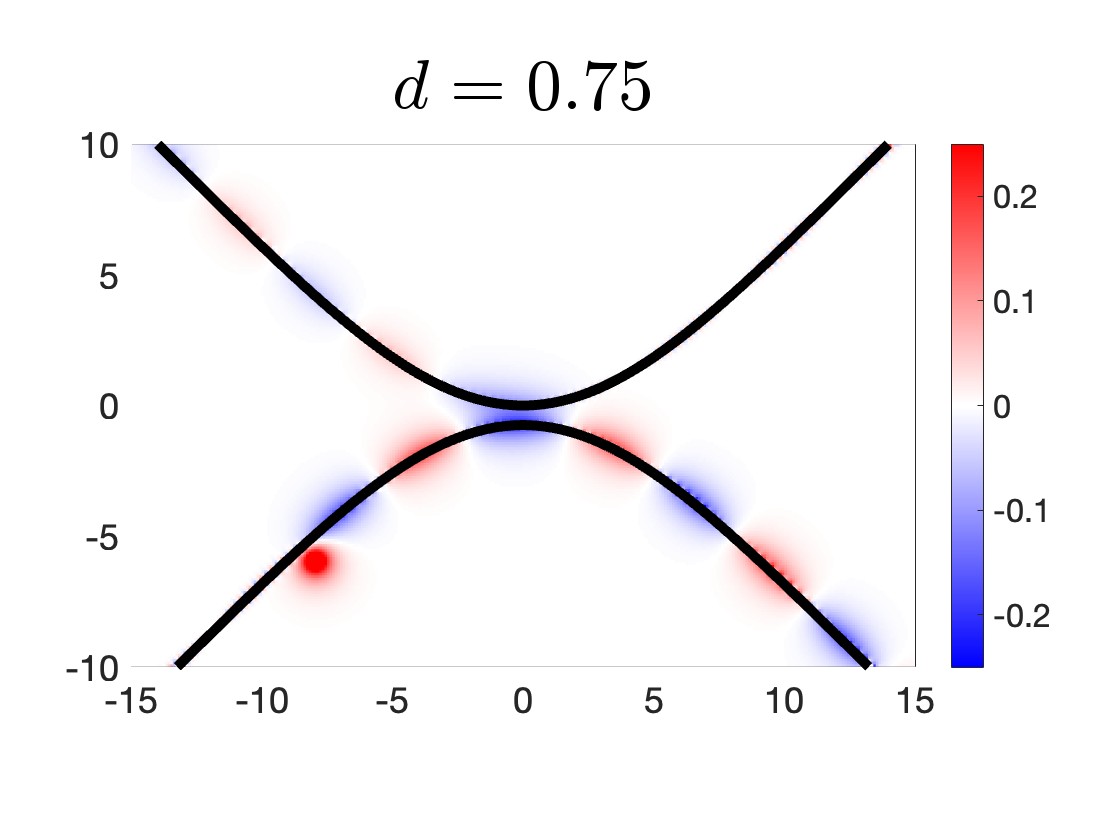}
    \includegraphics[scale=0.16]{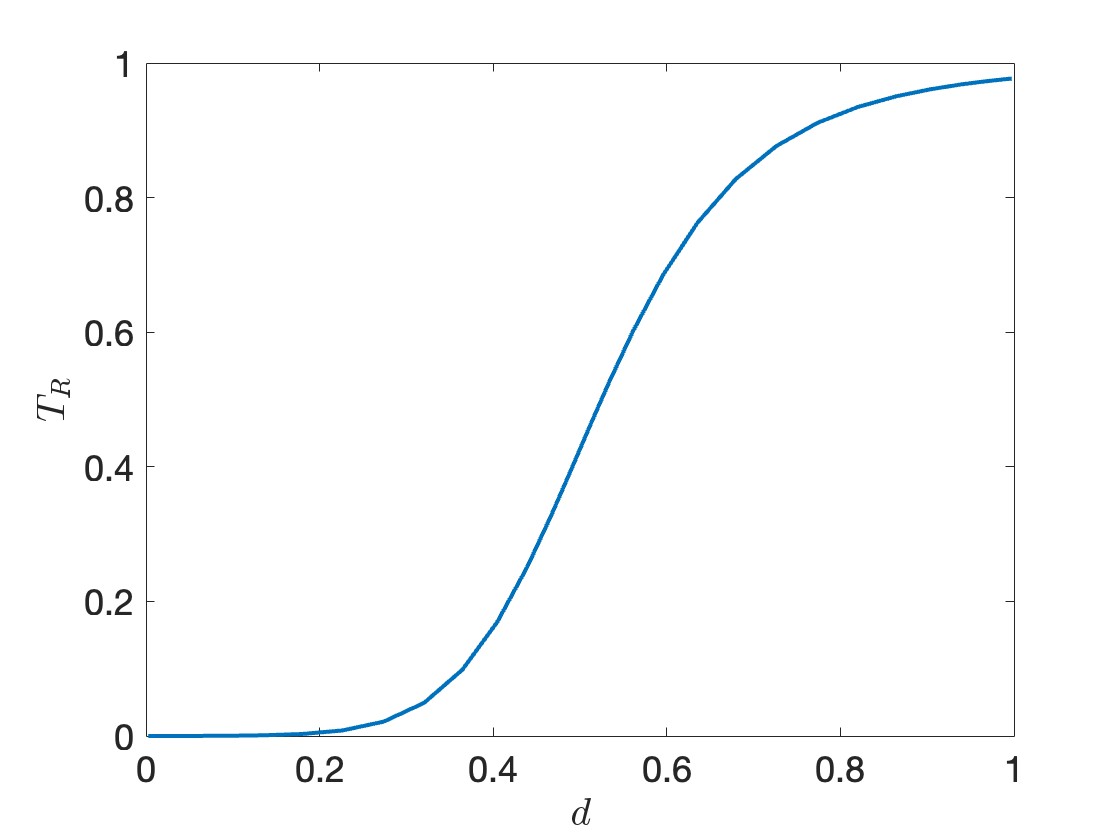}
    \caption{Top row and bottom left: plots of $\Re (u_1)$ for the two-interface problem with various interface separations $d$. Bottom right panel: The transmission coefficient $T_R$ as a function of $d$. In these examples, $m=2$, $E=0.8$, and the right-hand side is given by $f_0 = (\delta_{x_0},0), f_1 = (0,0), f_2 = (0,0)$ with source location 
    $x_0$ at the center of the red dot (at a fixed distance away from $\Gamma_1$).}
    \label{fig:scattering}
\end{figure}

\begin{figure}[ht!]
    \centering
    \includegraphics[scale=0.16]{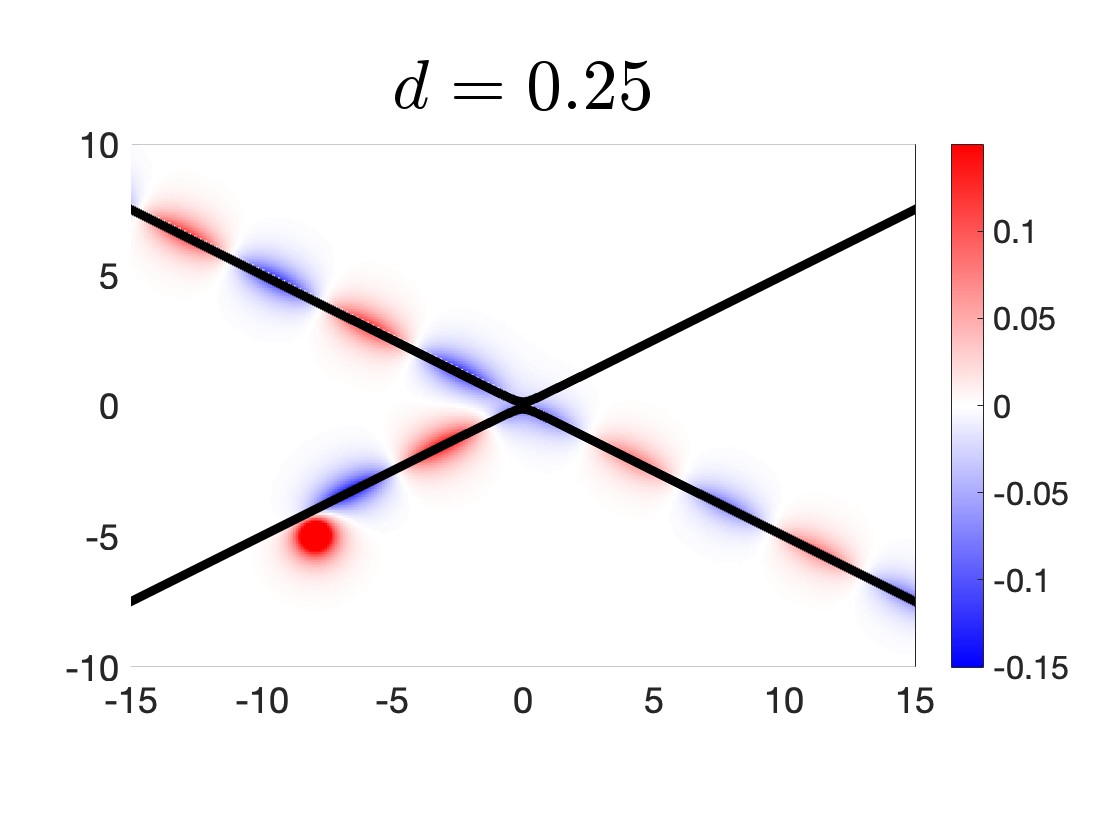}
    \includegraphics[scale=0.16]{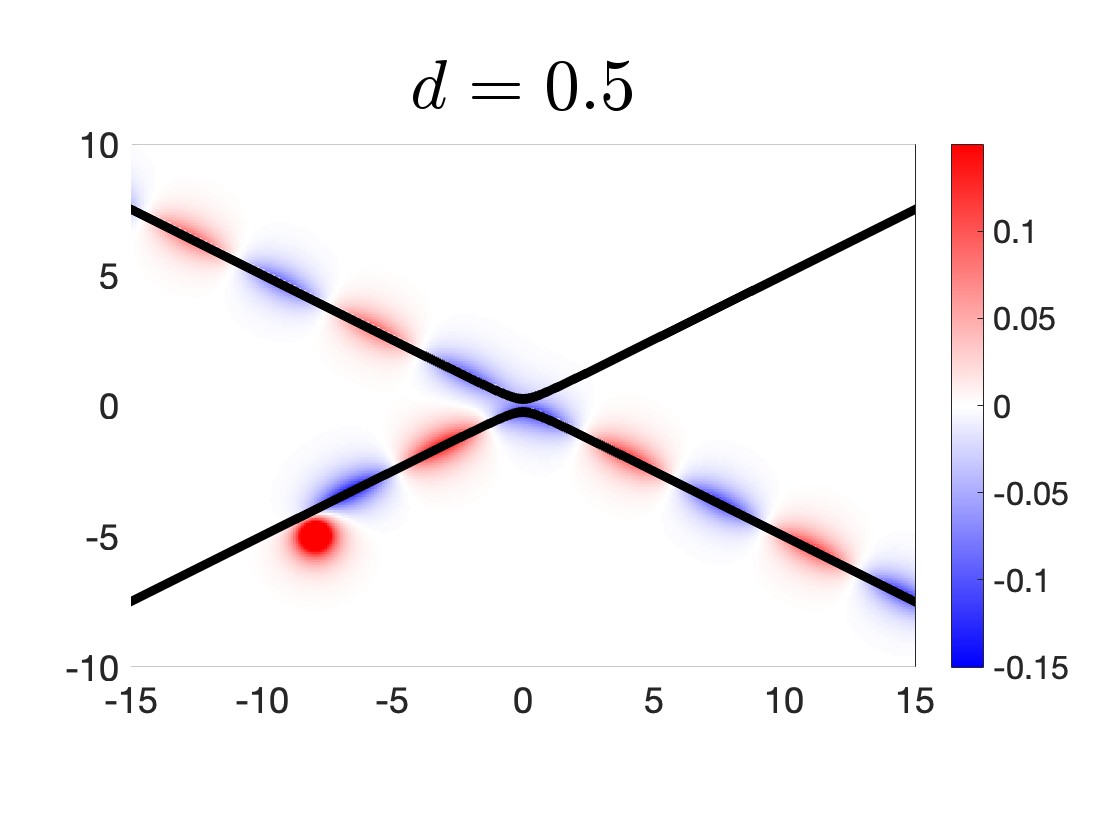}
    \includegraphics[scale=0.16]{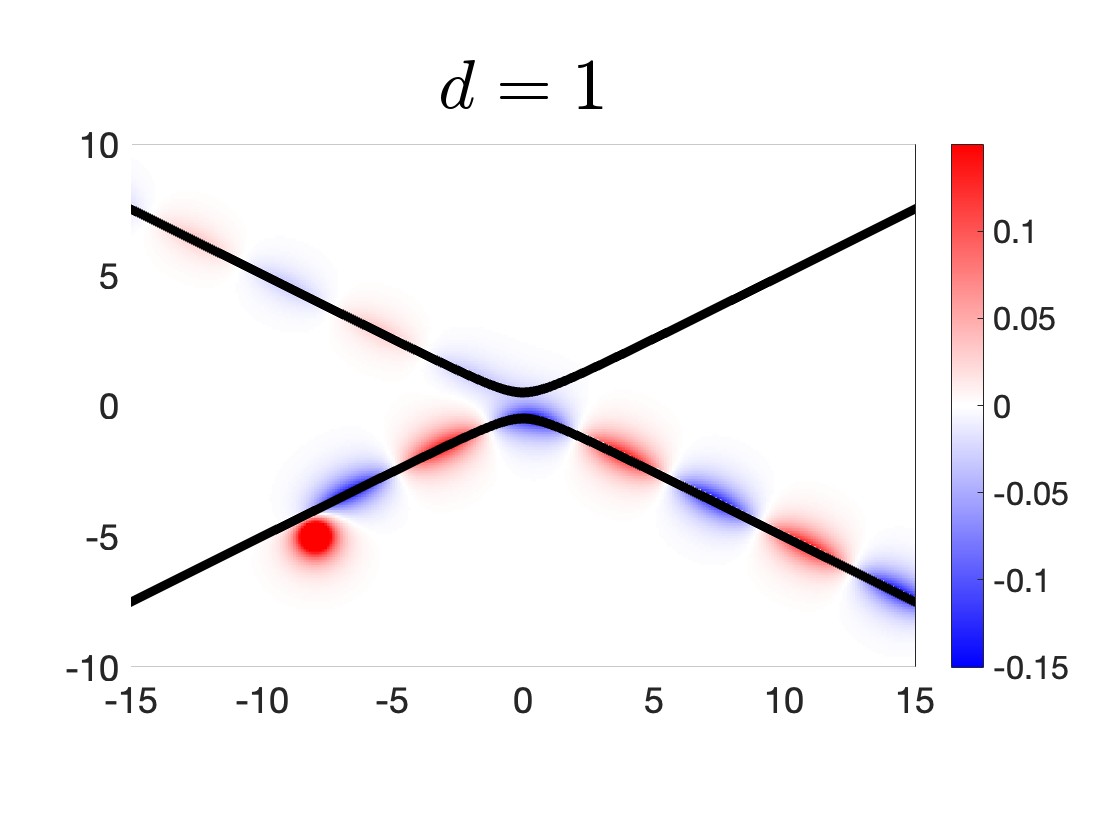}
    \includegraphics[scale=0.16]{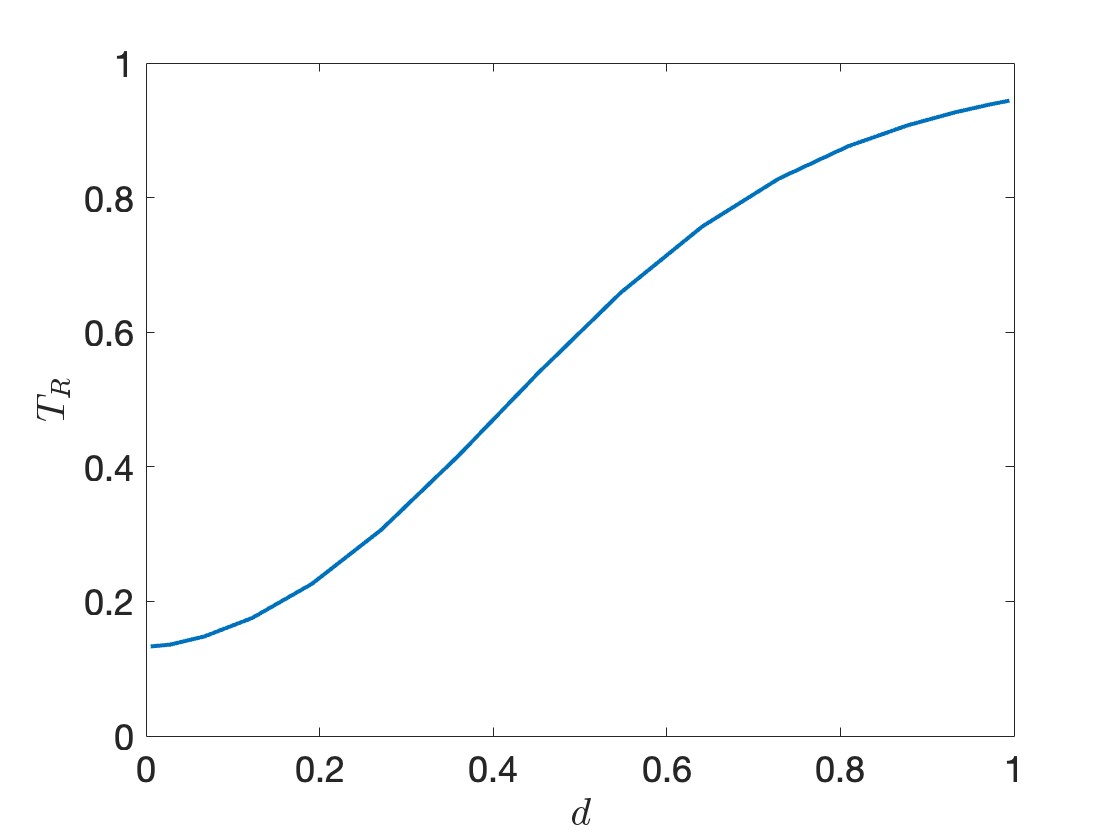}
    \caption{Same set-up as in Figure \ref{fig:scattering}, only with 
    different interfaces.}
    \label{fig:scattering_vdelta}
\end{figure}

\begin{figure}[ht!]
    \centering
    \includegraphics[scale=0.17]{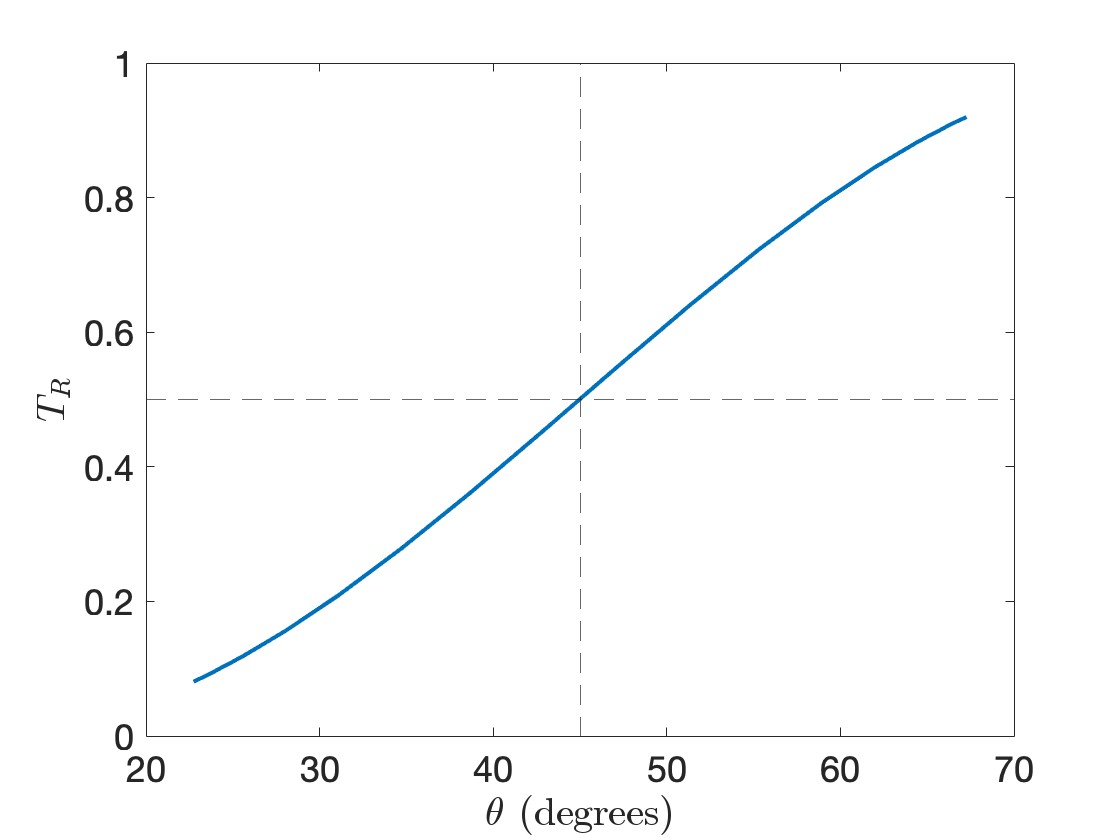}
    \caption{The $d \to 0$ limit of the transmission coefficient as a function of $\theta$, for interfaces $\Gamma_j$ parametrized by \eqref{eq:vdelta} with $a = \tan \theta$. }
    \label{fig:t_vs_theta}
\end{figure}

\section{Natural extensions and applications}\label{sec:extensions}
One could more generally consider the following Dirac equation, 
\begin{align}\label{eq:PDE_2ms}
\begin{cases}
-i \sigma_1 \partial_{x_1} u(x) -i \sigma_2 \partial_{x_2} u (x)+m_2 \sigma_3 u(x) - E u = f_2(x), \quad   & x \in \Omega_2,\\
-i \sigma_1 \partial_{x_1} u(x) -i \sigma_2 \partial_{x_2} u (x)-m_1 \sigma_3 u(x) - E u = f_1(x), \quad   & x \in \Omega_1,\\
\lim_{y\to x} u(y) = \lim_{z \to x} u(z) \quad & y \in \Omega_2, z \in \Omega_1, x \in \Gamma,
\end{cases}
\end{align}
where we assume that 
$0 < E^2 < m_1^2,m_2^2$.
This is an extension of the case $m_1= m_2$ we considered until now.
Now, we represent the solution by $u = u_i + u_s$, where
\begin{align}\label{eq:uius_2ms}
     u_i (x) := \begin{cases}
        \int_{\Omega_2} G_2 (x,y) f_2 (y) {\rm d} y, & x \in \Omega_2\\
        \int_{\Omega_1} G_1 (x,y) f_1 (y) {\rm d} y, & x \in \Omega_1
    \end{cases}
    \qquad
    u_s (x) := \begin{cases}
        \int_\mathbb{R} G_2 (x, \gamma (t)) \mu (t) {\rm d} t, & x \in \Omega_2\\
        \int_\mathbb{R} G_1 (x, \gamma (t)) \mu (t) {\rm d} t, & x \in \Omega_1
    \end{cases}
\end{align}
and
\begin{align*}
    G_j (x,y) := \frac{1}{2\pi} (-i \partial_{x_1} \sigma_1 -i \partial_{x_2} \sigma_2 + (-1)^j m_j \sigma_3 + E) K_0 (\omega_j |x-y|),
\end{align*}
with $\omega_j := \sqrt{m_j^2 - E^2}$.
We again wish to solve for $\mu$ such that $u$ is continuous across $\Gamma$. 
Recalling the shorthand $\hat{n}(t) \cdot \sigma := \hat{n}_1 (t) \sigma_1 + \hat{n}_2 (t) \sigma_2$,
standard properties of layer potentials imply that
\begin{align}\label{eq:jumpus_2ms}
\begin{split}
    [[u_s]](\gamma(t)) &= i(\hat{n}(t) \cdot \sigma) \mu (t) - i (\hat{n}(t) \cdot \sigma) (\cS'_{\omega_2} [\mu](t) - \cS'_{\omega_1} [\mu](t))\\
    &-i(\gamma'(t) \cdot \sigma) \cT'_{\omega_1, \omega_2} [\mu](t)\\
    &+\sigma_3 (m_2 \cS_{\omega_2} [\mu] (t) + m_1 \cS_{\omega_1} [\mu] (t)) + E (\cS_{\omega_2}[\mu](t) - \cS_{\omega_1} [\mu] (t)),
\end{split}
\end{align}
where the operators $\cS_\omega$, $\cS'_\omega$ and $\cT'_{\omega_1, \omega_2}$ are defined by
\begin{align*}
    \cS_\omega [\mu] (t) :&=
    \frac{1}{2\pi}\int_{\mathbb{R}} K_0 (\omega |\gamma (t)-\gamma (t')|) \mu (t') {\rm d} t',\\
    \cS'_\omega [\mu] (t):&=
    \frac{-\omega}{2\pi}\int_{\mathbb{R}}\frac{\hat{n} (t) \cdot (\gamma(t)-\gamma(t'))}{|\gamma (t)-\gamma (t')|}
    K_1 (\omega |\gamma (t)-\gamma (t')|) \mu (t') {\rm d} t',\\
    \cT'_{\omega_1, \omega_2} [\mu] (t):&=
    \frac{-1}{2\pi}\int_{\mathbb{R}} \frac{\gamma' (t) \cdot (\gamma(t)-\gamma(t'))}{|\gamma (t)-\gamma (t')|} \times \\
    &\qquad (\omega_2 K_1 (\omega_2 |\gamma (t)-\gamma (t')|) -\omega_1 K_1 (\omega_1 |\gamma (t)-\gamma (t')|)) \mu (t') {\rm d} t'.
\end{align*}
Let us first consider the flat interface $\gamma (t) := (t,0)$. 
Then $\cS'_\omega = 0$ while
\begin{align*}
    \cF_{t \to \xi} \{\cS_\omega [\mu] (t)\} = \frac{\tilde{\mu}(\xi)}{2 \sqrt{\xi^2+\omega^2}},\qquad
    \cF_{t \to \xi} \{\cT'_{\omega_1,\omega_2} [\mu] (t)\} = \frac{i\xi}{2} \left(\frac{1}{\sqrt{\xi^2+\omega_2^2}}-\frac{1}{\sqrt{\xi^2+\omega_1^2}}\right)\tilde{\mu}(\xi).
\end{align*}
Therefore, the continuity condition
$[[u_s]](\gamma (t)) + [[u_i]](\gamma (t)) =0$ implies that
\begin{align*}
    \bxi (\xi) \tilde{\mu} (\xi) = i\sigma_2 \cF_{t \to \xi}\{ [[u_i]] (\gamma (t)\},
\end{align*}
where
\begin{align*}
    \bxi (\xi) := 1-\frac{\xi}{2}\sigma_3\Big(\frac{1}{\sqrt{\xi^2+\omega_2^2}} - \frac{1}{\sqrt{\xi^2+\omega_1^2}}\Big)
    &+\frac{1}{2}\sigma_1 \Big(\frac{m_2}{\sqrt{\xi^2+\omega_2^2}}+ \frac{m_1}{\sqrt{\xi^2+\omega_1^2}}\Big)\\
    &- i\frac{E}{2} \sigma_2\Big(\frac{1}{\sqrt{\xi^2+\omega_2^2}} - \frac{1}{\sqrt{\xi^2+\omega_1^2}}\Big).
\end{align*}
To obtain the above, we applied \eqref{eq:jumpus_2ms}, multiplied both sides of $[[u_s]](\gamma (t)) = -[[u_i]](\gamma (t))$ by $-i \hat{n} (t) \cdot \sigma = -i\sigma_2$, and used the property $\sigma_i \sigma_j + \sigma_j \sigma_i = 2\delta_{ij}$.
We observe that $\bxi (\xi)$ is singular when $\xi = \pm E$, with (assuming $0 < m_1, m_2$ for concreteness)
\begin{align*}
    \bxi (-E) = \begin{pmatrix}
        1+\alpha & 1-\alpha\\
        1+\alpha & 1-\alpha
    \end{pmatrix},
    \qquad
    \bxi (E) = \begin{pmatrix}
        1-\alpha & 1-\alpha\\
        1+\alpha & 1+\alpha
    \end{pmatrix},
    \qquad \alpha := \frac{E}{2} \Big(\frac{1}{m_2} - \frac{1}{m_1}\Big).
\end{align*}
We again must invert these singularities. In contrast with the $m_1 = m_2$ case, the singular matrices $\bxi (-E)$ and $\bxi (E)$ are not the same, and thus must be preconditioned separately. We propose to multiply $\bxi (\xi)$ by the function $1 + i(m_1^2+m_2^2) (\mmat_L \tilde{R}_L (\xi) - \mmat_R \tilde{R}_R (\xi))$, where
\begin{align*}
    \mmat_L :&= \begin{pmatrix}
        1-\alpha & -(1-\alpha)\\
        -(1+\alpha) & 1+\alpha
    \end{pmatrix},
    \qquad
    \mmat_R := \begin{pmatrix}1-\alpha & -(1+\alpha)\\
    -(1-\alpha) & 1+\alpha
    \end{pmatrix},\\
    \tilde{R}_L (\xi) :&= \frac{1}{2E (\xi+E)}, \qquad \tilde{R}_R (\xi) := \frac{1}{2E (\xi - E)}.
\end{align*}
That is, we will solve
\begin{align}\label{eq:solverho_2ms_ft}
    \bxi (\xi) (1 + i(m_1^2+m_2^2) (\mmat_L \tilde{R}_L (\xi) - \mmat_R \tilde{R}_R (\xi))) \tilde{\rho} (\xi) = i\sigma_2 \cF_{t \to \xi}\{ [[u_i]] (\gamma (t))\}
\end{align}
for $\tilde{\rho}$, where the matrix-valued function multiplying $\tilde{\rho}$ on the above left-hand side is smooth in $\xi$ with eigenvalues bounded away from zero.

Observe that the inverse Fourier transforms of $\tilde{R}_{L,R}$ are
\begin{align*}
    \check{R}_L (t) := \frac{1}{2iE} H(-t) e^{-iEt}, \qquad \check{R}_R (t) := \frac{-1}{2iE}H(t) e^{iEt},
\end{align*}
with $H$ the Heaviside step function. 
Therefore, in real space, \eqref{eq:solverho_2ms_ft} becomes
\begin{align}\label{eq:solverho_2ms_flat}
\begin{split}
    (1 +i \sigma_3 \cT'_{\omega_1, \omega_2}
    &+\sigma_1 (m_2 \cS_{\omega_2} + m_1 \cS_{\omega_1})
    -i\sigma_2 E (\cS_{\omega_2} - \cS_{\omega_1}))\\
    (1&+i(m_1^2+m_2^2)(M_L \cR_L+M_R \cR_R)) [\rho] (t) = i\sigma_2 [[u_i]] (\gamma (t)),
\end{split}
\end{align}
where the operators $\cR_{L,R}$ are defined by
\begin{align*}
    \cR_{L} [\rho] (t) :&= \frac{1}{2iE}\int_{\mathbb{R}} H(-(t-t')) e^{-iE(t-t')} \rho(t') {\rm d} t' = \frac{1}{2iE}\int_{t}^{\infty} e^{-iE(t-t')} \rho(t') {\rm d} t',\\
    \cR_{R} [\rho] (t) :&= \frac{1}{2iE}\int_{\mathbb{R}} H(t-t') e^{iE(t-t')} \rho(t') {\rm d} t' = \frac{1}{2iE}\int_{-\infty}^t e^{iE(t-t')} \rho(t') {\rm d} t'.
\end{align*}
The equation \eqref{eq:solverho_2ms_flat} is for the flat interface case. More generally,
the above derivation motivates the following integral equation for $\rho$,
\begin{align}\label{eq:solverho_2ms}
    \begin{split}
        (1 &- (\cS'_{\omega_2} - \cS'_{\omega_1})
        +i \sigma_3 \cT'_{\omega_1, \omega_2}
        -i (\hat{n} \cdot \sigma)\sigma_3 (m_2 \cS_{\omega_2} + m_1 \cS_{\omega_1})\\
        &-i(\hat{n} \cdot \sigma)E (\cS_{\omega_2} - \cS_{\omega_1}))
        (1+i(m_1^2+m_2^2)(M_L \cR_L+M_R \cR_R)) [\rho] (t) \\
        &= i\hat{n} \cdot \sigma [[u_i]] (\gamma (t)),
    \end{split}
\end{align}
where the first factor on the left-hand side is obtained by multiplying \eqref{eq:jumpus_2ms} by $-i \hat{n} \cdot \sigma$.
Once the solution $\rho$ of \eqref{eq:solverho_2ms} is obtained, we then set
\begin{align*}
    \mu (t) := (1+i(m_1^2+m_2^2)(\mmat_L \cR_L+\mmat_R \cR_R)) [\rho] (t).
\end{align*}
The function $u:= u_i + u_s$ then satisfies the PDE \eqref{eq:PDE_2ms}, where $u_i$ and $u_s$ are defined in \eqref{eq:uius_2ms} with the latter depending on $\mu$.

\appendix

\section{Proofs of main analytical results}\label{sec:proofs}
This section contains proofs of the results in Section \ref{sec:analytical}. 
We will be working with the weighted $L^2$ spaces that were defined in Section \ref{subsec:bie} as follows. 
For $\alpha\in \mathbb{R}$, define $w_\alpha (t) := e^{\alpha|t|}$ and $L^2_\alpha := \{\rho \in L^2 (\mathbb{R}) : w_\alpha \rho \in L^2(\mathbb{R})\}$.
Throughout this section, if $A :\mathcal{H}_1 \rightarrow \mathcal{H}_2$ is a bounded linear operator, we denote its operator norm by $\norm{A} := \sup_{\rho \in \mathcal{H}_1, \norm{\rho}_{\mathcal{H}_1} =1} \norm{A \rho}_{\mathcal{H}_2}$, where $\norm{\cdot}_{\mathcal{H}_j}$ is the norm on $\mathcal{H}_j$.

Before directly proving our main results, we collect useful properties of various integral operators related to $\cL$ and $\cP$ with Lemmas \ref{lemma:invLP} and \ref{lemma:MHS} below.
Recall that $\cL$ and $\cP$ are defined by
\begin{align*}
    \cL [\rho] (t) :&= \rho(t) -2m \cobmat^* (t) (i\hat{n}(t) \cdot \sigma) \sigma_3 
    \frac{1}{2\pi} \int_{\mathbb{R}} K_0 (\omega |\gamma (t)-\gamma (t')|) \cobmat (t') \rho (t') {\rm d} t',\\
    \cP [\rho] (t) :&= \rho (t) + \mmat \frac{m^2}E \int_{\Rm} e^{iE|t-t'|} \rho (t'){\rm d}t',
\end{align*}
where $\mmat$ is defined in~\eqref{eqn:mdef}.

For ease of notation, we will define the operator $Q$ by
\begin{align}\label{eq:Q}
    Q\rho (t) := \frac{m^2}E \int_{\Rm} e^{iE|t-t'|} \rho (t'){\rm d}t', \qquad \rho \in L^2 (\mathbb{R}),
\end{align}
so that $\cP = 1 + \mmat Q$.
In the following proofs, we will show that the equation 
\begin{align}\label{eq:solverho_pfs}
    \cL\cP [\rho] (t) =i \cobmat^* \hat{n} \cdot \sigma [[u_i]] (\gamma (t))
\end{align}
that we wish to solve can be written as a compact perturbation of the corresponding flat-interface problem.
To this end, define the operators 
\begin{align}\label{eqn:flatl}
\flatL := 1+2m\sigma_3 \cSflat
\end{align}
and $\cSflat$ by
\begin{align*}
    \cSflat [\mu] (t) := \frac{1}{2\pi} \int_{\mathbb{R}} K_0 (\omega |t-t'|) \mu (t') {\rm d} t',
\end{align*}
so that $\flatL$ and $\cSflat$ are the operators $\flatL$ and $\cS_\omega$ in the case of a flat interface.
We then have
\begin{lemma}\label{lemma:invLP}
Fix $m_0, E_0 \in \mathbb{R}$ such that $0 < |E_0| < |m_0|$, and set 
$m=\lambda m_0$ and $E = \lambda E_0$ for $\lambda \in \mathbb{R}$.
For all 
$\alpha>0$ sufficiently small, 
the operator $(\flatL \flatP)^{-1}$ is bounded on $L^2_\alpha$ with $\norm{(\flatL \flatP)^{-1}} \le C$ uniformly in $\lambda \in [1,\infty)$.
\end{lemma}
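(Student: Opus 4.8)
The plan is to work entirely in the Fourier domain, where $\flatL$ and $\flatP$ are both multiplication operators, establish a uniform (in $\lambda$) lower bound on the eigenvalues of the symbol $a(\xi) = a_1(\xi) a_2(\xi)$ on the real line, and then push the symbol off the real axis into a strip of width controlled by $\alpha$ to recover boundedness on the weighted space $L^2_\alpha$. First I would recall from \eqref{eq:a1} and \eqref{eq:a2} that, after conjugating by the Fourier transform, $\flatL\flatP$ acts as multiplication by
\eqs{
a(\xi) = \Big(1 + \frac{m}{\sqrt{\xi^2+m^2-E^2}}\sigma_3\Big)\Big(1 - \mmat\,\frac{2im^2}{\xi^2-E^2}\Big),
}
and that the explicit inverse $a^{-1}(\xi)$ displayed just after \eqref{eq:a2} is a bounded analytic matrix-valued function of $\xi$ on the real line whose eigenvalues are bounded away from zero. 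The scaling $m = \lambda m_0$, $E = \lambda E_0$ with $\omega^2 = m^2 - E^2 = \lambda^2\omega_0^2$ means that under the substitution $\xi = \lambda\eta$ we have $a_\lambda(\lambda\eta) = a_1^{(0)}(\eta)\,a_2^{(0)}(\eta)$, i.e. the symbol at parameter $\lambda$ is exactly the symbol at parameter $1$ evaluated at the rescaled frequency. Since multiplication operators have operator norm equal to the $L^\infty$ norm of (the largest singular value of) their symbol, and since this $L^\infty$ norm is invariant under the dilation $\xi\mapsto\lambda\eta$, the bound on $L^2(\mathbb{R};\mathbb{C}^2) = L^2_0$ is immediately uniform in $\lambda$.

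Next I would handle the weighted space. The standard device is that conjugation of a convolution operator by $w_\alpha = e^{\alpha|t|}$ is \emph{not} exactly a Fourier multiplier because $|t|$ is not linear, but $e^{\alpha t}$ and $e^{-\alpha t}$ individually shift the Fourier variable by $\pm i\alpha$; one splits $L^2_\alpha$ using a smooth partition subordinate to $t>0$ and $t<0$, or equivalently one observes that $w_\alpha(\flatL\flatP)^{-1}w_\alpha^{-1}$ differs from the operator with symbol $a^{-1}(\xi+i\alpha\,\mathrm{sgn}\,t)$ by a commutator term that is bounded (indeed compact-smoothing) because the kernels of $\cSflat$ and $\cR$ decay exponentially — the Bessel kernel $K_0(\omega|t-t'|)$ and the Helmholtz kernel $e^{iE|t-t'|}$ both have this structure, and the "error" produced by replacing $|t-t'|$ by $|t|-|t'|$ near the crossover is supported near $t\approx 0$ or $t'\approx 0$ and is uniformly bounded. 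What must then be checked is that $a(\xi + i\beta)$ remains invertible with uniformly bounded inverse for all real $\xi$ and all $|\beta|\le\alpha$, provided $\alpha$ is small: the only obstruction is the pole of $a_2$ at $\xi^2 = E^2$ and the branch points of $a_1$ at $\xi^2 = -\omega^2$; the branch points are at distance $\omega = \lambda\omega_0 \ge \omega_0$ from the real axis, hence harmless for small $\alpha$ uniformly in $\lambda$, while the simple pole of $a_2$ at $\xi = \pm E = \pm\lambda E_0$ is precisely cancelled by the zero of $a_1$ there, and this cancellation is stable under a small complex shift — one checks the product $a(\xi+i\beta)$ has no zero eigenvalue for $|\beta|$ below a threshold that, after the rescaling $\xi=\lambda\eta$, $\beta = \lambda\beta'$, depends only on $m_0, E_0$; taking $\alpha$ small relative to that threshold divided by (a bound independent of $\lambda$) closes the argument. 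Here one uses that $\alpha$ fixed and $\lambda\ge 1$ means the \emph{rescaled} shift $\alpha/\lambda \le \alpha$ only gets smaller, so small $\alpha$ suffices for all $\lambda$ simultaneously.

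The main obstacle I anticipate is the bookkeeping around the weight $w_\alpha = e^{\alpha|t|}$: making precise that conjugating the convolution operators $\cSflat$ and $Q$ (equivalently $\cR$) by $w_\alpha$ yields a bounded operator on $L^2$ with norm controlled uniformly in $\lambda$, and that its symbol (away from the boundedly-supported correction term) is the analytic continuation $a^{-1}(\xi\pm i\alpha)$. Concretely one writes $w_\alpha\,\cSflat\,w_\alpha^{-1}$ as $\cSflat$ plus $w_\alpha[\cSflat, \cdot]w_\alpha^{-1}$-type corrections whose kernels are $\tfrac{1}{2\pi}K_0(\omega|t-t'|)\big(e^{\alpha(|t|-|t'|)}-e^{\pm\alpha(t-t')}\big)$, nonzero only when $t$ and $t'$ have opposite signs, and uses $|t|-|t'| \le |t-t'|$ together with the exponential decay of $K_0$ to get a Schur-test bound that scales correctly under $\xi = \lambda\eta$ (the $\lambda$ in $\omega = \lambda\omega_0$ actually helps, making the decay faster). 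Once this is in place, Lemma~\ref{lemma:invLP} follows by combining the uniform $L^2$ bound on the shifted-symbol multiplier with the uniform bound on the correction, via a Neumann series or direct triangle inequality, giving $\norm{(\flatL\flatP)^{-1}}_{L^2_\alpha \to L^2_\alpha} \le C$ independent of $\lambda\in[1,\infty)$.
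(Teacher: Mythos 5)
Your proposal is essentially correct, but it takes a genuinely different route from the paper. The paper's proof is much terser: it observes that $\flatL$ and $\flatP$ are both diagonal matrix-valued operators, hence so is $\flatL\flatP$; it cites \cite[Lemma 5]{bal2022integral} for uniform invertibility of the hard $(2,2)$ entry $(1-2m\cSflat)(1+Q)$ on $L^2_\alpha$; and it dispatches the easy $(1,1)$ entry $1+2m\cSflat$ by noting its real-axis symbol $1+\lambda m_0/\sqrt{\xi^2+\lambda^2\omega_0^2}$ is bounded below by $1$. You, by contrast, reconstruct the scalar result from scratch with a clean Fourier argument: the dilation $\xi=\lambda\eta$ maps the $\lambda$-symbol exactly onto the $\lambda=1$ symbol (this is a nice observation that the paper hides inside the cited lemma), analytic continuation onto a strip of width controlled by $m_0,E_0$ gives exponential decay of the convolution kernel of $(\flatL\flatP)^{-1}-1$, and a Schur estimate exploiting $\,||t|-|t'||\le|t-t'|\,$ controls the conjugation by $w_\alpha$. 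Both arguments are sound; the paper buys brevity at the cost of an external dependency, while your version is self-contained at the cost of having to carry out the strip-analyticity and weighted Schur bounds in full. If you wanted to finish your route rigorously you would still need to (i) write down that the inverse symbol $a^{-1}(\xi)$, given explicitly after \eqref{eq:a2}, extends analytically and boundedly to $|\Im\xi|<c\lambda$ with $c$ depending only on $m_0,E_0$ (checking that the pole of the $(2,2)$ factor at $\xi^2=E^2+2im^2$ lies at distance $\gtrsim\lambda$ from the real axis, and that the branch cuts of $\sqrt{\xi^2+\omega^2}$ are at $|\Im\xi|\ge\lambda\omega_0$), (ii) deduce $|\tilde k_\lambda(s)|\le C\lambda e^{-c'\lambda|s|}$ for the off-diagonal part of the inverse kernel, and (iii) run the Schur test on $e^{\alpha(|t|-|t'|)}\tilde k_\lambda(t-t')$ noting $\int\lambda e^{-(c'\lambda-\alpha)|s|}\,\d s\le 2/(c'-\alpha)$ uniformly for $\lambda\ge1$ once $\alpha<c'$. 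The slightly fuzzy middle step in your write-up — attributing a ``symbol $a^{-1}(\xi+i\alpha\,\sgn t)$'' to the conjugated operator and then invoking a ``commutator'' correction — is more naturally replaced by this kernel-decay plus Schur argument, which avoids the issue that multiplication by $e^{\alpha|t|}$ is not a Fourier multiplier in the first place. You might also note, as the paper does, that since $\mmat=\mmat_2$ (for $m_0>0$), the $(1,1)$ entry of $\flatL\flatP$ involves no factor of $\cP$ at all and its symbol never vanishes, so only the $(2,2)$ scalar equation actually requires the precompensation; keeping the two diagonal entries separate would shorten your argument.
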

\begin{proof}
    Assume $m_0>0$ for concreteness.
    Observe that $\flatL$ and $\cP$ are both diagonal matrix-valued operators, and thus so is $\flatL \cP$.
    By \cite[Lemma 5]{bal2022integral}, the operator $(\flatL \cP)_{22} = (1-2m \cSflat) (1+Q)$ has an inverse which is bounded on $L^2_\alpha$ uniformly in $\lambda \in [1, \infty)$, whenever $\alpha$ is sufficiently small.
    The other nonzero entry, $(\flatL \cP)_{11} = 1+2m \cSflat$, admits the Fourier representation $$1 + \frac{m}{\sqrt{\xi^2+\omega^2}} = 1 + \frac{\lambda m_0}{\sqrt{\xi^2+\lambda^2 \omega_0^2}}, \qquad \omega_0 := \sqrt{m_0^2 - E_0^2},$$ which is bounded away from zero uniformly in $\xi \in \mathbb{R}$ and $\lambda \in [1, \infty)$. 
\end{proof}

Lemma \ref{lemma:invLP} allows for the following reformulation of our integral equation. Writing $\cL \cP = \flatL \cP + (\cL - \flatL ) \cP$ and multiplying both sides of 
\eqref{eq:solverho_pfs} by $(\flatL \flatP)^{-1}$, we obtain the equivalent problem
\begin{align}\label{eq:eqM}
    (1 + \cM) [\rho] (t) = i(\flatL \flatP)^{-1}\cobmat^* \hat{n} \cdot \sigma [[u_i]] (\gamma (t)),
\end{align}
where $\cM := (\flatL \flatP)^{-1}(\cL - \flatL ) \cP$.
We now show that $\cM$ is Hilbert-Schmidt on $L^2_\alpha$, with the associated (Hilbert-Schmidt) norm going to zero as $m$ and $E$ go to infinity.
In the following, we let $\norm{\cdot}_2$ denote the Hilbert-Schmidt norm on $L^2_\alpha$ and $\vertiii{\cdot}_{2}$ the Hilbert-Schmidt norm on $L^2$.
\begin{lemma}\label{lemma:MHS}
    Fix $m_0, E_0 \in \mathbb{R}$ such that $0 < |E_0| < |m_0|$, and set 
    $m=\lambda m_0$ and $E = \lambda E_0$ for $\lambda \in \mathbb{R}$.
    For all $\alpha > 0$ sufficiently small and $\delta > 0$, the operator $\cM$ is Hilbert-Schmidt on $L^2_\alpha$ whenever $\lambda \in [1,\infty)$, with
    $\norm{\cM}_2 \le C_{\alpha,\delta}\lambda^{-\frac{1}{2} +\delta}$. 
\end{lemma}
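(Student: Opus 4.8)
The plan is to bound the Hilbert-Schmidt norm of $\cM = (\flatL \flatP)^{-1}(\cL - \flatL)\cP$ by combining the operator-norm bound on $(\flatL \flatP)^{-1}$ from Lemma \ref{lemma:invLP}, the boundedness of $\cP$, and an explicit Hilbert-Schmidt estimate on $\cL - \flatL$. Since $\norm{\cM}_2 \le \norm{(\flatL\flatP)^{-1}}\,\norm{\cL - \flatL}_2\,\norm{\cP}$ and the outer factors are controlled uniformly in $\lambda \in [1,\infty)$ — with $\norm{\cP}$ growing at most polynomially in $\lambda$, in fact like $\lambda$ since $\cP = 1 + \mmat Q$ and $\norm{Q} \sim \lambda$ — the whole estimate reduces to showing $\norm{\cL - \flatL}_2 \le C_{\alpha,\delta}\lambda^{-3/2+\delta}$ (the extra $\lambda^{-1}$ absorbing the growth of $\cP$). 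Actually it is cleaner to keep $\cP$ and estimate $\norm{(\cL-\flatL)\cP}_2$ directly, but the structure is the same.

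First I would write out the kernel of $\cL - \flatL$ explicitly. From the definitions,
\begin{align*}
(\cL - \flatL)[\rho](t) = -\frac{m}{\pi}\int_{\mathbb{R}}\Big[\cobmat^*(t)(i\hat n(t)\cdot\sigma)\sigma_3 K_0(\omega|\gamma(t)-\gamma(t')|)\cobmat(t') - \sigma_3 K_0(\omega|t-t'|)\Big]\rho(t')\,\d t'.
\end{align*}
The Hilbert-Schmidt norm on $L^2_\alpha$ is the $L^2$ norm of $e^{\alpha|t|}e^{-\alpha|t'|}$ times this kernel (conjugating by the weight), so I need an $L^2(\mathbb{R}^2)$ bound on
\begin{align*}
w_\alpha(t)w_{-\alpha}(t')\,m\Big[\cobmat^*(t)(i\hat n(t)\cdot\sigma)\sigma_3 \cobmat(t') K_0(\omega|\gamma(t)-\gamma(t')|) - \sigma_3 K_0(\omega|t-t'|)\Big].
\end{align*}
The key point is that the difference kernel is \emph{small} in two independent ways. (i) Away from the non-flat part of $\Gamma$ — i.e., when $t$ or $t'$ is large, where $|\gamma^{(j)}|$ decays like $e^{-\dc|t|}$ by \eqref{eq:beta} — the rotation matrices $\cobmat(t)$ approach a constant and $|\gamma(t)-\gamma(t')|$ approaches $|t-t'|$ up to exponentially small corrections, so the bracket is exponentially small in $\max(|t|,|t'|)$; this beats the weight $e^{\alpha(|t|-|t'|)}$ when $\alpha < \dc$ and along the diagonal more care with $|t-t'|$ is needed but the decay \eqref{eq:c} and $|\gamma(t)-\gamma(t')| \le |t-t'|$ keep things integrable. (ii) In $\lambda$: the relevant length scale is $1/\omega = 1/(\lambda\omega_0)$, so $K_0(\omega|\cdot|)$ concentrates near the diagonal on a window of width $\sim 1/\lambda$; using $|K_0(z)| \lesssim |\log z|$ near $0$ and $|K_0(z)|\lesssim e^{-z}$ for large $z$, together with the Lipschitz-in-$\lambda$ smallness of $\omega|\gamma(t)-\gamma(t')| - \omega|t-t'|$ (which is $O(\lambda)$ times a fixed bounded-away-from-diagonal quantity and $O(\lambda \cdot |t-t'|^3 e^{-\dc\min})$ near the diagonal by Taylor expanding the arclength chord discrepancy), one extracts a factor $\lambda^{-1/2+\delta}$ from the $L^2$ integration. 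Here the $\delta$ loss comes from the logarithmic singularity of $K_0$ at the origin: $\int\int |\log(\omega|t-t'|)|^2 \,\mathbf{1}_{|t-t'|<1}\,\d t\,\d t'$ over a compact $t$-region is $O(\lambda^{-1}\log^2\lambda)$, absorbed into $\lambda^{-1+\delta}$; taking the square root gives the stated exponent. The prefactor $m = \lambda m_0$ out front reduces the gain from $\lambda^{-1}$ (naively) to the honest bound once combined with $\norm{\cP} \lesssim \lambda$: tracking constants, $\norm{(\cL-\flatL)\cP}_2 \lesssim \lambda \cdot \lambda \cdot \lambda^{-3/2+\delta} = \lambda^{-1/2+\delta}$ after multiplying by $\norm{(\flatL\flatP)^{-1}} = O(1)$.

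The main obstacle is the diagonal analysis — controlling the difference $K_0(\omega|\gamma(t)-\gamma(t')|) - K_0(\omega|t-t'|)$ uniformly near $t = t'$ where $K_0$ is singular, while simultaneously tracking the $\lambda$-dependence. The resolution is to use that, by the arclength parametrization and \eqref{eq:beta}, $\big||\gamma(t)-\gamma(t')| - |t-t'|\big| \le C|t-t'|^3 (e^{-\dc|t|}+e^{-\dc|t'|})$ for $|t-t'|\le 1$ (the quadratic term in the Taylor expansion of the chord length vanishes for an arclength curve), so the ratio $|\gamma(t)-\gamma(t')|/|t-t'|$ stays in a fixed compact subinterval of $(0,\infty)$ by \eqref{eq:c} and the mean value theorem applied to $K_0$ on that scale, combined with $\sup_{z \in [a,b]}|K_0'(z)| \cdot$ (argument difference), gives a pointwise bound whose $L^2$ norm over the diagonal strip is $O(\lambda^{-1/2})$ up to logarithms; the rotation-matrix discrepancy $\cobmat^*(t)(i\hat n(t)\cdot\sigma)\sigma_3\cobmat(t') - \sigma_3$ is handled separately since it is $O(|t-t'|) + O(e^{-\dc\min(|t|,|t'|)})$ and multiplies the (integrable, $O(\lambda^{-1/2})$-in-$L^2$) kernel $K_0(\omega|\cdot|)$ itself. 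I would organize the proof by splitting $\mathbb{R}^2$ into the near-diagonal region $\{|t-t'| < 1\}$ and the far region $\{|t-t'|\ge 1\}$, estimate each contribution to $\vertiii{w_\alpha w_{-\alpha}(\cL-\flatL)\cP}_2$ separately, and in the far region exploit pure exponential decay of both $K_0$ terms (beating the polynomially-bounded weight once $\alpha$ is small), while in the near region carry out the Taylor/mean-value argument above.
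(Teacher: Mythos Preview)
Your factorization $\norm{\cM}_2 \le \norm{(\flatL\flatP)^{-1}}\,\norm{\cL - \flatL}_2\,\norm{\cP}$ has a genuine gap: $\cP = 1 + \mmat Q$ is \emph{not bounded} on $L^2_\alpha$. For any compactly supported $\rho \in L^2_\alpha$, the function $Q\rho(t) = \frac{m^2}{E}\int e^{iE|t-t'|}\rho(t')\,\d t'$ is a nondecaying oscillation in $t$, so $w_\alpha Q\rho \notin L^2$. Nor is $\cP$ bounded on $L^2$, since its Fourier multiplier has poles at $\xi = \pm E$. Thus the assertion ``$\norm{Q}\sim\lambda$'' has no meaning as an $L^2_\alpha\to L^2_\alpha$ bound, and you cannot separate $\cP$ from $\cL-\flatL$ as you propose. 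Saying it is ``cleaner to keep $\cP$ and estimate $\norm{(\cL-\flatL)\cP}_2$ directly'' is the right instinct, but you then never do that composite estimate; all your kernel work is on $\cL-\flatL$ alone. (Also, $\lambda\cdot\lambda\cdot\lambda^{-3/2+\delta}=\lambda^{1/2+\delta}$, not $\lambda^{-1/2+\delta}$.)

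The paper resolves this in two moves. First it splits $\cL-\flatL = 2m\sigma_3(\cS_\omega-\cSflat) - 2im\cobmat^*(\hat n\cdot\sigma)\sigma_3[\cS_\omega,\cobmat] =: \Delta_1+\Delta_2$ and treats the pieces separately. The $\Delta_1$ piece is \emph{diagonal}; in the $(11)$ block $\cP$ acts as the identity (no $Q$), so a direct kernel estimate of the type you sketch suffices, while the $(22)$ block is exactly the scalar Klein--Gordon operator whose composite $(1-2m\cSflat)^{-1}(1+Q)^{-1}\cdot 2m(\cS_\omega-\cSflat)(1+Q)$ was already controlled in earlier work. For $\Delta_2$ the paper writes
\[
\cM_2 \;=\; (\flatL\cP)^{-1}\,W_\alpha^{-1}\,\big(W_\alpha\Delta_2 W_\alpha\big)\,\big(W_\alpha^{-1}\cP\big),
\]
and the key point is that $W_\alpha^{-1}\cP: L^2_\alpha \to L^2$ \emph{is} bounded with norm $\lesssim\lambda$ (since $|Q\rho|\le C\lambda\norm{\rho}_{L^2_\alpha}$ pointwise and $W_\alpha^{-1}$ maps $L^\infty$ into $L^2$). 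The Hilbert--Schmidt bound is then proved for $W_\alpha\Delta_2 W_\alpha$ on unweighted $L^2$, i.e.\ with the \emph{symmetric} weight $e^{\alpha(|t|+|t'|)}$ rather than your $e^{\alpha(|t|-|t'|)}$; the extra growth is absorbed by the commutator decay $e^{-\dc\chi(t,t')}$ coming from $\cobmat(t')-\cobmat(t)$. Your near-diagonal Taylor/mean-value analysis of the difference kernel is qualitatively right, but without this weight-splitting device the $\cP$ factor cannot be controlled and the argument does not close.
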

\begin{proof}
    Assume $m_0>0$ for concreteness.
    We write
    \begin{align*}
        \cL - \flatL = 2m\sigma_3 (\cS_\omega - \cSflat) - 2im \cobmat^* (\hat{n} \cdot \sigma) \sigma_3 [\cS_\omega, \cobmat] =: \ldiff_1 + \ldiff_2.
    \end{align*}
    Define $\cM_j := (\flatL \cP)^{-1}\ldiff_j \cP$ so that $\cM = \cM_1 + \cM_2$.
    Recalling the proof of Lemma \ref{lemma:invLP} above,
    note that $(\flatL \cP)^{-1}, \ldiff_1$ and $\cP$ are all diagonal matrix-valued operators, hence so is $\cM_1$.
    The second diagonal entry of $\cM_1$ is
    \begin{align*}
        (\cM_1)_{22} = -[(1-2m \cSflat) (1+Q)]^{-1}2m (\cS_\omega - \cSflat) (1+Q),
    \end{align*}
    which satisfies
    $\norm{(\cM_1)_{22}}_2 \le C/\sqrt{\lambda}$ uniformly in $\lambda \ge 1$ for all $\alpha>0$ sufficiently small, by \cite[Proposition 7]{bal2022integral}. 
    The first diagonal entry is
    \begin{align*}
        (\cM_1)_{11} = [(\flatL \cP)_{11}]^{-1}2m (\cS_\omega - \cSflat),
    \end{align*}
    with $[(\flatL \cP)_{11}]^{-1}$ uniformly bounded in $L^2_\alpha$ by Lemma \ref{lemma:invLP}.
    In \cite[equation (67)]{bal2022integral}, it was shown that $\cS_\omega - \cSflat$ admits a 
    Schwartz kernel 
    $q_\sharp$ that satisfies
    \begin{align}\label{eq:qsharp}
        0 \le q_\sharp (t,t') \le C \omega |t-t'|^3 e^{-\dc \chi (t,t')} K_1 (c \omega |t-t'|),
    \end{align}
    where
    \begin{align}\label{eq:chi}
        \chi (t,t') := \begin{cases}
            \min \{|t|,|t'|\}, & tt' > 0\\
            0, & \text{else}
        \end{cases}
    \end{align}
    and $K_1$ is the modified Bessel function of the second kind.
    Define $W_\alpha : L^2_\alpha \rightarrow L^2$ by
    \begin{align*}
        W_\alpha [\rho] (t) := w_\alpha (t) \rho (t),
    \end{align*}
    so that $W_\alpha$ is the operator of point-wise multiplication with the function $w_\alpha (t) = e^{\alpha |t|}$.
    Write $m (\cS_\omega - \cSflat) =  W_\alpha^{-1} [W_\alpha m (\cS_\omega - \cSflat) W_\alpha^{-1}] W_\alpha$ and note that $W_\alpha$ and $W_\alpha^{-1}$ are bounded (and independent of $\lambda$). Thus, to obtain the desired bound on $(\cM_1)_{11}$, it suffices to show that $\cS_\Delta := W_\alpha m (\cS_\omega - \cSflat) W_\alpha^{-1}$ is Hilbert-Schmidt on $L^2$ with $\vertiii{\cS_\Delta}_2 \le C/\sqrt{\lambda}$ (uniformly in $\lambda$ for $\alpha$ sufficiently small).
    By \eqref{eq:qsharp}, the Schwartz kernel $q_\Delta$ of $\cS_\Delta$ satisfies
    \begin{align*}
        |q_\Delta (t,t')| \le C e^{\alpha |t|} \lambda^2 |t-t'|^3 e^{-\dc \chi (t,t')} K_1 (c \lambda \omega_0 |t-t'|)e^{-\alpha |t'|},
    \end{align*}
    where $\omega_0 := \sqrt{m_0^2 - E_0^2}$.
    Using \cite[proof of Proposition 7]{bal2022integral}, we conclude that $\norm{q_\Delta}_{L^2 (\mathbb{R}^2)} \le C \lambda^{-3/2}$, hence $\norm{(\cM_1)_{11}}_2 \le C \lambda^{-3/2}$ uniformly in $\lambda \ge 1$ for any $\alpha > 0$ sufficiently small.
    Combining our bounds on the diagonal entries of $\cM_1$, we have shown that $\norm{\cM_1}_2 \le C/\sqrt{\lambda}$
    uniformly in $\lambda \ge 1$ for all $\alpha > 0$ sufficiently small.
    
    It remains to establish a similar bound for $\cM_2$.
    As in \cite[proofs of Lemma 6 and Proposition 7]{bal2022integral}, our strategy is to write
    \begin{align*}
        \cM_2 = (\flatL \cP)^{-1} A_1 A_2 A_3,
    \end{align*}
    where 
    the operators $A_j$ are given by
    \begin{align*}
        A_1 := W_\alpha^{-1}, \qquad A_2 := W_\alpha \Delta_2 W_\alpha, \qquad A_3 := W_\alpha^{-1} \cP.
    \end{align*}
    It is clear that $A_1 : L^2 \rightarrow L^2_\alpha$ is bounded and independent of $\lambda$.
    Since $$|Q \rho| (t) \le \frac{m^2}{E} \norm{\rho}_{L^1} \le C \lambda \norm{\rho}_{L^2_\alpha},$$
    it follows that $A_3 :L^2_\alpha \rightarrow L^2$ is bounded with $\norm{A_3} \le C \lambda$ uniformly in $\lambda \ge 1$.
    Since $(\flatL \cP)^{-1}$ is bounded on $L^2_\alpha$ uniformly in $\lambda$ by Lemma \ref{lemma:invLP}, it suffices to show that $A_2$ is Hilbert-Schmidt on $L^2$ with $\vertiii{A_2}_2\le C \lambda^{-\frac{3}{2}+\delta}$ uniformly in $\lambda \ge 1$ whenever $\alpha > 0$ is sufficiently small.

    Observe that the Schwartz kernel of $[\cS_\omega, \cobmat]$ is given by
    \begin{align*}
        (t, t') \mapsto \frac{1}{2\pi} K_0 (\omega |\gamma (t) - \gamma (t')|) (\cobmat (t') - \cobmat (t)).
    \end{align*}
    The assumption \eqref{eq:beta} on $\Gamma$ at infinity implies that
    \begin{align}\label{eq:diff1}
        |\cobmat (t') - \cobmat (t)| \le C e^{-\delta^{-1} \dc \chi (t,t')} 
    \end{align}
    for some constant $\dc > 0$ (recall the definition of $\chi$ in \eqref{eq:chi}),
    while
    smoothness of $\Gamma$ implies that
    \begin{align}\label{eq:diff2}
        |\cobmat (t') - \cobmat (t)| \le C |t'-t|.
    \end{align}
    Interpolating between \eqref{eq:diff1} and \eqref{eq:diff2} yields
    \begin{align}\label{eq:interp}
        |\cobmat (t') - \cobmat (t)| \le C e^{-\dc \chi (t,t')} |t'-t|^{1-\delta}.
    \end{align}
    Since $2im \cobmat^* (\hat{n} \cdot \sigma) \sigma_3$ is uniformly bounded,
    we conclude that the Schwartz kernel $q$ of $A_2 =W_\alpha \Delta_2 W_\alpha$ satisfies
    \begin{align}\label{eq:qlambda}
        |q(t,t')| \le C w_\alpha (t) w_\alpha (t') |t-t'|^{1-\delta} K_0 (c \omega_0 \lambda |t-t'|) e^{-\dc \chi (t,t')}.
    \end{align}
    It follows that
    \begin{align*}
        \vertiii{A_2}_2^2 &= \norm{q}^2_{L^2 (\mathbb{R}^2)} \\
        &\le 
        C\int_{\mathbb{R}^2}
        w_\alpha^2 (t) w_\alpha^2 (t') |t-t'|^{2(1-\delta)} K_0^2 (c \omega_0 \lambda |t-t'|) e^{-2\dc \chi (t,t')} {\rm d} t {\rm d}t'\\
        &=\frac{C}{2}\int_{\mathbb{R}^2}
        w_\alpha^2 \Big(\frac{\xi+\zeta}{2}\Big) w_\alpha^2 \Big(\frac{-\xi+\zeta}{2}\Big) |\xi|^{2(1-\delta)} K_0^2 (c \omega_0 \lambda |\xi|) e^{-2\dc \chi (\frac{\xi+\zeta}{2}, \frac{-\xi+\zeta}{2})} {\rm d} \xi {\rm d}\zeta.
    \end{align*}
    Using that
    $w_\alpha^2 \Big(\frac{\xi+\zeta}{2}\Big) w_\alpha^2 \Big(\frac{-\xi+\zeta}{2}\Big) = \exp \{ \alpha (|\xi+\zeta| + |-\xi + \zeta|)\}$
    and $|\xi+\zeta| + |-\xi + \zeta| \le |\lambda \xi+\zeta| + |-\lambda \xi + \zeta|$ for all $\lambda \ge 1$ and $(\xi,\zeta) \in \mathbb{R}^2$,
    we conclude that
    $$w_\alpha^2 \Big(\frac{\xi+\zeta}{2}\Big) w_\alpha^2 \Big(\frac{-\xi+\zeta}{2}\Big) \le w_\alpha^2 \Big(\frac{\lambda \xi+\zeta}{2}\Big) w_\alpha^2 \Big(\frac{-\lambda \xi+\zeta}{2}\Big).$$
    Moreover, the definition \eqref{eq:chi} of $\chi$ directly implies that
    \begin{align*}
        \chi\Big(\frac{\lambda \xi+\zeta}{2}, \frac{-\lambda \xi+\zeta}{2}\Big) \le 
        \chi \Big(\frac{\xi+\zeta}{2}, \frac{-\xi+\zeta}{2}\Big)
    \end{align*}
    for all $\lambda \ge 1$ and $(\xi,\zeta) \in \mathbb{R}^2$.
    Therefore,
    \begin{align*}
        \vertiii{A_2}_2^2 &\le
        C\int_{\mathbb{R}^2}
        w_\alpha^2 \Big(\frac{\lambda \xi+\zeta}{2}\Big) w_\alpha^2 \Big(\frac{-\lambda \xi+\zeta}{2}\Big) |\xi|^{2(1-\delta)} K_0^2 (c \omega_0 \lambda |\xi|) e^{-2\dc \chi (\frac{\lambda \xi+\zeta}{2}, \frac{-\lambda \xi+\zeta}{2})} {\rm d} \xi {\rm d}\zeta\\
        &=C\lambda^{-2(1-\delta)-1}\int_{\mathbb{R}^2}
        w_\alpha^2 \Big(\frac{\xi+\zeta}{2}\Big) w_\alpha^2 \Big(\frac{-\xi+\zeta}{2}\Big) |\xi|^{2(1-\delta)} K_0^2 (c \omega_0 |\xi|) e^{-2\dc \chi (\frac{\xi+\zeta}{2}, \frac{-\xi+\zeta}{2})} {\rm d} \xi {\rm d}\zeta\\
        &=C\lambda^{-2(1-\delta)-1},
    \end{align*}
    where the last equality (redefines the constant $C$ for ease of notation and) requires that $\alpha>0$ is sufficiently small so that the $\lambda$-independent integral is finite.
    We conclude that $\vertiii{A_2}_2 \le C \lambda^{-\frac{3}{2}+\delta}$ uniformly in $\lambda \ge 1$ for any $\alpha > 0$ sufficiently small, and the proof is complete.
\end{proof}
Using the above lemmas, we can now prove one of our main results.
\begin{proof}[Proof of Theorem \ref{thm:invL}]
Let $\alpha > 0$ be as small as necessary, 
and fix $0 < \delta < \min\{\alpha, 1/2\}$. 
Define $\cM := (\flatL \flatP)^{-1}(\cL - \flatL ) \cP$ as in Lemma \ref{lemma:MHS}, and $Z_\delta := [1,\infty) \times (-\delta, \delta)$.
Since $\delta < 1/2$, 
we know that $\Re (\lambda^2) > 3/4$ whenever $\lambda \in Z_\delta$.
Hence, the modified Bessel function of the second kind, $K_0 (\lambda r)$, is exponentially decaying as $r \rightarrow +\infty$ (at a rate that is uniform in $\lambda$) and so $(\flatL \flatP)^{-1}$ and $\cL - \flatL$ are holomorphic in $\lambda \in Z_\delta$.
Since $\delta < \alpha$, Lemma \ref{lemma:MHS} implies that
$Z_\delta \ni \lambda \mapsto \cM$ defines a holomorphic compact-operator-valued function,
with all eigenvalues of $\cM$ going to zero as $\Re \lambda \rightarrow \infty$. 
It then follows from analytic perturbation theory \cite[Theorem VII.1.9]{kato2013perturbation} that $\cM$ has an eigenvalue of $-1$ for at most a finite number of $\lambda \in [1,\infty)$. By the equivalence of the integral equations \eqref{eq:solverho_pfs} and \eqref{eq:eqM}, the proof is complete.
\end{proof}

When the right-hand side of~\eqref{eq:solverho} is $m$-times differentiable, we can use the previous result to establish improved regularity of the density $\rho.$

\begin{lemma}\label{lem_boot}
    Suppose that $0<\alpha<\omega$ and $\rho \in L^2_\alpha(\mathbb{R};\mathbb{C}^2)$ is a solution to~\eqref{eq:solverho} and moreover that the right-hand side is in $H^m_\alpha(\mathbb{R};\mathbb{C}^2).$ Additionally, for $q>\alpha$ suppose that $\gamma$ (an arclength parameterization of $\Gamma$) is in $C^{k,q}(\mathbb{R};\mathbb{R}^2),$ for any $k \ge 0,$ where $C^{k,q}$ is the space of functions defined by
    $$C^{k,q} := \{ \gamma \in C^k(\mathbb{R};\mathbb{R}^2)\,\,|\,\, e^{q|t|}\gamma^{(j)}(t) \in C(\mathbb{R};\mathbb{R}^2),\, j=0,\dots,k\}.$$
    Then $\rho \in H^m_\alpha(\mathbb{R};\mathbb{C}^2).$ 
\end{lemma}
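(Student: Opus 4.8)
The plan is to run a standard elliptic-type bootstrap on the integral equation \eqref{eq:solverho}, transferring regularity from the right-hand side to $\rho$ one derivative at a time. Rewrite \eqref{eq:solverho} in the form \eqref{eq:eqM}, namely $(1+\cM)\rho = g$, where $g := i(\flatL\flatP)^{-1}\cobmat^*\,\hat n\cdot\sigma\,[[u_i]]\circ\gamma$ is built from the given $H^m_\alpha$ data, and $\cM := (\flatL\flatP)^{-1}(\cL-\flatL)\cP$ as in Lemma \ref{lemma:MHS}. Since $(1+\cM)^{-1}$ exists and is bounded on $L^2_\alpha$ for the $\lambda$'s under consideration (this is exactly what Theorem \ref{thm:invL} furnishes, and we may harmlessly pass to those $\lambda$), it suffices to show $g\in H^m_\alpha$ and that each of $(\flatL\flatP)^{-1}$, $\cP$, and $\cL-\flatL$ maps $H^k_\alpha$ into $H^k_\alpha$ (with $\cL-\flatL$ in fact gaining from the extra smoothness/decay of $\gamma$), for $0\le k\le m$. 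Given these mapping properties, one argues inductively: assuming $\rho\in H^{k}_\alpha$ for some $k<m$ (the base case $k=0$ being the hypothesis), the right-hand side $g - \cM\rho$ of $(1+\cM)\rho = g$ lies in $H^{k+1}_\alpha$ — here one uses that $g\in H^{m}_\alpha\subseteq H^{k+1}_\alpha$ and that $\cM$ actually carries $H^k_\alpha$ into $H^{k+1}_\alpha$ — and then boundedness of $(1+\cM)^{-1}$ on $H^{k+1}_\alpha$ (or, more simply, the fact that $\cM\rho$ is smoothing by one derivative while $\rho = g - \cM\rho$) promotes $\rho$ to $H^{k+1}_\alpha$.

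The key technical input is the smoothing property of the operators. For $\cP = 1 + 2im^2\mmat\cR$ this is elementary: $\cR$ is convolution with $\tfrac{1}{2iE}e^{iE|t|}$, which is the outgoing Green's function of $\partial_t^2+E^2$, so $(\partial_t^2+E^2)\cR\rho = \rho$ and hence $\cR$ maps $H^k_\alpha$ to $H^{k+2}_\alpha$ once one checks that convolution with $e^{iE|t|}e^{-\alpha'|t|}$-type tails respects the weight $w_\alpha$ for $\alpha<\omega$ (more than enough, since $|E|<\omega$ is false in general — but $|E|$ itself does not enter the weight obstruction; the only requirement is $\alpha$ small relative to the exponential tails appearing, i.e. $\alpha<\omega$ as hypothesized, so that the Schwartz kernels involved are in the appropriate weighted spaces). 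For $(\flatL\flatP)^{-1}$ one uses the explicit Fourier symbol computed in Section \ref{subsec:flat}: $(\flatL\flatP)^{-1}$ acts as multiplication by $a^{-1}(\xi)$, which together with all its $\xi$-derivatives is bounded, so $(\flatL\flatP)^{-1}$ preserves every $H^k$, and the weight-compatibility follows because $a^{-1}$ extends holomorphically to a strip $|\Im\xi|<\alpha$ for $\alpha$ small (this is exactly the strip exploited in the proof of Theorem \ref{thm:invL}). The interesting operator is $\cL-\flatL = \Delta_1 + \Delta_2$ from the proof of Lemma \ref{lemma:MHS}: its Schwartz kernel involves $K_0(\omega|\gamma(t)-\gamma(t')|)$ times either $(\cobmat(t')-\cobmat(t))$ or the kernel difference $q_\sharp$, all of which carry the decay factor $e^{-\dc\chi(t,t')}$ and a factor $|t-t'|^{1-\delta}$ or $|t-t'|^3$ that kills the (integrable) logarithmic singularity of $K_0$ at the diagonal. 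Differentiating this kernel in $t$ (or $t'$) up to $m$ times produces terms governed by $\gamma^{(j)}$, $j\le m$, each exponentially decaying by the hypothesis $\gamma\in C^{k,q}$ with $q>\alpha$ — this is precisely why the hypothesis on $\gamma$ is phrased with the weight $q>\alpha$ rather than unweighted $C^k$ — so the differentiated kernels are again Hilbert–Schmidt (indeed bounded) on $L^2_\alpha$. Hence each application of $\cL-\flatL$ costs nothing and, because $K_0$ is smooth off the diagonal and the $|t-t'|$ prefactor absorbs the singularity, one can differentiate the composition $\cM\rho$ freely: $\cM$ maps $L^2_\alpha$ into $H^k_\alpha$ for all $k$ provided $\gamma$ is smooth enough, which drives the bootstrap.

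The main obstacle I expect is not the abstract induction but the careful bookkeeping of the \emph{weighted} kernel estimates when derivatives fall on $K_0(\omega|\gamma(t)-\gamma(t')|)$: each $t$-derivative brings down a factor $\gamma'(t)\cdot(\gamma(t)-\gamma(t'))/|\gamma(t)-\gamma(t')|$ times $\omega K_1$, and higher derivatives generate increasingly singular $|t-t'|^{-\ell}$ prefactors that must be controlled against the $|t-t'|^{1-\delta}$ (or $|t-t'|^3$) gain and against the exponential decay of $K_0,K_1$ away from the diagonal; one has to verify that the net kernel, after multiplying by $w_\alpha(t)w_\alpha(t')^{-1}$, is square-integrable on $\mathbb{R}^2$, using $c|t-t'|\le|\gamma(t)-\gamma(t')|$ from \eqref{eq:c} near the diagonal and $\chi(t,t')\to\infty$ plus the $\gamma^{(j)}$-decay off-diagonal. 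This is the same type of estimate as in \cite[proof of Proposition 7]{bal2022integral}, so I would organize the proof so as to cite that bound and merely indicate the modifications: (i) keeping an extra $|t-t'|$-power and logarithm, harmless since $K_1$ still decays; (ii) allowing up to $m$ derivatives, which only changes the polynomial prefactor and the number of $\gamma^{(j)}$ factors, all exponentially small by the $C^{k,q}$, $q>\alpha$, hypothesis. With those estimates in hand the induction closes and $\rho\in H^m_\alpha(\mathbb{R};\mathbb{C}^2)$.
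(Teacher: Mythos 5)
Your proposal follows the same overall strategy as the paper's: isolate the flat operator $\flatL\flatP$, use an explicit Fourier analysis of its symbol, show that the difference $\cL-\flatL$ has a smoothing kernel, and then bootstrap the resulting fixed-point identity. The paper organizes the bootstrap around the identity $\rho = (I-\flatL\flatP)\rho + (\flatL-\cL)\cP\rho + f$, so that the gain of one derivative per step comes \emph{both} from the $O(1/\xi)$ decay of the symbol of $I-\flatL\flatP$ in its analytic strip \emph{and} from the kernel smoothing of $\flatL-\cL$; you instead invert $\flatL\flatP$ to write $\rho = g - \cM\rho$, so that the entire gain comes from the smoothing of $\cL-\flatL$ and you only need $(\flatL\flatP)^{-1}$ to be a bounded (not smoothing) Fourier multiplier on $H^k_\alpha$. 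These two decompositions are algebraically interchangeable and either can be made to work.

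However, one step in your plan as stated is incorrect. You propose to show that ``each of $(\flatL\flatP)^{-1}$, $\cP$, and $\cL-\flatL$ maps $H^k_\alpha$ into $H^k_\alpha$.'' The operator $\cP = 1 + 2im^2\mmat\cR$ does \emph{not} preserve $L^2_\alpha$ for $\alpha>0$: if $\rho\in L^2_\alpha\subset L^1$, then $\cR\rho(t)\to \tfrac{1}{2iE}e^{\pm iEt}\tilde\rho(\pm E)$ as $t\to\pm\infty$, so $\cR\rho$ is bounded but non-decaying, hence $\cP\rho\notin L^2_\alpha$ generically. Your accompanying remark that ``convolution with $e^{iE|t|}e^{-\alpha'|t|}$-type tails respects the weight'' is not right either, because the kernel $e^{iE|t|}$ has no exponential decay. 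This is exactly the issue the paper circumvents by passing through an intermediate unweighted space: it records $\cP : H^s_\alpha\to C^{s+1}$ (bounded and differentiable but non-decaying), and then uses that the Schwartz kernel of $\cL-\flatL$ decays exponentially in $|s|+|t|$ (not merely in $|s-t|$, thanks to the asymptotic flatness of $\gamma$), so that $\cL-\flatL : C^{s+1}\to H^{s+2}_\alpha$ restores both decay and regularity. Your argument can be repaired by proving directly that the \emph{composition} $(\cL-\flatL)\cP : H^k_\alpha\to H^{k+1}_\alpha$ is bounded, rather than the two factors separately — precisely what the paper's proof of Lemma \ref{lemma:MHS} does at the $L^2_\alpha$ level by introducing the maps $A_3 = W_\alpha^{-1}\cP : L^2_\alpha\to L^2$ and $A_2 = W_\alpha\Delta_2 W_\alpha$ — but as written your reduction to per-operator $H^k_\alpha\to H^k_\alpha$ bounds does not hold. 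A secondary, cosmetic point: you assert in one place that $\cM$ gains exactly one derivative and in another that $\cM : L^2_\alpha\to H^k_\alpha$ for all $k$; the latter would make the bootstrap a single step, but the paper's phrasing (kernel $O(|s-t|^4\log|s-t|) + C^m$ near the diagonal) deliberately tracks finite gain, consistent with extending the result to finitely-smooth interfaces as suggested in Remark~1.
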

\begin{proof}
    The proof follows from a standard bootstrap argument. We first introduce some convenient notation. Recalling the definitions of the operators $\mathcal{L},$ $\mathcal{P}$ and $\mathcal{L}_0$  defined in \eqref{eq:int},\eqref{eq:cP}, and \eqref{eqn:flatl}, respectively, then $\rho \in L^2_\alpha(\mathbb{R};\mathbb{C}^2)$ is a solution to \eqref{eq:int} provided that
    $$\mathcal{L}\mathcal{P}[\rho] =f,$$
    where $f = i \cobmat^* \hat{n} \cdot \sigma [[u_i]] \circ \gamma$ is the right-hand side of \eqref{eq:int}.

    It follows immediately that
    \begin{align}\label{eq:rho_boot}
    \rho = (I-\mathcal{L}_0 \mathcal{P})[\rho] + (\mathcal{L}_0-\mathcal{L})\mathcal{P}[\rho] + f.
    \end{align}

    Next, we observe from \eqref{eq:a1} and \eqref{eq:a2} that in the Fourier domain $I-\mathcal{L}_0\mathcal{P}$ is multiplication by
    $$\left(1+\frac{m}{\sqrt{\xi^2+\omega^2}}\sigma_3\right) \left(1 -\mmat \frac{2im^2}{\xi^2-E^2} \right) - 1$$
    which simplifies to 
    $$-\mmat \left[\frac{2im^2}{\xi^2-E^2}\left( \frac{\sqrt{\xi^2+m^2-E^2}-|m|}{\sqrt{\xi^2+\omega^2}}\right)+\frac{|m|}{\sqrt{\xi^2+\omega^2}} \right] + (1-\mmat) \frac{|m|}{\sqrt{\xi^2+\omega^2}}, $$
    where we recall that $\omega^2 = m^2-E^2.$ This function is analytic in the strip
    $$\{\xi \in \mathbb{C}\,|\, |\Im \,(\xi)| < \omega\}. $$
    Moreover, as $\xi \to \pm \infty$ near the real axis, it is bounded by $1/|\xi|$ in each component. It follows immediately from the analyticity and the decay that 
    $$\mathcal{L}_0 \mathcal{P}: H^{s}_\alpha(\mathbb{R};\mathbb{C}^2) \to H^{s+1}_\alpha(\mathbb{R}; \mathbb{C}^2),$$ 
    boundedly for any $s \ge 0.$ Here $H^s_\alpha(\mathbb{R};\mathbb{C}^2)$ denote the weighted Sobolev spaces with weight function $w_{\alpha}(x) = e^{\alpha |x|}$ with $H^0_\alpha(\mathbb{R};\mathbb{C}^2) = L^2_\alpha(\mathbb{R};\mathbb{C}^2).$

    Next we turn to $(\mathcal{L}_0-\mathcal{L})\mathcal{P}.$ We note that for any $s \in \mathbb{N}\cup\{0\}$ $$\mathcal{P}: H^s_\alpha(\mathbb{R}; \mathbb{C}^2) \to C^{s+1}(\mathbb{R};\mathbb{C}^2),$$
    boundedly. Moreover, $(\mathcal{L}_0-\mathcal{L})$ is an integral operator with a kernel $k(s,t)$ which is $O(|s-t|^4\log|t-s|) +C^{m}$ for small $|s-t|$ and decays exponentially for large $|s|+|t|.$ We emphasize that this is much stronger as a decay condition than $\mathcal{L}$ and $\mathcal{L}_0$ which only decay exponentially for large $|s-t|.$ It follows from standard arguments that $(\mathcal{L}_0-\mathcal{L})$ is bounded as a map from $C^{s+1}(\mathbb{R};\mathbb{C}^2)$ to $H^{s+2}_\alpha(\mathbb{R};\mathbb{C}^2).$

    Thus, returning to \eqref{eq:rho_boot}, we see that
    $$\rho \in L^2_\alpha(\mathbb{R};\mathbb{C}^2) \cap \left[H_\alpha^1(\mathbb{R};\mathbb{C}^2) +H_\alpha^2(\mathbb{R};\mathbb{C}^2) + H^m_\alpha(\mathbb{R};\mathbb{C}^2) \right]$$
    and hence $\rho \in H^1_\alpha(\mathbb{R};\mathbb{C}^2).$ Again using \eqref{eq:rho_boot} we see that $\rho \in H^1_\alpha(\mathbb{R};\mathbb{C}^2)$ implies that $\rho \in H^2_\alpha(\mathbb{R};\mathbb{C}^2).$ We can continue in this way up to $m$, whence it follows that
    $\rho \in H^m_\alpha(\mathbb{R};\mathbb{C}^2)$ as required.
\end{proof}

Next, we show that a solution of our integral formulation \eqref{eq:solverho_pfs} generates a solution for the Dirac equation \eqref{eq:PDE}.
\begin{proof}[Proof of Theorem \ref{thm:usol}]
We begin by verifying the first two lines of \eqref{eq:PDE}. For $x \in \Omega_j$, it follows that
\begin{align*}
    (-i \partial_{x_1} \sigma_1- i \partial_{x_2} \sigma_2 + (-1)^j m \sigma_3 - E)
    (-i \partial_{x_1} \sigma_1- i \partial_{x_2} \sigma_2 + (-1)^j m \sigma_3 + E)
    =-\Delta + \omega^2,
\end{align*}
hence (by dominated Lebesgue and the fact that $\frac{1}{2\pi}K_0 (\omega|x-y|)$ is the Green's function for the operator $-\Delta + \omega^2$)
\begin{align*}
    (-i \partial_{x_1} \sigma_1- i \partial_{x_2} \sigma_2 + (-1)^j m \sigma_3 - E) u_i (x) &= \frac{1}{2\pi} \int_{\Omega_j} (-\Delta_x+\omega^2) K_0 (\omega|x-y|) f_j (y) {\rm d} y\\
    &= f_j (x)
\end{align*}
and
\begin{align*}
    (-i \partial_{x_1} \sigma_1- i \partial_{x_2} \sigma_2 + (-1)^j m \sigma_3 - E) u_s (x) &= \frac{1}{2\pi} \int_{\Gamma} (-\Delta_x+\omega^2) K_0 (\omega|x-\gamma (t)|) \mu (t) {\rm d} t\\
    &= 0.
\end{align*}
It follows that $u = u_i +u_s$ satisfies the first two lines of \eqref{eq:PDE}. To show that $u$ satisfies the third line of~\eqref{eq:PDE}, i.e. is continuous across the interface $\Gamma$, we use the smoothness of the data $[[u_i]]$, together with Lemma \ref{lem_boot} to argue that $\rho \in L^2_\alpha(\mathbb{R};\mathbb{C}^2) \cap C^\infty(\mathbb{R};\mathbb{C}^2)$ and therefore that
$\mu = \cobmat \cP \rho\in C^\infty(\mathbb{R};\mathbb{C}^2)$. It follows by standard results in integral equations \cite{colton2013integral} that the limits in~\eqref{eq:us} as $x\to \Gamma$ in $\Omega_1$ and $\Omega_2$ exist pointwise. Continuity then follows by construction from the integral equation.

It now remains to verify \eqref{eq:out0}.
Since $G_j (x,y)$ decays exponentially in $|x-y|$,
the Lebesgue dominated convergence theorem implies that
the second line of \eqref{eq:out0} holds.
For the first line of \eqref{eq:out0}, we first observe that $u_i$ and all its derivatives decay rapidly at infinity.
Moreover, since $G_j (\gamma (t) + r \hat{n} (t), \gamma (s)) = \gt (\gamma(t) + r \hat{n} (t) - \gamma (s))$ for some exponentially decaying function $\gt \in \mathcal{C}^\infty (\mathbb{R}^2; \mathbb{C}^2)$, 
our assumption \eqref{eq:beta} that $\Gamma$ is approximately a straight line at infinity implies that
\begin{align}\label{eq:partus}
\begin{split}
    \partial_t u_s (\gamma (t) + r\hat{n} (t)) &+ \int_\Gamma \partial_s G_j (\gamma (t) + r \hat{n} (t), \gamma (s)) \mu (s) {\rm d} s\\
    &= \int_\Gamma (\partial_t +\partial_s) G_j (\gamma (t) + r \hat{n} (t), \gamma (s)) \mu (s) {\rm d} s \longrightarrow 0
\end{split}
\end{align}
as $|t| \rightarrow \infty$. Since 
$t \mapsto V(t)$ is smooth and rapidly approaching constant matrices $\cobmat_\pm$ as $t \rightarrow \pm \infty$, the 
function $t \mapsto \cobmat (t) \mmat Q \rho (t) =: \tmu (t)$ is continuously differentiable with
\begin{align*}
    \lim_{t \rightarrow +\infty} (\tmu' - iE \tmu)(t) = \cobmat_+ \mmat \lim_{t \rightarrow +\infty} \{\partial_t (Q \rho (t))-iE Q\rho (t)\}. 
\end{align*}
We now write
\begin{align*}
    \partial_t (Q \rho (t)) = iE Q \rho (t) - 2im^2\int_t^\infty e^{-iE (t-t')} \rho (t') {\rm d} t',
\end{align*}
where the fact that $\rho \in L^1$ implies that the second term on the above right-hand side vanishes as $t \rightarrow +\infty$. 
It follows that
$\lim_{t \rightarrow +\infty} (\tmu' - iE \tmu)(t) =0$, and therefore
\begin{align*}
    &\lim_{t \rightarrow +\infty} (\partial_t - iE) u_s (\gamma (t) + r\hat{n} (t))\\
    &= \lim_{t \rightarrow +\infty} \int_\Gamma (-\partial_s G_j (\gamma (t) + r \hat{n} (t), \gamma (s)) -iEG_j (\gamma (t) + r \hat{n} (t), \gamma (s))) \mu (s) {\rm d} s \\
    &= \lim_{t \rightarrow +\infty} \int_\Gamma (-\partial_s G_j (\gamma (t) + r \hat{n} (t), \gamma (s)) -iEG_j (\gamma (t) + r \hat{n} (t), \gamma (s))) \cobmat (s) \rho (s) {\rm d} s \\
    &\qquad \qquad \qquad +\lim_{t \rightarrow +\infty} \int_\Gamma G_j (\gamma (t) + r \hat{n} (t), \gamma (s)) (\tmu '(s) - iE \tmu (s)) {\rm d} s \ =0,
\end{align*}
where we have used \eqref{eq:partus} to obtain the first equality, and the identity $\mu = \cobmat \cP \rho = \cobmat \rho + \cobmat \mmat Q \rho = \cobmat \rho + \tmu$ (along with integration by parts in $s$) 
to justify the second equality.
A parallel argument shows that
$\lim_{t \rightarrow -\infty} (\partial_t + iE) u_s (\gamma (t) + r\hat{n} (t)) = 0$, and the proof is complete. 
\end{proof}

We next prove 
Theorem \ref{thm:invE}, which is another result that establishes the general well-posedness of our integral equation \eqref{eq:solverho_pfs}. 
To prove this theorem, we will first need to show that 
a related Klein-Gordon equation 
has a unique solution when $E$ is moved into the complex plane.
\begin{lemma}\label{lemma:kgh10}
    Fix $m>0$ and $E \in \mathbb{C}$ such that $0 < |\Re E| < m$ and $\Im E \ne 0$. Define $\omega^2 := m^2 - E^2$, and
    suppose $\psikg \in H^1 (\mathbb{R}^2; \mathbb{C}^2)$ satisfies
    \begin{align}
        (-\Delta + \omega^2) \psikg (x) = 0, \qquad &x \in \Omega_1 \cup \Omega_2 \label{eq:KGPDE}\\
        [[\hat{n} \cdot \nabla \psikg]] (\gamma (t)) = -2im (\hat{n} \cdot \sigma) \sigma_3 \psikg (\gamma(t)), \qquad &t \in \mathbb{R}. \label{eq:KGjump}
    \end{align}
    Then $\psikg \equiv 0$.
\end{lemma}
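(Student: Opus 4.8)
The plan is to prove $\psikg\equiv 0$ by a Green's identity (energy) argument on the whole plane, and then to isolate the imaginary part of the resulting bilinear identity, exploiting that $\Im(\omega^2)\neq 0$ under the stated hypotheses. Concretely, I would aim to establish
\begin{align*}
    \int_{\mathbb{R}^2}|\nabla\psikg|^2\,{\rm d}x + \omega^2\int_{\mathbb{R}^2}|\psikg|^2\,{\rm d}x = 2im\int_{\mathbb{R}}\overline{\psikg(\gamma(t))}\cdot(\hat n(t)\cdot\sigma)\sigma_3\,\psikg(\gamma(t))\,{\rm d}t,
\end{align*}
where $\overline{\,\cdot\,}$ denotes the conjugate transpose on $\mathbb{C}^2$; the point is that the right-hand side turns out to be real, whereas the imaginary part of the left-hand side is $\Im(\omega^2)\,\|\psikg\|_{L^2(\mathbb{R}^2)}^2$.

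First I would record the regularity required to make the identity rigorous. On each $\Omega_j$ the equation \eqref{eq:KGPDE} has constant coefficients, so interior elliptic regularity gives $\psikg|_{\Omega_j}\in C^\infty(\Omega_j)$, and $\Delta\psikg|_{\Omega_j}=\omega^2\psikg|_{\Omega_j}\in L^2(\Omega_j)$; hence the one-sided normal traces $(\hat n\cdot\nabla\psikg)|_{\Omega_j}$ exist, a priori only as elements of $H^{-1/2}_{\mathrm{loc}}(\Gamma)$, and Green's identity holds on bounded Lipschitz subdomains, with the boundary term read as a duality pairing. Since $\psikg\in H^1(\mathbb{R}^2)$ its Dirichlet trace on $\Gamma$ is single-valued, and the trace theorem, applied with a constant that is uniform along $\Gamma$ thanks to the asymptotic flatness \eqref{eq:beta}, gives $\psikg|_\Gamma\in L^2(\Gamma)$. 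The boundary condition \eqref{eq:KGjump} then shows that the jump $[[\hat n\cdot\nabla\psikg]]=-2im(\hat n\cdot\sigma)\sigma_3\,\psikg|_\Gamma$ is in fact in $L^2(\Gamma)$.

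Next I would derive the energy identity. Pairing \eqref{eq:KGPDE} with $\overline{\psikg}$, integrating over $B_R\cap\Omega_1$ and $B_R\cap\Omega_2$, integrating by parts and adding, the contributions on $\partial B_R\cap\Omega_j$ are bounded by $\big(\int_{\partial B_R}|\psikg|^2\big)^{1/2}\big(\int_{\partial B_R}|\nabla\psikg|^2\big)^{1/2}$; these vanish along a sequence $R_k\to\infty$ chosen so that $\int_{\partial B_{R_k}}(|\psikg|^2+|\nabla\psikg|^2)\,{\rm d}S\to 0$, which is possible because $|\psikg|^2+|\nabla\psikg|^2\in L^1(\mathbb{R}^2)$. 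The outward unit normals of $\Omega_1$ and $\Omega_2$ along $\Gamma$ are $\hat n$ and $-\hat n$, so the contributions along $\Gamma$ combine into $\int_\Gamma\overline{\psikg}\cdot[[\hat n\cdot\nabla\psikg]]\,{\rm d}S$; moving this term across and substituting \eqref{eq:KGjump} (absolutely convergent by the previous paragraph, since $(\hat n\cdot\sigma)\sigma_3$ is unitary) yields the displayed identity, the volume integrals converging by monotone and dominated convergence. Finally, using the Pauli anticommutation relations $\{\sigma_1,\sigma_3\}=\{\sigma_2,\sigma_3\}=0$ one gets $[(\hat n\cdot\sigma)\sigma_3]^{*}=\sigma_3(\hat n\cdot\sigma)=-(\hat n\cdot\sigma)\sigma_3$, so $\overline v\cdot(\hat n\cdot\sigma)\sigma_3 v\in i\mathbb{R}$ for every $v\in\mathbb{C}^2$ and the right-hand side of the identity is real. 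Taking imaginary parts gives $\Im(\omega^2)\int_{\mathbb{R}^2}|\psikg|^2\,{\rm d}x=0$, and since $\Im(\omega^2)=\Im(m^2-E^2)=-2\,\Re E\,\Im E\neq 0$ by the hypotheses $\Re E\neq 0$ and $\Im E\neq 0$, it follows that $\psikg\equiv 0$.

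The parts I expect to require the most care are the two limiting arguments in the derivation of the energy identity: the vanishing of the boundary term at spatial infinity, which is handled by the subsequence-of-radii device together with $\psikg\in H^1(\mathbb{R}^2)$, and the absolute convergence of the interface integral over the non-compact curve $\Gamma$, which rests on a trace inequality with a $\Gamma$-uniform constant, precisely where the asymptotic flatness of $\Gamma$ enters. One should also keep in mind that each one-sided normal trace lies only in $H^{-1/2}$ individually, while their difference lies in $L^2$ by virtue of \eqref{eq:KGjump}; the interface term should therefore first be written as $\langle(\hat n\cdot\nabla\psikg)|_{\Omega_2},\psikg\rangle-\langle(\hat n\cdot\nabla\psikg)|_{\Omega_1},\psikg\rangle$ against the common trace $\psikg|_\Gamma$ and only then recombined.
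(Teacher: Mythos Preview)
Your proposal is correct and follows essentially the same route as the paper's proof: derive the energy identity $\int_{\mathbb{R}^2}(|\nabla\psikg|^2+\omega^2|\psikg|^2)\,{\rm d}x$ plus the interface term equals zero, observe that the interface term is real (the paper writes $i(\hat n\cdot\sigma)\sigma_3=n_1\sigma_2-n_2\sigma_1$ as a real combination of Hermitian matrices, which is equivalent to your anti-Hermitian observation), and conclude from $\Im(\omega^2)\neq 0$. Your version is in fact more careful than the paper's about the regularity of the one-sided normal traces, the $R\to\infty$ limit, and the convergence of the unbounded interface integral, all of which the paper leaves implicit.
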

The proof is similar to the proof of \cite[Lemma 9]{bal2022integral}. We include it here for completeness.
\begin{proof}
    Multiplying both sides of \eqref{eq:KGPDE} by $\psikg^*$ and integrating by parts, we obtain
    \begin{align}\label{eq:Omega2}
        \int_{\Omega_2}( |\nabla u|^2 + \omega^2 |u|^2) {\rm d} x +\int_\Gamma u^* \Big(\frac{\partial u}{\partial n}\Big)_2 d\Sigma = 0,
    \end{align}
    where $\Big(\frac{\partial u}{\partial n}\Big)_2 (\gamma (t)) := \lim_{\Omega_2 \ni y \rightarrow \gamma(t)} \hat{n} (t) \cdot \nabla u(y)$ is the derivative of $u$ with respect to the unit normal vector $\hat{n}$ evaluated in $\Omega_2$, and $\Sigma$ is the surface measure on $\Gamma$.
    A parallel argument shows that
    \begin{align}\label{eq:Omega1}
        \int_{\Omega_1}( |\nabla u|^2 + \omega^2 |u|^2) {\rm d} x -\int_\Gamma u^* \Big(\frac{\partial u}{\partial n}\Big)_1 d\Sigma = 0,
    \end{align}
    with $\Big(\frac{\partial u}{\partial n}\Big)_1 (\gamma (t)) := \lim_{\Omega_1 \ni y \rightarrow \gamma(t)} \hat{n} (t) \cdot \nabla u(y)$ the same normal derivative, but evaluated in $\Omega_1$.
    Combining \eqref{eq:Omega2} and \eqref{eq:Omega1}, we obtain that
    \begin{align*}
        \int_{\mathbb{R}^2}( |\nabla u|^2 + \omega^2 |u|^2) {\rm d} x + \int_\Gamma u^* \Big[\Big[\frac{\partial u}{\partial n}\Big]\Big] d\Sigma=0,
    \end{align*}
    where $[[\frac{\partial u}{\partial n}]]$ denotes the jump in the normal derivative $u$ across $\Gamma$.
    Since $i(\hat{n} \cdot \sigma) \sigma_3 = n_1 \sigma_2 - n_2 \sigma_1$, \eqref{eq:KGjump} implies that
    \begin{align*}
        \Big[\Big[\frac{\partial u}{\partial n}\Big]\Big] (\gamma (t))= 2m (n_2 \sigma_1 - n_1 \sigma_2)u,
    \end{align*}
    hence
    \begin{align}\label{eq:preim}
        \int_{\mathbb{R}^2}( |\nabla u|^2 + \omega^2 |u|^2) {\rm d} x + 2m \int_\Gamma u^* (n_2 \sigma_1 - n_1 \sigma_2) u d\Sigma=0.
    \end{align}
    Now, observe that $\omega^2 = m^2- E^2 = m^2 - (\Re E)^2 + (\Im E)^2 - 2 i \Re E \Im E$ has nonzero imaginary part, by assumption on $E$.
    Since $\sigma_1$ and $\sigma_2$ are Hermitian, all terms in \eqref{eq:preim} not involving $\omega^2$ are real. 
    We conclude that
    \begin{align*}
        \int_{\mathbb{R}^2} |u|^2 {\rm d} x = 0,
    \end{align*}
    and the result is complete.
\end{proof}

We then show that the integral operator defining $u_s$ has a trivial kernel.
\begin{lemma}\label{lemma:mu0}
    Fix $m\ne 0$ and $E \in \mathbb{C}$ such that $0 < |\Re E| < |m|$ and $\Im E \ne 0$.
    Set $\omega \in \mathbb{C}$ such that $\omega^2 = m^2 - E^2$ and $\Re \omega > 0$.
    Define
    \begin{align}\label{eq:usproof}
         u_s (x) = \begin{cases}
        \int_\Gamma G_2 (x, \gamma (t)) \mu (t) {\rm d} t, & x \in \Omega_2\\
        \int_\Gamma G_1 (x, \gamma (t)) \mu (t) {\rm d} t, & x \in \Omega_1,
        \end{cases},
    \end{align}
    where
    \begin{align*}
    G_j (x,y) := \frac{1}{2\pi} (-i \partial_{x_1} \sigma_1 -i \partial_{x_2} \sigma_2 + (-1)^j m \sigma_3 + E) K_0 (\omega |x-y|)
    \end{align*}
    for $j=1,2$.
    If $u_s \equiv 0$ and $\mu \in L^2_\alpha \cap C^\infty (\mathbb{R}; \mathbb{C}^2)$ for some $\alpha > 0$, then $\mu \equiv 0$.
\end{lemma}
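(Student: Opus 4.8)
The plan is to reduce the statement to Lemma~\ref{lemma:kgh10}, applied not to $u_s$ itself (which is identically zero) but to the $\mathbb{C}^2$-valued single layer potential
\[
v(x) := S_\omega[\mu](x) = \frac{1}{2\pi}\int_{\mathbb{R}} K_0\big(\omega|x-\gamma(t')|\big)\,\mu(t')\,\mathrm{d}t'
\]
underlying the representation~\eqref{eq:usproof}. The key observation is that, by the definition~\eqref{eq:gf} of $G_j$, for $x\in\Omega_j$ one may pull the differential operator out of the integral — which is legitimate since $x$ lies at positive distance from $\Gamma$ — to obtain $u_s(x) = (-i\partial_{x_1}\sigma_1 - i\partial_{x_2}\sigma_2 + (-1)^j m\sigma_3 + E)\,v(x)$. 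Hence the hypothesis $u_s\equiv0$ is equivalent to $v$ lying in the kernel of this first-order Dirac-type operator in each subdomain,
\[
(-i\partial_{x_1}\sigma_1 - i\partial_{x_2}\sigma_2 + (-1)^j m\sigma_3 + E)\,v = 0 \quad\text{in }\Omega_j,\qquad j=1,2 .
\]
Since $\mu\in L^2_\alpha\cap C^\infty$ with $\alpha>0$, $\Gamma$ is smooth and asymptotically flat, and $K_0(\omega|\cdot|)$ decays exponentially (as $\Re\omega>0$), standard layer-potential estimates give $v\in H^1(\mathbb{R}^2;\mathbb{C}^2)$ with $v$ smooth up to $\Gamma$ from either side; moreover each scalar component of $v$ solves $(-\Delta+\omega^2)(\cdot)=0$ on $\mathbb{R}^2\setminus\Gamma$, since $\tfrac{1}{2\pi}K_0(\omega|\cdot|)$ is a fundamental solution of $-\Delta+\omega^2$.

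Next I would read off the two interface relations. On $\Gamma$ one has the pointwise identity $\sigma\cdot\nabla v = (\hat n\cdot\sigma)(\hat n\cdot\nabla v) + (\gamma'\cdot\sigma)(\gamma'\cdot\nabla v)$, where $\gamma'\cdot\nabla v$ is continuous across $\Gamma$ by~\eqref{eq:tangradS} and $[[\hat n\cdot\nabla v]] = -\mu$ by~\eqref{eq:jumpgradS}, while $v$, $\hat n\cdot\sigma$ and $\gamma'\cdot\sigma$ are all continuous there. Taking the trace of the displayed first-order equation from the $\Omega_2$ side and from the $\Omega_1$ side and subtracting, the tangential and energy terms cancel and the mass contributions combine, leaving
\[
-\,i(\hat n\cdot\sigma)\,[[\hat n\cdot\nabla v]] + 2m\,\sigma_3\,v = 0 \quad\text{on }\Gamma,
\]
that is, $\mu = 2im\,(\hat n\cdot\sigma)\sigma_3\,v|_\Gamma$ and, equivalently (using $[[\hat n\cdot\nabla v]]=-\mu$), $[[\hat n\cdot\nabla v]] = -2im\,(\hat n\cdot\sigma)\sigma_3\,v|_\Gamma$. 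Thus $v$ satisfies exactly the hypotheses of Lemma~\ref{lemma:kgh10}: it lies in $H^1(\mathbb{R}^2;\mathbb{C}^2)$, solves $(-\Delta+\omega^2)v = 0$ in $\Omega_1\cup\Omega_2$, and has the prescribed jump~\eqref{eq:KGjump} in its normal derivative. (When $m<0$ the same argument applies, since the proof of Lemma~\ref{lemma:kgh10} uses only that $\sigma_1,\sigma_2$ are Hermitian and that $\Im(\omega^2) = -2\,\Re E\,\Im E\ne0$, and is therefore insensitive to the sign of $m$.) Consequently $v\equiv0$, and then $\mu = 2im(\hat n\cdot\sigma)\sigma_3\,v|_\Gamma = 0$.

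The one genuinely substantive point is the first: recognizing that $u_s\equiv0$ forces the underlying potential $v = S_\omega[\mu]$, rather than $u_s$ itself, to solve a Dirac equation in each $\Omega_j$. This is what upgrades the bare continuity relation $[[u_s]] = 0$ — which alone only says $\mu$ lies in the kernel of a (generically non-invertible) boundary operator, and is far from enough — into the coupled Dirichlet/Neumann interface conditions needed to invoke the uniqueness result of Lemma~\ref{lemma:kgh10}. The remaining ingredients — differentiating under the integral sign, the global $H^1$ bound and up-to-boundary smoothness of $v$ deduced from $\mu\in L^2_\alpha\cap C^\infty$, and the normal–tangential splitting of $\sigma\cdot\nabla v$ on a curved interface — are routine.
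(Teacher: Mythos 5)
Your proof is correct and follows essentially the same route as the paper: both pass to the single-layer potential $v = S_\omega[\mu]$ (called $u$ in the paper's proof of Lemma~\ref{lemma:mu0}), show it lies in $H^1(\mathbb{R}^2;\mathbb{C}^2)$ and satisfies the homogeneous Klein--Gordon problem \eqref{eq:KGPDE}--\eqref{eq:KGjump}, invoke Lemma~\ref{lemma:kgh10} to get $v\equiv 0$, and then recover $\mu\equiv 0$ from the jump in the normal derivative. The only cosmetic difference is that you extract the interface condition by taking traces of the first-order identity $u_s = (-i\sigma\cdot\nabla + (-1)^j m\sigma_3 + E)v$ in each $\Omega_j$, whereas the paper appeals directly to the previously derived formula $[[u_s]](\gamma(t)) = 2m\sigma_3\cS_\omega[\mu](t) + i(\hat n\cdot\sigma)\mu(t)$ together with \eqref{eq:jumpgradS}; both rest on the same layer-potential jump calculus and lead to the same conclusion.
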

\begin{proof}
    Suppose $u_s \equiv 0$ and $\mu \in L^2_\alpha \cap C^\infty (\mathbb{R}; \mathbb{C}^2)$. 
    It follows from the regularity of $\mu$ and the derivation of the integral equation in Section \ref{subsec:naive} that
    \begin{align*}
        [[u_s]](\gamma (t))=
        2m \sigma_3 \cS_\omega [\mu](t) + i (\hat{n}(t) \cdot \sigma) \mu (t),
    \end{align*}
    hence
    \begin{align*}
        \mu (t) = 2m (-i\hat{n} (t) \cdot \sigma)\sigma_3 \cS_\omega [\mu](t).
    \end{align*}
    By the well-known property \eqref{eq:jumpgradS} of the single layer potential, it follows that $$\psikg (x) := \frac{1}{2\pi}\int_{\mathbb{R}} K_0 (\omega |x-\gamma (t')|) \mu (t') {\rm d} t'$$ satisfies \eqref{eq:KGjump}.
    Since $G_\omega (x,y) =\frac{1}{2\pi} K_0(\omega |x-y|)$ is the Green's function for the Klein-Gordon operator $-\Delta + \omega^2$, we know that $\psikg$ satisfies \eqref{eq:KGPDE} as well.
    If $\mu \in L^2_\alpha (\mathbb{R}; \mathbb{C}^2)$, then $\psikg \in H^1 (\mathbb{R}^2; \mathbb{C}^2)$, meaning that $\psikg$ satisfies all the assumptions of Lemma \ref{lemma:kgh10} and thus $\psikg \equiv 0$. Again appealing to \eqref{eq:jumpgradS}, we conclude that
    \begin{align*}
        0 = [[\hat{n} \cdot \nabla u]](\gamma (t))=-\mu (t)
    \end{align*}
    for all $t \in \mathbb{R}$, and the result is complete.
\end{proof}

We are now ready to prove 
that the integral equation \eqref{eq:solverho_pfs} admits a unique solution for almost all choices of $E$.
\begin{proof}[Proof of Theorem \ref{thm:invE}]
    Fix $\alpha > 0$ sufficiently small, and let $E \in \mathbb{C}$ such that
    $\Re E \in [-|m|+\eps, -\eps] \cup [\eps, |m|-\eps]$ and $\Im E > 0$. 
    We write \eqref{eq:solverho} as
    \begin{align*}
        (\flatL \cP + (\cL- \flatL)\cP)[\rho] (t) =i \cobmat^* \hat{n} \cdot \sigma [[u_i]] (\gamma (t)),
    \end{align*}
    which is equivalent to
    \begin{align}\label{eq:cMrho}
        (1+\cM) [\rho] (t) = i(\flatL \cP)^{-1}\cobmat^* \hat{n} \cdot \sigma [[u_i]] (\gamma (t)),
    \end{align}
    where $\cM := (\flatL \cP)^{-1} (\cL- \flatL)\cP$. We have shown with Lemma \ref{lemma:MHS} that $\cM$ is compact on $L^2_\alpha (\mathbb{R}; \mathbb{C}^2)$, therefore
    \eqref{eq:cMrho} has a unique solution if and only if $\cM$ does not have an eigenvalue of $-1$.

    Set \begin{align*}
    \cobmat := \frac{1}{\sqrt{2}}\begin{pmatrix}
        1 & in_1 + n_2\\
        -in_1 + n_2 & -1
    \end{pmatrix},
    \end{align*}
    and define $\mu := \cP \cobmat \rho$, where $\cP$ is defined by
    \begin{align*}
    \cP [\eta] (t) &= \eta (t) + \mmat \frac{m^2}E \int_{\Rm} e^{iE|t-t'|} \eta (t'){\rm d}t',
    \end{align*}
    with
    $\mmat$ as defined in~\eqref{eqn:mdef}.

    Suppose $(1+\cM) [\rho] (t) = 0$ for some $\rho \in L^2_\alpha (\mathbb{R}; \mathbb{C}^2)$.
    This means $\rho$ satisfies \eqref{eq:solverho} with $u_i \equiv 0$. Therefore, taking $f_1 = f_2 \equiv 0$ in Theorem \ref{thm:usol}, we obtain that $u_s$ defined by \eqref{eq:usproof} satisfies
    \begin{align}\label{eq:PDEproof}
    \begin{split}
        (-i \partial_{x_1} \sigma_1- i \partial_{x_2} \sigma_2 + m \sigma_3 - E) u_s (x) = 0, \qquad &x \in \Omega_2\\
        (-i \partial_{x_1} \sigma_1- i \partial_{x_2} \sigma_2 - m \sigma_3 - E) u_s (x) = 0, \qquad &x \in \Omega_1\\
        \lim_{y\to x} u_s(y) = \lim_{z \to x} u_s(z), \qquad & y \in \Omega_2, z \in \Omega_1, x \in \Gamma
    \end{split}
    \end{align}
    Since $\Im E > 0$, we know that $\cP$ is a bounded operator on $L^2_\alpha (\mathbb{R}; \mathbb{C}^2)$, and therefore $\mu \in L^2_\alpha (\mathbb{R}; \mathbb{C}^2)$.
    As in the proof of Lemma \ref{lemma:mu0}, this implies that $u_s \in H^1 (\mathbb{R}^2; \mathbb{C}^2)$.
    Applying the operator $-i \partial_{x_1} \sigma_1- i \partial_{x_2} \sigma_2 + m \sigma_3 + E$ to 
    the first line of \eqref{eq:PDEproof} and $-i \partial_{x_1} \sigma_1- i \partial_{x_2} \sigma_2 + m \sigma_3 + E$ to the second, it follows that 
    \begin{align*}
        (-\Delta + \omega^2) u_s (x) = 0, \qquad &x \in \Omega_1 \cup \Omega_2\\
        [[\hat{n} \cdot \nabla u_s]] (\gamma (t)) = -2im (\hat{n} \cdot \sigma) \sigma_3 u_s (\gamma(t)), \qquad &t \in \mathbb{R}.
    \end{align*}
    Therefore, $u_s \equiv 0$ by Lemma \ref{lemma:kgh10}. We have thus shown that $u_s$ and $\mu$ satisfy the assumptions of Lemma \ref{lemma:mu0}, meaning that $\mu \equiv 0$. 
    In Section \ref{subsec:flat}, we showed that
    \begin{align*}
        \mathcal{F}_{r \rightarrow \xi} \{\cP [\eta] (t)\} = 1 - \mmat \frac{2i m^2}{\xi^2-E^2},
    \end{align*}
    which combined with the definition $\mu = \cP \cobmat \rho$ proves that $\cobmat \rho \equiv 0$. The fact that $\rho \equiv 0$ then follows from the invertibility of $\cobmat$.

    We have now shown that if $(1+\cM) [\rho] (t) = 0$ for some $\rho \in L^2_\alpha (\mathbb{R}; \mathbb{C}^2)$, then $\rho \equiv 0$. This means that $\cM$ does not have an eigenvalue of $-1$ when $\Re E \in [-|m|+\eps, -\eps] \cup [\eps, |m|-\eps]$ and $\Im E > 0$. 
    Since all integral operators here are holomorphic in $E \in [\eps, |m|-\eps] \times (-\alpha/2, \infty) \subset \mathbb{C}$ and $E \in [-|m|+\eps, -\eps] \times (-\alpha/2, \infty)$, it follows from analytic perturbation theory \cite[Theorem VII.1.9]{kato2013perturbation} that $\cM$ has an eigenvalue of $-1$ for at most a finite number of values of $E \in [-|m|+\eps, -\eps] \cup [\eps, |m|-\eps] \subset \mathbb{R}$. Therefore,
    the integral equations \eqref{eq:cMrho} and hence \eqref{eq:solverho} have a unique solution $\rho \in L^2_\alpha (\mathbb{R}; \mathbb{C}^2)$ for all $E$ except those exceptional values, and the result is complete.
\end{proof}

We now move on to the two-interface problem, proving that it, too, is well posed for almost all choices of parameters.

\begin{proof}[Proof of Theorem \ref{thm:invL2}]
    By 
    Lemmas \ref{lemma:invLP} and \ref{lemma:MHS}, the operators $\cL_j \cP_j$ have bounded inverse (on $L^2_\alpha$) whenever $\lambda$ is sufficiently large, with $(\cL_j \cP_j)^{-1} = (1 + \cM_j)^{-1} (\cL_{j,0} \cP_j)^{-1}$ bounded uniformly in $\lambda$. 
    Here $\cL_{j,0}$ is the operator $\cL_j$ in the case of a flat interface, meaning that
    $\cL_{j,0} := 1+(-1)^j 2m\sigma_3 \cSflat$ with
    \begin{align*}
        \cSflat [\mu] (t) := \frac{1}{2\pi} \int_{\mathbb{R}} K_0 (\omega |t-t'|) \mu (t') {\rm d} t'
    \end{align*}
    as before,
    and $\cM_j := (\cL_{j,0} \cP_j)^{-1}(\cL_j -\cL_{j,0})\cP_j.$
    This means \eqref{eq:solverho2} reduces to
    \begin{align}\label{eq:red}
        \Bigg( 1 + 
        \begin{pmatrix}
            0 & (\cL_1 \cP_1)^{-1} \cK_2 \cP_2\\
            (\cL_2 \cP_2)^{-1} \cK_1 \cP_1 & 0
        \end{pmatrix}
        \Bigg)
        \begin{pmatrix}
            \rho_1\\
            \rho_2
        \end{pmatrix}
        =
        \begin{pmatrix}
        (\cL_1 \cP_1)^{-1} \cobmat_1^*(i \hat{n}_1 \cdot \sigma) [[u_i]]_1\\
        (\cL_2 \cP_2)^{-1} \cobmat_2^*(i \hat{n}_2 \cdot \sigma) [[u_i]]_2
    \end{pmatrix}.
    \end{align}
    Since $\Gamma_1$ and $\Gamma_2$ do not intersect, the assumption \eqref{eq:gammainfty2} implies that 
    $|G_1 (\gamma_1 (t), \gamma_2 (t'))| \le C e^{-\dc \lambda (|t| + |t'|)}$ uniformly in $t \in \Gamma_1$ and $t' \in \Gamma_2$, for some positive constants $C$ and $\dc$.
    Therefore, the operators $(\cL_i \cP_i)^{-1} \cK_j \cP_j$ 
    are bounded on $L^2_\alpha$ with norm that decays exponentially in $\lambda$. We conclude that when $\lambda$ is sufficiently large, \eqref{eq:red} admits a unique solution $(\rho_1, \rho_2) \in L^2_\alpha$, and thus so does \eqref{eq:solverho2}. Since all operators are holomorphic in $\lambda \in [1, \infty) \times (-\delta, \delta)$ for $\delta > 0$ sufficiently small, the result follows from Kato perturbation theory \cite[Theorem VII.1.9]{kato2013perturbation} as in Theorem \ref{thm:invL}.
\end{proof}

Next, we prove that the solution of our two-interface boundary integral equation \eqref{eq:solverho2} generates a solution of the corresponding Dirac equation (\ref{eq:PDE2}, \ref{eq:out2}).
\begin{proof}[Proof of Theorem \ref{thm:usol2}]
    This argument is almost identical to the proof of Theorem \ref{thm:usol} above.
    As before, the first three lines of \eqref{eq:PDE2} and the second line of \eqref{eq:out2} follow from dominated Lebesgue, while the continuity conditions (last two lines of \eqref{eq:PDE2}) are an immediate consequence of the derivation of our integral equation \eqref{eq:solverho2} in Section \ref{sec:two} and regularity of $(\rho_1, \rho_2)$.
    To verify the radiation conditions specified by the first line of \eqref{eq:out2}, observe that assumption \eqref{eq:gammainfty2} and the rapid decay of $G_1$ at infinity imply that $u_s$ (defined by \eqref{eq:us2}) and its derivatives are well approximated by the function
    \begin{align*}
    u_{s,\infty,2}(x) :=
        \begin{cases}
        \int_\mathbb{R} G_2 (x, \gamma_2 (t')) \mu_2 (t') {\rm d} t', & x \in \Omega_2\\
        \int_\mathbb{R} G_1 (x, \gamma_2 (t')) \mu_2 (t') {\rm d} t', & x \in \Omega_1
        \end{cases}
    \end{align*}
    when $x = \gamma_2 (t) + r \hat{n}_2(t)$ and $|t|$ large, while
    \begin{align*}
    u_{s,\infty,1}(x) :=
        \begin{cases}
        \int_\mathbb{R} G_1 (x, \gamma_1 (t')) \mu_1 (t') {\rm d} t', & x \in \Omega_1\\
        \int_\mathbb{R} G_2 (x, \gamma_1 (t')) \mu_1 (t') {\rm d} t', & x \in \Omega_0
        \end{cases}
    \end{align*}
    approximates $u_s$ when $|t|$ is large and
    $x = \gamma_1 (t) + r \hat{n}_1(t)$.
    Since the $u_{s,\infty,j}$ have the form of their one-interface analogues \eqref{eq:us}, the proof of the first line of \eqref{eq:out0} in Theorem \ref{thm:usol} is easily applied to 
    verify the first line of \eqref{eq:out2}. This completes the result.
\end{proof}

We will conclude this section with the proof of Theorem \ref{thm:invE2},
which uses the following two-interface analogue of Lemma \ref{lemma:mu0}.
\begin{lemma}\label{lemma:mu0_2}
    Fix $m\ne 0$ and $E = E_0 + i\lambda$, where the real numbers $E_0$ and $\lambda$ satisfy $0 < |E_0| < |m|$ and $\lambda > 0$.
    Set $\omega \in \mathbb{C}$ such that $\omega^2 = m^2 - E^2$ and $\Re \omega > 0$.
    Define
    \begin{align}\label{eq:usproof2}
         u_s (x) = \begin{cases}
        \int_\mathbb{R} G_2 (x, \gamma_2 (t)) \mu_2 (t) {\rm d} t, & x \in \Omega_2\\
        \int_\mathbb{R} G_1 (x, \gamma_2 (t)) \mu_2 (t) {\rm d} t
        + \int_\mathbb{R} G_1 (x, \gamma_1 (t)) \mu_1 (t) {\rm d} t, & x \in \Omega_1\\
        \int_\mathbb{R} G_2 (x, \gamma_1 (t)) \mu_1 (t) {\rm d} t, & x \in \Omega_0
    \end{cases},
    \end{align}
    where
    \begin{align*}
    G_j (x,y) := \frac{1}{2\pi} (-i \partial_{x_1} \sigma_1 -i \partial_{x_2} \sigma_2 + (-1)^j m \sigma_3 + E) K_0 (\omega |x-y|)
    \end{align*}
    for $j=1,2$. 
    Then for any $\lambda$ sufficiently large, the unique function $(\mu_1, \mu_2) \in L^2 \cap C^\infty (\mathbb{R}; \mathbb{C}^4)$ for which $u_s \equiv 0$ is $(\mu_1, \mu_2) \equiv 0$.
\end{lemma}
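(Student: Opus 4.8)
The plan is to mimic the strategy of Lemma~\ref{lemma:mu0}, but to exploit the separation of the two interfaces and the largeness of $\lambda$ to \emph{decouple} the system, rather than invoking a Klein--Gordon uniqueness statement across both interfaces at once.

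First I would extract jump relations. Since $\mu_1,\mu_2\in C^\infty$, the layer potentials defining $u_s$ have one-sided limits on $\Gamma_1$ and $\Gamma_2$ in the classical sense, so $u_s\equiv 0$ forces $[[u_s]]_1=[[u_s]]_2=0$. Substituting into the jump formulas derived in Section~\ref{sec:two},
\begin{align*}
(2m\sigma_3 \cS_\omega + i\hat{n}_2\cdot\sigma)\mu_2 &= \bar{\cK}_1[\mu_1],\\
(-2m\sigma_3 \cS_\omega + i\hat{n}_1\cdot\sigma)\mu_1 &= -\bar{\cK}_2[\mu_2],
\end{align*}
where $\cS_\omega$ denotes the single-layer operator on the pertinent interface and $\bar{\cK}_j$ is given by \eqref{eq:bar_cK}. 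Multiplying the first relation by $-i\hat{n}_2\cdot\sigma$, the second by $-i\hat{n}_1\cdot\sigma$, and using $(\hat{n}_j\cdot\sigma)^2=1$, this becomes
\begin{align*}
(1-2im(\hat{n}_2\cdot\sigma)\sigma_3 \cS_\omega)\mu_2 &= -i(\hat{n}_2\cdot\sigma)\bar{\cK}_1[\mu_1],\\
(1+2im(\hat{n}_1\cdot\sigma)\sigma_3 \cS_\omega)\mu_1 &= i(\hat{n}_1\cdot\sigma)\bar{\cK}_2[\mu_2].
\end{align*}

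Next I would record two operator-norm bounds valid for $\lambda$ large. Because $m$ is fixed while $\omega=\omega(\lambda)$ satisfies $\Re\omega\to\infty$ (indeed $\omega^2=m^2-E_0^2+\lambda^2-2iE_0\lambda$, so $\Re\omega\sim\lambda$ and $\arg\omega\to 0$), a Schur-test estimate based on $|\gamma_j(t)-\gamma_j(t')|\ge c|t-t'|$ and the decay of $K_0$ along rays near the positive real axis gives $\|\cS_\omega\|_{L^2(\Gamma_j)}\le C/\Re\omega=O(1/\lambda)$; hence $1\mp 2im(\hat{n}_j\cdot\sigma)\sigma_3\cS_\omega$ is invertible on $L^2(\mathbb{R};\mathbb{C}^2)$ for $\lambda$ large, with inverse bounded uniformly in $\lambda$, by a Neumann series. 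Second, exactly as in the proof of Theorem~\ref{thm:invL2}, assumption \eqref{eq:gammainfty2} forces $|\gamma_i(t)-\gamma_j(t')|\to\infty$ jointly as $|t|+|t'|\to\infty$ (and it is bounded below by $\mathrm{dist}(\Gamma_1,\Gamma_2)>0$), so the exponential decay of $G_1$ yields $\|\bar{\cK}_j\|_{L^2\to L^2}\to 0$ as $\lambda\to\infty$.

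Finally I would close the argument by elimination: solving the first displayed equation for $\mu_2$ and substituting into the second gives
\begin{align*}
\mu_1 = (1+2im(\hat{n}_1\cdot\sigma)\sigma_3\cS_\omega)^{-1}\, i(\hat{n}_1\cdot\sigma)\,\bar{\cK}_2\,(1-2im(\hat{n}_2\cdot\sigma)\sigma_3\cS_\omega)^{-1}\,(-i)(\hat{n}_2\cdot\sigma)\,\bar{\cK}_1[\mu_1],
\end{align*}
and the operator on the right has norm at most $C\|\bar{\cK}_1\|\,\|\bar{\cK}_2\|$, which is $<1$ once $\lambda$ is large enough. Hence $\mu_1\equiv 0$, and then $\mu_2\equiv 0$ from the first equation. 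The main obstacle is not conceptual but technical: establishing $\|\cS_\omega\|_{L^2}=O(1/\lambda)$ with genuinely complex $\omega$ (one must control $|K_0(\omega\,|\gamma_j(t)-\gamma_j(t')|)|$ uniformly, which is where $\arg\omega\to 0$ is used) and checking that the jump formulas of Section~\ref{sec:two}, derived there for real $E$, carry over verbatim to $E$ with positive imaginary part --- which they do, since only properties of $K_0$ with $\Re\omega>0$ entered. An alternative route avoiding the largeness of $\lambda$ would combine the Fredholm property of $1\mp 2im(\hat{n}_j\cdot\sigma)\sigma_3\cS_\omega$ with the injectivity furnished by Lemma~\ref{lemma:kgh10}, but since only large $\lambda$ is needed downstream, the contraction argument above is the most economical.
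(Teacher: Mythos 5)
Your proposal is correct and follows essentially the same strategy as the paper: extract the jump relations from $u_s\equiv 0$, multiply by $-i\hat{n}_j\cdot\sigma$, and show the resulting operator is a contraction for $\lambda$ large by exploiting $\Re\omega\gtrsim\lambda$ together with $|\gamma_j(t)-\gamma_j(t')|\ge c|t-t'|$ and the positive separation of $\Gamma_1,\Gamma_2$. The only difference is in packaging --- you first invert the diagonal blocks $1\pm 2im(\hat{n}_j\cdot\sigma)\sigma_3\cS_\omega$ by a Neumann series using $\|\cS_\omega\|=O(1/\lambda)$ and then close via block elimination with the exponentially small $\bar{\cK}_j$, whereas the paper bounds the full operator $\opn$ by $C/\lambda$ in a single Young-type kernel estimate --- but the underlying smallness mechanism is the same.
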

\begin{proof}
    Suppose $u_s \equiv 0$ and $(\mu_1, \mu_2) \in L^2 \cap C^\infty (\mathbb{R}; \mathbb{C}^4)$. Using that $$\lim_{\Omega_1 \ni x \to \gamma_2 (t)}u_s (x) = \lim_{\Omega_2 \ni x \to \gamma_2 (t)} u_s (x), \qquad \lim_{\Omega_0 \ni x \to \gamma_1 (t)}u_s (x) = \lim_{\Omega_1 \ni x \to \gamma_1 (t)} u_s (x),$$ it follows from the regularity of $(\mu_1,\mu_2)$ that
    \begin{align*}
        (i \hat{n}_2 (t) \cdot \sigma) \mu_2 (t) + 2m \sigma_3 \cS_\omega [\mu_2] (t) - \int_\mathbb{R} G_1 (\gamma_2 (t), \gamma_1 (t')) \mu_1 (t') {\rm d} t' &= 0,\\
        (i \hat{n}_1 (t) \cdot \sigma) \mu_1 (t) - 2m \sigma_3 \cS_\omega [\mu_1] (t) + \int_\mathbb{R} G_1 (\gamma_1 (t), \gamma_2 (t')) \mu_2 (t') {\rm d} t' &= 0.
    \end{align*}
    Multiplying the top (resp. bottom) line by $-i \hat{n}_2 (t) \cdot \sigma$ (resp. $-i \hat{n}_1 (t) \cdot \sigma$), we obtain
    \begin{align*}
    (I + \opn)
    \begin{pmatrix}
                \mu_1 \\
                \mu_2
            \end{pmatrix}
            = \begin{pmatrix}
                0 \\
                0
            \end{pmatrix},
            \qquad 
            \opn := \begin{pmatrix}
            (i \hat{n}_1 \cdot \sigma) 2m \sigma_3 \cS_\omega & -(i \hat{n}_1 \cdot \sigma) \bar{\cK}_2\\
            (i \hat{n}_1 \cdot \sigma) \bar{\cK}_1 & -(i \hat{n}_2 \cdot \sigma) 2m \sigma_3 \cS_\omega
        \end{pmatrix}
    \end{align*}
    where the $\bar{\cK}_j$ are defined by \eqref{eq:bar_cK}. Thus it suffices to show that the operator $\opn : L^2 \to L^2$ has norm strictly less than $1$ whenever $\lambda>0$ is sufficiently large.

    It follows from the definition $\omega^2 = m^2 - E_0^2 + \lambda^2 - 2i\lambda E_0$ and $\Re \omega > 0$ that $\Re \omega > \lambda$. Thus,
    each $2\times 2$ block of $\opn$ is an integral operator with a kernel $k(s,t)$ bounded by $C \log (\lambda^{-1} |t-s|^{-1})$ whenever $\lambda |t-s| < 1/2$ and $C e^{-\eta \lambda |t-s|}$ whenever $\lambda |t-s| \ge 1/2$, where $C$ and $\eta$ are fixed positive constants that are independent of $\lambda$. 
    We can therefore use the identity $\norm{g_1 * g_2}_{L^2} \le \norm{g_1}_{L^1}\norm{g_2}_{L^2}$ with $g_1 (t) = (\left|\log(\lambda|t|)\right|+1) e^{-\eta\lambda |t|}$ and $g_2 = \mu$ to show that $\norm{\int_\mathbb{R} k(\cdot, s) \mu(s) {\rm d} s}_{L^2} \le C \lambda^{-1} \norm{\mu}_{L^2}$, where $C>0$ is independent of $\lambda$. 
    We conclude that $\norm{\opn}_{L^2 \to L^2} \le C \lambda^{-1}$ and the result is complete.
\end{proof}

We are now ready to prove our second result that establishes the general invertibility of the two-interface integral equation \eqref{eq:solverho2}.
\begin{proof}[Proof of Theorem \ref{thm:invE2}]
    The strategy is the same as for Theorem \ref{thm:invE}.
    Let $\alpha > 0$ be sufficiently small.
    By Kato holomorphic perturbation theory \cite{kato2013perturbation}, it suffices to show that the integral equation
    \begin{align}\label{eq:solverho2_appendix}
    \begin{pmatrix}
        \cL_1 & \cK_2\\
        \cK_1 & \cL_2
    \end{pmatrix}
    \begin{pmatrix}
        \cP_1 & 0\\
        0 & \cP_2
    \end{pmatrix}
    \begin{pmatrix}
        \rho_1\\
        \rho_2
    \end{pmatrix} =
    \begin{pmatrix}
        0\\
        0
    \end{pmatrix},
    \qquad
    \begin{pmatrix}
        \rho_1\\
        \rho_2
    \end{pmatrix} \in L^2_\alpha (\mathbb{R}; \mathbb{C}^4)
    \end{align}
    admits only the trivial solution $(\rho_1, \rho_2) \equiv 0$ whenever 
    $E = m/2 + i\lambda$ with $\lambda > 0$ sufficiently large.
    To this end, assume that $(\rho_1, \rho_2)$ satisfies \eqref{eq:solverho2_appendix} and set $\mu_j := \cobmat_j \cP_j \rho_j$. Then by Theorem \ref{thm:usol2},
    \begin{align*}
         u_s (x) = \begin{cases}
        \int_\mathbb{R} G_2 (x, \gamma_2 (t)) \mu_2 (t) {\rm d} t, & x \in \Omega_2\\
        \int_\mathbb{R} G_1 (x, \gamma_2 (t)) \mu_2 (t) {\rm d} t
        + \int_\mathbb{R} G_1 (x, \gamma_1 (t)) \mu_1 (t) {\rm d} t, & x \in \Omega_1\\
        \int_\mathbb{R} G_2 (x, \gamma_1 (t)) \mu_1 (t) {\rm d} t, & x \in \Omega_0
    \end{cases}
    \end{align*}
    satisfies
    \begin{align}
    \begin{split}
        (-i \partial_{x_1} \sigma_1- i \partial_{x_2} \sigma_2 + m \sigma_3 - E) u_s (x) = 0, \qquad &x \in \Omega_0 \cup \Omega_2\\
        (-i \partial_{x_1} \sigma_1- i \partial_{x_2} \sigma_2 - m \sigma_3 - E) u_s (x) = 0, \qquad &x \in \Omega_1\\
        \lim_{y\to x} u_s(y) = \lim_{z \to x} u_s(z), \qquad & y \in \Omega_2, z \in \Omega_1, x \in \Gamma_2\\
        \lim_{y\to x} u_s(y) = \lim_{z \to x} u_s(z), \qquad & y \in \Omega_1, z \in \Omega_0, x \in \Gamma_1
    \end{split}
    \end{align}
    Moreover, since 
    $\lambda > 0$ and $(\rho_1, \rho_2) \in L^2_\alpha$, it follows that $(\mu_1, \mu_2) \in L^2_{\eta}$ for some $\eta>0$. Therefore, $u_s \in H^1 (\mathbb{R}^2;\mathbb{C}^2)$ by the decay of the $G_j$.
    As in the proof of Lemma \ref{lemma:kgh10}, a standard
    integration by parts argument then
    implies that $u_s \equiv 0$.
    We then conclude from Lemma \ref{lemma:mu0_2} that $(\mu_1, \mu_2) \equiv 0$.
    Following the proof of Theorem \ref{thm:invE}, the Fourier representation of the $\cP_j$ and invertibility of the $\cobmat_j$ imply that $(\rho_1, \rho_2) \equiv 0$. This completes the result.
\end{proof}

\subsection*{Acknowledgments} 
G. Bal's work was supported in part by NSF grants DMS-1908736 and EFMA-1641100.

 \bibliographystyle{elsarticle-num} 
 \bibliography{refs}





\end{document}